\newcommand{\bgcolour}{white}
\newcommand{\mcc}{\mathcal{C}}
\newcommand{\mce}{\mathcal{E}}
\newcommand{\mcr}{\mathcal{R}} 
\NewDocumentCommand\proj{mo}{%
  \pi_{#1}\IfValueT{#2}{\left(#2\right)}
}
\newcommand{\lub}{\sqcup}
\newcommand{\ljoin}{\lub}
\newcommand{\glb}{\sqcap}
\newcommand{\lmeet}{\glb}
\newcommand{\tuple}[2]{#1^{#2}}
\newsavebox{\@brx}
\newcommand{\llangle}[1][]{\savebox{\@brx}{\(\m@th{#1\langle}\)}%
  \mathopen{\copy\@brx\kern-0.5\wd\@brx\usebox{\@brx}}}
\newcommand{\rrangle}[1][]{\savebox{\@brx}{\(\m@th{#1\rangle}\)}%
  \mathclose{\copy\@brx\kern-0.5\wd\@brx\usebox{\@brx}}}
\NewDocumentCommand\inl{o}{%
    \mathsf{inl}\IfValueT{#1}{(#1)}
}
\NewDocumentCommand\inr{o}{%
    \mathsf{inr}\IfValueT{#1}{(#1)}
}
\newcommand{\nat}{\mathbb{N}}
\newcommand{\listvar}[1]{\overline{#1}}
\newcommand{\vertices}[1]{#1_\star}
\newcommand{\edges}[3]{#1_{#2,#3}}
\newcommand{\sources}[1]{\mathsf{s}_{#1}}
\newcommand{\targets}[1]{\mathsf{t}_{#1}}
\newcommand{\hyp}{\mathbf{Hyp}}
\newcommand{\hypsignature}[1]{\llbracket#1\rrbracket}
\newcommand{\hypsigma}{\hyp_{\signature}}
\newcommand{\hypsigmac}{\hyp_{C,\signature}}
\newcommand{\cspdhyp}{\csp[D]{\hypsigma}}
\newcommand{\cspdchyp}{\csp[D_C]{\hypsigmac}}
\newcommand{\macspdhyp}{\mathsf{MA}\cspdhyp}
\newcommand{\pmcspdhyp}{\mathsf{PM}\csp[D]{\hypsigma}}
\newcommand{\plmcspdhyp}{\mathsf{PLM}\csp[D]{\hypsigma}}
\NewDocumentCommand\termtohyp{om}{%
    \llbracket{\IfValueTF{#1}{#1}{-}}\rrbracket_{#2}
}
\NewDocumentCommand\termtohypsigma{o}{%
    \llbracket{\IfValueTF{#1}{#1}{-}}\rrbracket_{\Sigma}
}
\NewDocumentCommand\termtohypsigmac{o}{%
    \llbracket{\IfValueTF{#1}{#1}{-}}\rrbracket_{\Sigma,C}
}
\NewDocumentCommand\frobtohyp{om}{%
    \left[{\IfValueTF{#1}{#1}{-}}\right]_{#2}
}
\NewDocumentCommand\frobtohypsigma{o}{%
    \left[{\IfValueTF{#1}{#1}{-}}\right]_{\Sigma}
}
\NewDocumentCommand\frobtohypsigmac{o}{%
    \left[{\IfValueTF{#1}{#1}{-}}\right]_{\Sigma,C}
}
\NewDocumentCommand\tracedtosymandfrob{om}{
    \left\lfloor\IfValueTF{#1}{#1}{-}\right\rfloor^\mathbf{T}_{#2}
}
\NewDocumentCommand\tracedtosymandfrobsigma{o}{
    \left\lfloor\IfValueTF{#1}{#1}{-}\right\rfloor^\mathbf{T}_{\Sigma}
}
\NewDocumentCommand\tracedtosymandfrobsigmac{o}{
    \left\lfloor\IfValueTF{#1}{#1}{-}\right\rfloor^\mathbf{T}_{\Sigma, C}
}
\NewDocumentCommand\comonoidtofrob{o}{
    \left\lfloor\IfValueTF{#1}{#1}{-}\right\rfloor^\mathbf{C}
}
\NewDocumentCommand\tracedandcomonoidtofrob{om}{
    \left\lfloor\IfValueTF{#1}{#1}{-}\right\rfloor_{#2}
}
\NewDocumentCommand\tracedandcomonoidtofrobsigma{o}{
    \left\lfloor\IfValueTF{#1}{#1}{-}\right\rfloor_{\Sigma}
}
\NewDocumentCommand\tracedandcomonoidtofrobsigmac{o}{
    \left\lfloor\IfValueTF{#1}{#1}{-}\right\rfloor_{\Sigma, \mcc}
}
\NewDocumentCommand\comonoidtohyp{om}{%
    \IfValueTF{#1}{\frobtohyp[#1]{#2}}{\frobtohyp{#2}}^\mathbf{C}
}
\NewDocumentCommand\comonoidtohypsigma{o}{%
    \IfValueTF{#1}{\frobtohypsigma[#1]}{\frobtohypsigma}^\mathbf{C}
}
\NewDocumentCommand\termandfrobtohyp{om}{%
    \llangle{\IfValueTF{#1}{#1}{-}}\rrangle_{#2}
}
\NewDocumentCommand\termandfrobtohypsigma{o}{%
    \llangle{\IfValueTF{#1}{#1}{-}}\rrangle_{\Sigma}
}
\NewDocumentCommand\termandfrobtohypsigmac{o}{%
    \llangle{\IfValueTF{#1}{#1}{-}}\rrangle_{\Sigma,\mcc}
}
\NewDocumentCommand\foldinterfaces{o}{%
    \ulcorner{\IfValueTF{#1}{#1}{-}}\urcorner
}
\NewDocumentCommand\rewrite{o}{%
    \Rightarrow\IfValueT{#1}{_{#1}}
}
\NewDocumentCommand\grewrite{o}{%
    \rightsquigarrow\IfValueT{#1}{_{#1}}
}
\NewDocumentCommand\trgrewrite{o}{%
    \rightsquigarrow\IfValueT{#1}{_{#1}}
}
\newcommand{\rrule}[2]{\langle #1,#2 \rangle}
\newcommand{\morph}[3]{{{#1} \colon {#2 \to #3}}}
\newcommand{\seq}{{\ \fatsemi\ }}
\NewDocumentCommand\id{o}{%
    \mathsf{id}\IfValueT{#1}{_{#1}}%
}
\newcommand{\tensor}{\otimes}
\newcommand{\swap}[2]{{\sigma_{#1,#2}}}
\DeclareMathOperator{\tr}{Tr}
\NewDocumentCommand\trace{moom}{%
    {\tr^{#1}\IfValueT{#2}{_{#2, #3}}\left(#4\right)}
}
\newcommand{\dual}[1]{#1^*}
\NewDocumentCommand\ccunit{o}{
    \eta\IfValueT{#1}{_{#1}}
}
\NewDocumentCommand\cccounit{o}{
    \varepsilon\IfValueT{#1}{_{#1}}
}
\NewDocumentCommand\ccopy{o}{
    {\Delta\IfValueT{#1}{_{#1}}}
}
\NewDocumentCommand\cdel{o}{
    {\scalebox{1.3}{\(\diamond\)}\IfValueT{#1}{_{#1}}}
}
\NewDocumentCommand\cmerge{o}{
    {\nabla\IfValueT{#1}{_{#1}}}
}
\NewDocumentCommand\cinit{o}{
    {\square\IfValueT{#1}{_{#1}}}
}
\NewDocumentCommand\fix{o}{
    \mathsf{fix}\IfValueT{#1}{\left(#1\right)}
}
\newcommand{\signature}{\Sigma}
\NewDocumentCommand\generators{o}{%
    \Sigma\IfValueT{#1}{_{#1}}
}
\NewDocumentCommand\equations{o}{%
    \mathcal{E}\IfValueT{#1}{_{#1}}
}
\NewDocumentCommand\dom{o}{%
    \mathsf{dom}\IfValueT{#1}{(#1)}
}
\NewDocumentCommand\domw{o}{%
    \overline{\mathsf{dom}}\IfValueT{#1}{(#1)}
}
\NewDocumentCommand\cod{o}{%
    \mathsf{cod}\IfValueT{#1}{(#1)}
}
\NewDocumentCommand\codw{o}{%
    \overline{\mathsf{cod}}\IfValueT{#1}{(#1)}
}
\NewDocumentCommand\eqaxioms{o}{%
    \IfValueTF{#1}{\stackrel{#1}{=}}{=}
}
\NewDocumentCommand\reduction{o}{%
    \IfValueTF{#1}{\stackrel{#1}{\rightsquigarrow}}{\rightsquigarrow}
}
\NewDocumentCommand\spann{momom}{%
    {#1 \IfValueTF{#2}{\xleftarrow{#2}}{\leftarrow} #3 \IfValueTF{#4}{\xrightarrow{#4}}{\rightarrow} #5}
}
\NewDocumentCommand\cospan{momom}{%
    {#1 \IfValueTF{#2}{\xrightarrow{#2}}{\rightarrow} #3 \IfValueTF{#4}{\xleftarrow{#4}}{\leftarrow} #5}
}
\NewDocumentCommand\csp{om}{%
    \mathsf{Csp}\IfValueT{#1}{_{#1}}(#2)
}
\newcommand{\set}{\mathbf{Set}}
\newcommand{\finset}{\mathbf{FinSet}}
\NewDocumentCommand\frob{o}{%
    \mathbf{Frob}\IfValueT{#1}{_{#1}}%
}
\newcommand{\cmon}{\mathbf{CMon}}
\newcommand{\comon}{\mathbf{Comon}}
\newcommand{\ccomon}{\mathbf{CComon}}
\newcommand{\smc}[1]{\mathbf{S}_{#1}}
\newcommand{\smcsigma}{\smc{\Sigma}}
\newcommand{\stmc}[1]{\mathbf{T}_{#1}}
\newcommand{\stmcsigma}{\stmc{\Sigma}}
\newcommand{\hypc}[1]{\mathbf{H}_{#1}}
\newcommand{\hypcsigma}{\hypc{\Sigma}}
\NewDocumentCommand\idf{o}{%
    \mathsf{Id}\IfValueT{#1}{_{#1}}%
}
\newcommand{\ifdir}{(\Leftarrow)}
\newcommand{\onlyifdir}{(\Rightarrow)}
\newcommand{\values}{\mathbf{V}}
\newcommand{\ccirc}[1]{\mathbf{CCirc}_{#1}}
\newcommand{\scirc}[1]{\mathbf{SCirc}_{#1}}
\newcommand{\scircsigma}{\scirc{\Sigma}}
\newcommand{\interpretation}{\mathcal{I}}
\NewDocumentCommand\valueinterpretation{o}{%
    \left\llbracket\IfValueTF{#1}{#1}{-}\right\rrbracket^{\mathbf{V}}
}
\NewDocumentCommand\gateinterpretation{o}{%
    \left\llbracket\IfValueTF{#1}{#1}{-}\right\rrbracket
}
\NewDocumentCommand\circuittofunc{om}{%
    \left\llbracket\IfValueTF{#1}{#1}{-}\right\rrbracket^{\mathbf{C}}_{#2}
}
\NewDocumentCommand\circuittostream{om}{%
    \left\llbracket\IfValueTF{#1}{#1}{-}\right\rrbracket^{\mathbf{S}}_{#2}
}
\NewDocumentCommand\circuittostreami{o}{%
    \circuittostream[\IfValueTF{#1}{#1}{-}]{\interpretation}%
}
\NewDocumentCommand\circuittofunci{o}{%
    \circuittofunc[\IfValueTF{#1}{#1}{-}]{\interpretation}%
}
\NewDocumentCommand\circuittomealy{om}{%
    \left[\IfValueTF{#1}{#1}{-}\right]_{#2}%
}
\NewDocumentCommand\circuittomealyi{o}{%
    \circuittomealy[\IfValueTF{#1}{#1}{-}]{\interpretation}%
}
\newcommand{\monoidunitleqn}{\mathsf{M1}}
\newcommand{\monoidassoceqn}{\mathsf{M3}}
\newcommand{\monoidcommeqn}{\mathsf{M4}}
\newcommand{\comonoidunitleqn}{\mathsf{C1}}
\newcommand{\comonoidassoceqn}{\mathsf{C3}}
\newcommand{\comonoidcommeqn}{\mathsf{C4}}
\newcommand{\frobleqn}{\mathsf{F1}}
\newcommand{\frobreqn}{\mathsf{F2}}
\newcommand{\frobspeceqn}{\mathsf{F3}}
\newcommand{\instantfeedbackeqn}{\mathsf{IF}}
\newcommand{\gateeqn}{\mathsf{Prim}_{\interpretation}}
\newcommand{\forkeqn}{\mathsf{Fork}}
\newcommand{\joineqn}{\mathsf{Join}}
\newcommand{\stubeqn}{\mathsf{Stub}}
\newcommand{\mealyeqn}{\mathsf{Mealy}}
\newcommand{\streamingeqn}{\mathsf{Str}}
\DeclareMathOperator{\andgate}{AND}
\DeclareMathOperator{\orgate}{OR}
\DeclareMathOperator{\notgate}{NOT}
\DeclareMathOperator{\norgate}{NOR}
\newcommand{\belnapfalse}{\mathsf{f}}
\newcommand{\belnaptrue}{\mathsf{t}}
\newcommand{\valuetuple}[1]{\tuple{\values}{#1}}
\newcommand{\belnap}{\mathsf{B}}
\NewDocumentCommand\belnapvalueinterpretation{o}{%
    \valueinterpretation\IfValueT{#1}{[#1]}_{\belnap}
}
\NewDocumentCommand\belnapgateinterpretation{o}{%
    \gateinterpretation\IfValueT{#1}{[#1]}_{\belnap}
}
\newcommand{\equationdisplay}[3]{
    #1
    =
    #2
    \;
    (#3)
}
\tikzstyle{none}=[anchor=center]
\tikzstyle{vertex}=[anchor=center, fill=black, draw=black, shape=circle, minimum size=2.5mm, tikzit category=hypergraph, inner sep=0]
\tikzstyle{red outline vertex}=[anchor=center, fill=black, draw=palered, shape=circle, minimum size=2.5mm, tikzit category=hypergraph, inner sep=0, line width={\stringwidth}]
\tikzstyle{red vertex}=[anchor=center, fill=palered, draw=palered, shape=circle, minimum size=2.5mm, tikzit category=hypergraph, inner sep=0, line width={\stringwidth}]
\tikzstyle{edge}=[anchor=center, fill=neutral, draw=black, shape=rectangle, font={\boxsize}, tikzit category=hypergraph, minimum width=8mm, minimum height=8mm, rounded corners=3mm, very thick, anchor=center]
\tikzstyle{red outline edge}=[anchor=center, fill=neutral, draw=palered, shape=rectangle, font={\boxsize}, tikzit category=hypergraph, minimum width=8mm, minimum height=8mm, rounded corners=3mm, very thick, anchor=center]
\tikzstyle{edge subgraph}=[fill=neutral, draw=black, shape=rectangle, font={\boxsize}, tikzit category=hypergraph, minimum width=8mm, minimum height=8mm, rounded corners=3mm, very thick, dashed]
\tikzstyle{port}=[minimum size=2.5mm, fill=none, draw=none, shape=circle, tikzit draw={rgb,255: red,154; green,154; blue,154}, anchor=center, tikzit fill=neutral]
\tikzstyle{product}=[fill=neutral, draw=black, shape=circle, scale=0.66, tikzit category=string diagram]
\tikzstyle{type}=[fill=none, draw=none, shape=circle, font={\large}, tikzit fill={rgb,255: red,37; green,193; blue,141}, inner sep=0, anchor=center, tikzit category=string diagram]
\tikzstyle{tiny box white}=[{\corners}, font={\boxsize}, fill=neutral, draw=black, shape=rectangle, tikzit category=string diagram, minimum width={\tinywidth}, minimum height={\tinywidth}, anchor=center, line width={\stringwidth}, tikzit fill=neutral, inner sep={\innersep}]
\tikzstyle{tiny box comb}=[{\corners}, font={\boxsize}, fill=comb, draw=black, shape=rectangle, tikzit category=string diagram, minimum width={\tinywidth}, minimum height={\tinywidth}, anchor=center, line width={\stringwidth}, tikzit fill=comb, inner sep={\innersep}]
\tikzstyle{tiny box seq}=[{\corners}, font={\boxsize}, fill=seq, draw=black, shape=rectangle, tikzit category=string diagram, minimum width={\tinywidth}, minimum height={\tinywidth}, anchor=center, line width={\stringwidth}, inner sep={\innersep}]
\tikzstyle{tiny signal seq}=[{\corners}, font={\boxsize}, shape=signal, signal to=west, signal pointer angle=110, fill=seq, draw=black, tikzit category=string diagram, minimum width=6mm, minimum height=5mm, anchor=center, line width={\stringwidth}, inner sep={\innersep}]
\tikzstyle{small box white}=[{\corners}, font={\boxsize}, fill=neutral, draw=black, shape=rectangle, minimum height={\smallwidth}, minimum width={\tinywidth}, tikzit category=string diagram, anchor=center, line width={\stringwidth}, inner sep={\innersep}]
\tikzstyle{small square box white}=[{\corners}, font={\boxsize}, fill=neutral, draw=black, shape=rectangle, minimum height={\smallwidth}, minimum width={\smallwidth}, tikzit category=string diagram, anchor=center, line width={\stringwidth}, inner sep={\innersep}]
\tikzstyle{medium box}=[rounded corners, {\corners}, font={\boxsize}, fill=neutral, draw=black, shape=rectangle, tikzit category=string diagram, minimum height={\mediumwidth}, minimum width={\smallwidth}, anchor=center, line width={\stringwidth}, tikzit fill=neutral, inner sep={\innersep}]
\tikzstyle{medium box white}=[{\corners}, font={\boxsize}, fill=neutral, draw=black, shape=rectangle, tikzit category=string diagram, minimum height={\mediumwidth}, minimum width={\smallwidth}, anchor=center, line width={\stringwidth}, tikzit fill=comb, inner sep={\innersep}]
\tikzstyle{medium box comb}=[{\corners}, font={\boxsize}, fill=comb, draw=black, shape=rectangle, tikzit category=string diagram, minimum height={\mediumwidth}, minimum width={\smallwidth}, anchor=center, line width={\stringwidth}, tikzit fill=comb, inner sep={\innersep}]
\tikzstyle{medium square box white}=[rounded corners, {\corners}, font={\boxsize}, fill=neutral, draw=black, shape=rectangle, tikzit category=string diagram, minimum height={\mediumwidth}, minimum width={\mediumwidth}, line width={\stringwidth}, tikzit fill=neutral, inner sep={\innersep}]
\tikzstyle{medium square box comb}=[rounded corners, {\corners}, font={\boxsize}, fill=comb, draw=black, shape=rectangle, tikzit category=string diagram, minimum height={\mediumwidth}, minimum width={\mediumwidth}, line width={\stringwidth}, tikzit fill=comb, inner sep={\innersep}]
\tikzstyle{medium square box seq}=[rounded corners, {\corners}, draw=black, font={\boxsize}, fill=seq, shape=rectangle, tikzit category=string diagram, minimum height={\mediumwidth}, minimum width={\mediumwidth}, line width={\stringwidth}, tikzit fill=seq, inner sep={\innersep}]
\tikzstyle{medium square rounded box white}=[rounded corners, font={\boxsize}, fill=neutral, draw=black, shape=rectangle, tikzit category=string diagram, minimum height={\mediumwidth}, minimum width={\mediumwidth}, line width={\stringwidth}, tikzit fill=neutral, inner sep={\innersep}]
\tikzstyle{medium square rounded box comb}=[rounded corners, font={\boxsize}, fill=comb, draw=black, shape=rectangle, tikzit category=string diagram, minimum height={\mediumwidth}, minimum width={\mediumwidth}, line width={\stringwidth}, tikzit fill=comb, inner sep={\innersep}]
\tikzstyle{medium square rounded box seq}=[rounded corners, draw=black, font={\boxsize}, fill=seq, shape=rectangle, tikzit category=string diagram, minimum height={\mediumwidth}, minimum width={\mediumwidth}, line width={\stringwidth}, tikzit fill=seq, inner sep={\innersep}]
\tikzstyle{large box}=[{\corners}, font={\boxsize}, fill=neutral, draw=black, shape=rectangle, tikzit category=string diagram, minimum height=15mm, minimum width=10mm, anchor=center, line width={\stringwidth}, tikzit fill=neutral, inner sep={\innersep}]
\tikzstyle{large square box white}=[{\corners}, font={\boxsize}, fill=neutral, draw=black, shape=rectangle, tikzit category=string diagram, minimum width=15mm, minimum height=15mm, line width={\stringwidth}, tikzit fill=neutral, inner sep={\innersep}]
\tikzstyle{large square box comb}=[{\corners}, font={\boxsize}, fill=comb, draw=black, shape=rectangle, tikzit category=string diagram, minimum width=12mm, minimum height=12mm, line width={\stringwidth}, tikzit fill=comb, inner sep={\innersep}]
\tikzstyle{huge box}=[{\corners}, fill=neutral, draw=black, shape=rectangle, minimum height=40mm, minimum width=15mm, anchor=center, tikzit category=string diagram, line width={\stringwidth}, tikzit fill=neutral, inner sep={\innersep}]
\tikzstyle{output-node}=[fill=neutral, draw=black, shape=circle, anchor=west, inner sep=0.5]
\tikzstyle{delay}=[fill=neutral, draw=black, font={\boxsize}, shape=signal, tikzit category=string diagram, line width={\stringwidth}, minimum height={\tinywidth}, minimum width={\tinywidth}, inner sep=0.5mm, outer sep=0mm, minimum height=1em, minimum width=1em]
\tikzstyle{unit delay}=[fill=neutral, draw=unit, font={\boxsize}, shape=signal, tikzit category=string diagram, line width={\stringwidth}, minimum height={\tinywidth}, minimum width={\tinywidth}, inner sep=0.5mm, outer sep=0mm, minimum height=1em, minimum width=1em]
\tikzstyle{register}=[fill=seq, draw=black, font={\boxsize}, shape=signal, tikzit category=string diagram, line width={\stringwidth}, minimum width=1.5em, align=center, anchor=center, inner xsep=1mm, inner ysep=1mm, outer xsep=0mm]
\tikzstyle{waveform}=[fill=seq, draw=black, font={\boxsize}, shape=signal, signal from=west, signal to=east, tikzit category=string diagram, line width={\stringwidth}, minimum height=2em, minimum width=1.5em, align=center, anchor=center, inner xsep=1mm, inner ysep=-1mm, outer xsep=0mm]
\tikzstyle{bproduct}=[fill=black, draw=black, shape=circle, scale=0.5, tikzit category=string diagram]
\tikzstyle{red product}=[fill=palered, draw=palered, shape=circle, scale=0.5, tikzit category=string diagram]
\tikzstyle{wproduct}=[fill=neutral, draw=black, shape=circle, scale=0.5, line width=0.75, tikzit category=string diagram]
\tikzstyle{gproduct}=[fill=unit, draw=unit, shape=circle, scale=0.5, tikzit category=string diagram]
\tikzstyle{bport}=[minimum size=2.5mm, fill=none, draw=none, shape=circle, tikzit draw={rgb,255: red,154; green,154; blue,154}, anchor=center, tikzit fill=neutral, tikzit category=string diagram]
\tikzstyle{mux}=[fill=neutral, draw=black, shape=trapezium, tikzit category=circuits, rotate=-90, minimum height=1em, line width={\stringwidth}, scale=1.25]
\tikzstyle{fork}=[fill=black, draw=black, shape=circle, tikzit category=circuits, scale=0.25]
\tikzstyle{box}=[fill=neutral, draw=black, shape=rectangle, tikzit category=circuits]
\tikzstyle{dangling}=[fill=none, draw=none, shape=circle, anchor=east, scale=0.01, tikzit category=string diagram, tikzit fill={rgb,255: red,162; green,76; blue,77}]
\tikzstyle{label}=[fill=none, draw=none, shape=circle, align=center, inner sep=0, outer sep=0]
\tikzstyle{small label}=[scale=1.25, fill=none, draw=none, shape=circle, outer sep=0, inner sep=0, anchor=center]
\tikzstyle{wire label left}=[scale=1.25, fill=none, draw=none, shape=rectangle, outer sep=0, inner sep=0, anchor=east]
\tikzstyle{wire label right}=[scale=1.25, fill=none, draw=none, shape=rectangle, outer sep=0, inner sep=0, anchor=west]
\tikzstyle{wire label mid}=[scale=1.25, fill={\bgcolour}, shape=rectangle, outer sep=0, inner sep=0, minimum size=0]
\tikzstyle{tile}=[fill=neutral, draw=black, shape=rectangle, tikzit category=string diagram, {\corners}, minimum height=5mm, minimum width=5mm]
\tikzstyle{commuting label}=[fill=none, draw=none, shape=circle, scale=0.5]
\tikzstyle{and}=[fill=neutral, draw=black, shape=and gate, line width={\stringwidth}, and gate, scale=1.75, tikzit category=circuits]
\tikzstyle{or}=[fill=neutral, draw=black, or gate, scale=1.75, line width={\stringwidth}, tikzit category=circuits]
\tikzstyle{red or}=[fill=neutral, draw=palered, or gate, scale=1.75, line width={\stringwidth}, tikzit category=circuits]
\tikzstyle{not}=[fill=neutral, draw=black, not gate, scale=1.5, line width={\stringwidth}, tikzit category=circuits]
\tikzstyle{nor}=[fill=neutral, draw=black, nor gate, scale=1.75, line width={\stringwidth}, tikzit category=circuits]
\tikzstyle{nand}=[fill=neutral, draw=black, nand gate, scale=1.75, line width={\stringwidth}, tikzit category=circuits]
\tikzstyle{xor}=[fill=neutral, draw=black, xor gate, scale=1.75, line width={\stringwidth}, tikzit category=circuits]
\tikzstyle{xnor}=[fill=neutral, draw=black, xnor gate, scale=1.75, line width={\stringwidth}, tikzit category=circuits]
\tikzstyle{west}=[fill=none, draw=none, shape=circle, anchor=west]
\tikzstyle{short bundler}=[fill=neutral, draw=black, shape=rounded rectangle, minimum width=2.5em, rounded rectangle arc length=180, rotate=90, line width={\stringwidth}]
\tikzstyle{bundler}=[fill=neutral, draw=black, shape=rounded rectangle, minimum width=3em, rounded rectangle arc length=180, rotate=90, line width={\stringwidth}]
\tikzstyle{long bundler}=[fill=neutral, draw=black, shape=rounded rectangle, minimum width=5em, rounded rectangle arc length=180, rotate=90, line width={\stringwidth}]
\tikzstyle{box vertex}=[fill=white, draw=black, shape=circle, line width={\stringwidth}]
\tikzstyle{interface}=[-, fill={rgb,255: red,238; green,238; blue,255}, dashed, draw={rgb,255: red,170; green,170; blue,225}, ultra thick]
\tikzstyle{graph}=[-, fill={rgb,255: red,238; green,238; blue,238}, draw={rgb,255: red,191; green,191; blue,191}, dashed, ultra thick, anchor=center]
\tikzstyle{tentacle}=[-, very thick]
\tikzstyle{red tentacle}=[-, draw=palered, very thick]
\tikzstyle{wire}=[-, tikzit category=string diagram, line width={\stringwidth}]
\tikzstyle{red wire}=[-, draw=palered, tikzit category=string diagram, line width={\stringwidth}]
\tikzstyle{empty}=[-, densely dashed, {\corners}, dash pattern=on 0pt off 1.25pt on 1.25pt, line width={\stringwidth}]
\tikzstyle{transition}=[->]
\tikzstyle{output}=[->, decorate, decoration=zigzag]
\tikzstyle{gate}=[-, fill=neutral]
\tikzstyle{thicc}=[-, line width=1]
\tikzstyle{strikethrough}=[-, decoration={markings, mark=at position 0.5 with {
\tikzstyle{traced}=[-, densely dashed, draw=gray]
\tikzstyle{arrow}=[<-, line width={\stringwidth}]
\tikzstyle{arrow up}=[->, line width={\stringwidth}]
\tikzstyle{dashed arrow}=[<-, dashed, line width={\stringwidth}]
\tikzstyle{unit wire}=[-, fill=none, draw=unit, line width={\stringwidth}]
\tikzstyle{interfacearrow}=[->, very thick, dashed]
\tikzstyle{tile none}=[-, draw=none, {\corners}, fill=none, line width={\stringwidth}]
\tikzstyle{tile white}=[-, {\corners}, fill=neutral, line width={\stringwidth}]
\tikzstyle{tile comb}=[-, {\corners}, fill=comb, line width={\stringwidth}]
\tikzstyle{tile seq}=[-, {\corners}, fill=seq, line width={\stringwidth}]
\tikzstyle{commute}=[->]
\tikzstyle{curved rectangle}=[-, rounded corners]
\tikzstyle{hasse}=[-]
\tikzstyle{wiredash}=[-, line width={\stringwidth}]
\tikzstyle{rewrite}=[-, dashed, line width={\stringwidth}]
\tikzstyle{juxtaposition}=[-, draw={rgb,255: red,128; green,128; blue,128}, densely dashed, line width={\stringwidth}]
\tikzstyle{functor box}=[-, thick, draw={rgb,255: red,83; green,83; blue,83}]
\tikzstyle{boundary box}=[-, draw=none, tikzit draw={rgb,255: red,255; green,0; blue,4}]
\tikzstyle{hom image}=[-, draw={rgb,255: red,191; green,191; blue,191}, fill={rgb,255: red,238; green,238; blue,238}, ultra thick, dashed]
\tikzstyle{tile boundary}=[-, draw={rgb,255: red,128; green,128; blue,128}, dashed]
\tikzstyle{lafont fork}=[-, fill=black]
\begin{document}
\title[Rewriting Modulo Traced Comonoid Structure]{Rewriting Modulo Traced Comonoid Structure\rsuper*}
\titlecomment{{\lsuper*}%
    This is a revised and extended version of~\cite{ghica2023rewriting}.}
\author[D.~R.~Ghica]{Dan R.\ Ghica\lmcsorcid{0000-0002-4003-8893}}
\author[G.~Kaye]{George Kaye\lmcsorcid{0000-0002-0515-4055}}
\keywords{
    symmetric traced monoidal categories,
    string diagrams,
    graph rewriting,
    comonoid structure,
    double pushout rewriting
}

\address{University of Birmingham, UK}
\email{d.r.ghica@bham.ac.uk, georgejkaye@gmail.com}
\begin{abstract}
    In this paper we adapt previous work on rewriting string diagrams using
    hypergraphs to the case where the underlying category has a
    \emph{traced comonoid structure}, in which wires can be forked and
    the outputs of a morphism can be connected to its input.
    Such a structure is particularly interesting because any traced
    Cartesian (dataflow) category has an underlying traced comonoid
    structure.
    We show that certain subclasses of hypergraphs are fully complete
    for traced comonoid categories: that is to say, every term in such a
    category has a unique corresponding hypergraph up to isomorphism, and
    from every hypergraph with the desired properties, a unique term in the
    category can be retrieved up to the axioms of traced comonoid
    categories.
    We also show how the framework of double pushout rewriting (DPO) can be
    adapted for traced comonoid categories by characterising the valid
    pushout complements for rewriting in our setting.
    We conclude by presenting a case study in the form of recent work on
    an equational theory for \emph{sequential circuits}: circuits built from
    primitive logic gates with delay and feedback.
    The graph rewriting framework allows for the definition of an
    \emph{operational semantics} for sequential circuits.
\end{abstract}
\maketitle

\section{Introduction}

String diagrams constitute a useful and elegant conceptual bridge between term
rewriting and graph rewriting.
Since their introduction in the 90s~\cite{joyal1991geometry,joyal1996traced}, their
use has exploded recently both for use in diverse fields such as
cyclic lambda calculi~\cite{hasegawa1997recursion}, fixpoint
operators~\cite{hasegawa2003uniformity}, quantum
protocols~\cite{abramsky2004categorical}, signal flow
diagrams~\cite{bonchi2014categorical,bonchi2015full},
linear algebra~\cite{bonchi2017interacting,zanasi2015interacting,bonchi2019graphical,boisseau2022graphical},
finite state automata~\cite{piedeleu2021string}, dynamical
systems~\cite{baez2015categories,fong2016categorical}, electrical and electronic
circuits~\cite{boisseau2022string,ghica2024fully}, and automatic
differentiation~\cite{alvarez-picallo2023functorial}.
Although not the first use of graphical notation by any
means~\cite{penrose1971applications}, string diagrams are notable because they
are \emph{more} than just a visual aid; they are are a sound and complete
representation of categorical terms in their own
right~\cite{joyal1991geometry,kissinger2014abstract}.
This means one can reason solely in string diagrams without fear that they may
somehow be working with malformed terms.
String diagrams have been adapted to accommodate all sorts of categorical
structure; the survey~\cite{selinger2011survey} is a suitable entry point to
the literature.

While string diagrams have proven to be immensely useful for equational
reasoning with terms in symmetric monoidal categories, they still have their
shortcomings in that they are difficult for computers to manipulate compared
to combinatorial graph-based structures.
This can be remedied by mapping string diagram terms into certain categories of
graphs and performing
\emph{double pushout (DPO) rewriting}~\cite{ehrig1976parallelism}, a
categorical framework in which rewrites are performed by using morphisms
between graphs.
Recent work on this subject has used \emph{framed point graphs}~\cite{kissinger2012pictures,dixon2013opengraphs}
and more recently \emph{hypergraphs}~\cite{bonchi2022string,bonchi2022stringa,bonchi2022stringb}.

Vanilla string diagram terms with no extra structure are fairly basic: all wires
are \emph{progressive} in that they are always flowing in one direction across
the page.
Put simply, the output of a box can only ever connect to the input of
another box.
To model more complicated processes, much work on string diagram rewriting has
considered string diagrams in the presence of extra structure.
A common setting involves the addition of a comonoid structure (`fork') and
a monoid structure (`join') along with special equations to yield what is known
as a \emph{Frobenius structure}; among other properties Frobenius terms form a
\emph{compact closed category}.
One effect of this structure is that wires in Frobenius string diagrams are
`bidirectional': one can connect a wire from the output of a box to another
output of a box.
This can be contrasted with settings equipped with a \emph{trace}, in which
wires can be bent backwards but must be unbent before reaching their endpoints;
much like the ordinary symmetric monoidal setting, outputs can only ever be
connected to inputs.

This paper is concerned with bringing string diagram rewriting techniques
to a particular combination of the structures detailed
above: \emph{traced categories with a comonoid structure} but without a monoid
structure.
This is motivated by a particular class of categories known as
\emph{dataflow categories}~\cite{cazanescu1990new,cazanescu1994feedback,hasegawa1997recursion},
in which the comonoid structure is the Cartesian product; one application of
string diagrammatic reasoning that exhibits a dataflow structure is the
semantics of digital circuits~\cite{ghica2024fully}.
The gap between the kind of semantic models which use an underlying
compact closed structure and those which use a traced monoidal structure is
significant: the former have a \emph{relational} nature with subtle causality
(e.g. quantum or electrical circuits) whereas the latter are \emph{functional}
with clear input-output causality (e.g. digital or logical circuits), so it is
not surprising that the underlying rewrite frameworks should differ.

Reasoning with these categories is technically challenging, as it falls in a gap
between compact closed structures constructible via Frobenius and symmetric
monoidal categories without trace.
For example, it is well known that if the Cartesian product exists in a compact
closed category then it is degenerate and identified with the coproduct.
Even without invoking copying, we will see how trying to perform rewriting in a
traced category with a comonoid structure can also lead to inconsistencies.
This is a firm indication that a bespoke rewriting framework needs to be
constructed to fill this particular situation.

Traced categories were considered by the aforementioned work on framed point
graphs, but this requires rewriting modulo so-called \emph{wire homeomorphisms}.
This style of rewriting is awkward and is increasingly considered as obsolete as
compared to the more recent work on rewriting with hypergraphs; our work seeks
to extend the latter work to apply to the traced comonoid case, by building on
the results on hypergraph string diagram rewriting modulo Frobenius
structure~\cite{bonchi2022string}, symmetric monoidal
structure~\cite{bonchi2022stringa}, and settings equipped
with a monoid structure~\cite{fritz2023free,milosavljevic2023string}.

\subsubsection*{Contributions}

This paper makes two distinct technical contributions.
The first is to show that one subclass of cospans of hypergraphs (`partial
monogamous') is fully complete for traced terms (\autoref{cor:stmc-graph-iso}),
and another class (`partial left-monogamous') is fully complete for traced
comonoid terms (\autoref{thm:comonoid-fully-complete}).
What this means is that every string diagram in a traced setting corresponds to
a unique partially monogamous cospan of hypergraphs up to isomorphism, and
every partially monogamous cospan of hypergraphs corresponds to a unique string
diagram, such that the process of mapping between the two interpretations are
inverses.

The challenge in this step is not so much in proving the correctness of the
construction but in defining precisely what these combinatorial structures
should be.
In particular, the extremal point of tracing the identity: \(
\trace{}{
    \iltikzfig{strings/category/identity}[colour=white]
}
=
\iltikzfig{strings/traced/trace-id}
\), corresponding graphically to a closed loop, provides a litmus test.
The way this is resolved must be robust enough to handle the addition of the
comonoid structure, in which one can `trace a forking wire':
\(
\trace{}{
    \iltikzfig{strings/structure/comonoid/copy}[colour=white]
}
=
\iltikzfig{strings/structure/comonoid/trace-fork}
\).

The second contribution is concerned with the well-definedness of graph
rewriting with partial monogamous and partial left-monogamous cospans of
hypergraphs when using DPO rewriting.
A graph rewrite in DPO rewriting is fully determined by the choice of a
\emph{pushout complement} for a rewrite rule and instance of said rule in a
larger graph; the pushout complement is the context of a rewrite step.
For a given rule and graph, there may be multiple such pushout complements, but
not all of these may represent a valid rewrite in a given string diagram
setting.
When rewriting with Frobenius structure, every pushout complement is
valid~\cite{bonchi2022string} whereas when rewriting with symmetric monoidal
structure exactly one pushout complement is valid~\cite{bonchi2022stringa};
for the traced case some pushout complements are valid and some are not.
Our contribution here is to characterise the valid pushout complements as
`traced boundary complements' (\autoref{def:traced-boundary-complement}) for the
traced setting and as `traced left-boundary complements'
(\autoref{def:traced-left-boundary-complement}) for the traced comonoid setting.
Subsequently we show that DPO rewriting using traced boundary complements is
well-defined for
cospans of partial monogamous cospans of hypergraphs
(\autoref{thm:traced-rewriting}), and DPO rewriting using traced left-boundary
complements is well-defined for cospans of partial left-monogamous cospans of
hypergraphs (\autoref{thm:traced-comonoid-rewriting}).

This is best illustrated with an example in which there is a pushout
complement that is valid in a Frobenius setting because it uses the monoid
structure, but it is not valid neither in a traced, nor even in a traced
comonoid setting.
Imagine we have a rule \(\rrule{
    \iltikzfig{graphs/dpo/non-valid/rule-lhs}
}{
    \iltikzfig{graphs/dpo/non-valid/rule-rhs}
}\) and a term \(
\iltikzfig{graphs/dpo/non-valid/term}
\), and rewrite it as follows.
\begin{center}
    \includestandalone{figures/graphs/dpo/non-valid/rewrite}
\end{center}
This corresponds to the term rewrite \(
\iltikzfig{graphs/dpo/non-valid/term}
=
\iltikzfig{graphs/dpo/non-valid/term-rewriting}
=
\iltikzfig{graphs/dpo/non-valid/term-rewritten}
\), which holds in a Frobenius setting but not a setting without a commutative
monoid structure.
On the other hand, the rewriting system for symmetric monoidal
categories~\cite{bonchi2022stringa} is too restrictive as it enforces that any
matching must be mono: this prevents matchings such as \(
\iltikzfig{graphs/dpo/matchings/trace-rule}
\) in \(
\iltikzfig{graphs/dpo/matchings/trace-match}
\).
Here again the challenge is precisely identifying the concept of traced boundary
complement mathematically.
The solution, although not immediately obvious, is not complicated, again
requiring a generalisation from monogamy to partial monogamy and partial
left-monogamy.

Towards the end of this paper, we provide two extended case studies on how
string diagram rewriting modulo traced comonoid structure can be applied.
The first is a discussion on generic rewriting in settings where the comonoid
structure is a Cartesian product; the second is an overview of how graph
rewriting can be applied to the categorical theory of digital circuits
presented in~\cite{ghica2024fully}, resulting in an automated operational
semantics for sequential circuits.

This paper is an extended version of a paper of the same title published
in the proceedings of FSCD 2023~\cite{ghica2023rewriting}.
The results presented are the same, but this version contains a refined
narrative, full proofs for all results, and expanded examples.

\section{Monoidal theories and hypergraphs}

When modelling a system using monoidal categories, its components and
properties are specified using a \emph{monoidal theory}.
A class of SMCs particularly interesting to us is that of
\emph{PROPs} (`categories of \emph{PRO}ducts and
\emph{P}ermutations')~\cite[Chap.\ V.24]{maclane1965categorical},
which have natural numbers as objects and addition as
tensor product on objects.

\begin{defi}[Symmetric monoidal theory]
    A \emph{(single-sorted) symmetric monoidal theory} (SMT) is a tuple \(
    (\generators,\equations)
    \) where \(\generators\) is a set of \emph{generators} in which each
    generator \(\phi \in \generators\) has an associated \emph{arity}
    \(\dom[\phi] \in \nat\) and \emph{coarity} \(\cod[\phi] \in \nat\),
    and \(\equations\) is a set of equations.
    Given a SMT \((\generators,\equations)\), let \(
    \smc{\generators}
    \) be the strict symmetric monoidal category freely generated over
    \(\generators\) and let \(\smc{\generators,\equations}\) be
    \(\smc{\generators}\) quotiented by the equations in \(\equations\).
\end{defi}

\begin{rem}
    One can also define a \emph{multi-sorted} SMT, in which wires can be of
    multiple colours.
    In this paper we will only consider the single-sorted case, but the results
    generalise easily using the results
    of~\cite{bonchi2022string,bonchi2022stringa}.
\end{rem}

While one could reason in \(\smc{\generators}\) using the
one-dimensional categorical term language, it is more intuitive to reason with
\emph{string diagrams}~\cite{joyal1991geometry,selinger2011survey}, which
represent \emph{equivalence classes} of terms up to the axioms of SMCs.
In the language of string diagrams, a generator \(\morph{\phi}{m}{n}\) is drawn
as a box \(
\iltikzfig{strings/category/generator}[box=\phi,colour=white,dom=m,cod=n]
\), the identity \(\id[n]\) as \(
\iltikzfig{strings/category/identity}[colour=white,obj=n]
\), and the symmetry \(\swap{m}{n}\) as \(
\iltikzfig{strings/symmetric/symmetry}[colour=white,obj1=m,obj2=n]
\).
Composite terms will be illustrated as wider boxes \(
\iltikzfig{strings/category/f}[box=f,colour=white,dom=m,cod=n]
\) to distinguish them from generators.
Diagrammatic order composition \(
\iltikzfig{strings/category/f}[box=f,colour=white,dom=m,cod=n]
\seq
\iltikzfig{strings/category/f}[box=g,colour=white,dom=n,cod=p]
\) is defined as horizontal juxtaposition \(
\iltikzfig{strings/category/composition}[box1=f,box2=g,colour=white,dom=m,cod=p]
\) and tensor \(
\iltikzfig{strings/category/f}[box=f,colour=white,dom=m,cod=n]
\tensor
\iltikzfig{strings/category/f}[box=g,colour=white,dom=p,cod=q]
\) as vertical juxtaposition \(
\iltikzfig{strings/monoidal/tensor}[box1=f,box2=g,colour=white,dom1=m,cod1=n,dom2=p,cod2=q]
\).

\begin{figure*}
    \(
    \iltikzfig{strings/category/identity-l-lhs}[box=f,colour=white,dom=m,cod=n]
    =
    \iltikzfig{strings/category/f}[box=f,colour=white,dom=m,cod=n]
    \quad
    \iltikzfig{strings/category/identity-r-lhs}[box=f,colour=white,dom=m,cod=n]
    =
    \iltikzfig{strings/category/f}[box=f,colour=white,dom=m,cod=n]
    \)
    \\[1em]
    \(
    \iltikzfig{strings/category/associativity-lhs}[box1=f,box2=g,box3=h,colour=white,dom=m,cod=p]
    =
    \iltikzfig{strings/category/associativity-rhs}[box1=f,box2=g,box3=h,colour=white,dom=m,cod=p]
    \)
    \\[1em]
    \(
    \iltikzfig{strings/monoidal/unit-l-lhs}[box=f,colour=white,dom=m,cod=n]
    =
    \iltikzfig{strings/category/f}[box=f,colour=white,dom=m,cod=n]
    \quad
    \iltikzfig{strings/monoidal/unit-r-lhs}[box=f,colour=white,dom=m,cod=n]
    =
    \iltikzfig{strings/category/f}[box=f,colour=white,dom=m,cod=n]
    \quad
    \iltikzfig{strings/monoidal/associativity-lhs}[box1=f,box2=g,box3=h,colour=white,dom1=m,cod1=n,dom2=p,cod2=q,dom3=r,cod3=s]
    =
    \iltikzfig{strings/monoidal/associativity-rhs}[box1=f,box2=g,box3=h,colour=white,dom1=m,cod1=n,dom2=p,cod2=q,dom3=r,cod3=s]
    \)
    \\[1em]
    \(
    \iltikzfig{strings/monoidal/identity-tensor-lhs}[colour=white,obj1=m,obj2=n]
    =
    \iltikzfig{strings/category/identity}[colour=white,obj=m+n]
    \quad
    \iltikzfig{strings/monoidal/interchange-lhs}[box1=f,box2=g,box3=h,box4=k,colour=white,dom1=m,dom2=p,cod1=n,cod2=q]
    =
    \iltikzfig{strings/monoidal/interchange-rhs}[box1=f,box2=g,box3=h,box4=k,colour=white,dom1=m,dom2=p,cod1=n,cod2=q]
    \)
    \\[1em]
    \(
    \iltikzfig{strings/symmetric/naturality-lhs}[box1=f,box2=g,colour=white,dom1=m,dom2=p,cod1=n,cod2=q]
    =
    \iltikzfig{strings/symmetric/naturality-rhs}[box1=f,box2=g,colour=white,dom1=m,dom2=p,cod1=n,cod2=q]
    \quad
    \iltikzfig{strings/symmetric/hexagon-lhs}[colour=white,obj1=m,obj2=n,obj3=p]
    =
    \iltikzfig{strings/symmetric/symmetry}[colour=white,obj1=m,obj2=n+p]
    \)
    \\[1em]
    \(
    \iltikzfig{strings/symmetric/unit-l-lhs}[colour=white,unit=0,obj=n]
    =
    \iltikzfig{strings/category/identity}[colour=white,obj=n]
    \quad
    \iltikzfig{strings/symmetric/unit-r-lhs}[colour=white,obj=m,unit=0]
    =
    \iltikzfig{strings/category/identity}[colour=white,obj=m]
    \quad
    \iltikzfig{strings/symmetric/self-inverse-lhs}[colour=white,obj1=m,obj2=n]
    =
    \iltikzfig{strings/symmetric/self-inverse-rhs}[colour=white,obj1=m,obj2=n]
    \)
    \caption{
        The equations of symmetric monoidal categories,
        expressed string diagrammatically in the language of PROPs
    }
    \label{fig:smc-axioms}
\end{figure*}

\begin{rem}
    String diagrams are not restricted to PROPs, and can be used to provide a
    graphical notation for any flavour of symmetric monoidal category.
    In this paper we will stick to the PROP case, motivated by our use of
    monoidal theories containing generators with natural numbers as their domain
    and codomain.
\end{rem}

\begin{rem}
    Although we have drawn a generator
    \(\iltikzfig{strings/category/generator}[box=\phi,colour=white,dom=m,cod=n]\)
    as a box with a single wire as the input and output, this is actually
    syntactic sugar for drawing \(m\) and \(n\) individual wires respectively;
    this avoids cluttering up diagrams with lots of parallel wires.
    A way of turning this syntactic sugar into a formal syntactic construct is
    discussed in~\cite{wilson2023string}.

    On a related note, we may omit the labels from wires when clear from
    context; in the absence of such context, a wire with no label can be taken
    to mean a wire for the object \(1\).
\end{rem}

The power of string diagrams comes from how they `absorb' the equations of
SMCs, as shown in \autoref{fig:smc-axioms}.
This is not merely a convenient notation; string diagrams for symmetric monoidal
categories are a mathematically rigorous language in their own right.

\begin{thmC}[{\cite[Thm. 2.3]{joyal1991geometry}}]
    Given two terms \(f,g \in \smcsigma\), \(f = g\) by axioms of SMCs if and only
    if their string diagrams are isomorphic.
\end{thmC}

String diagrams clearly illustrate the differences between the
\emph{syntactic} category \(\smc{\generators}\) and the \emph{semantic} category
\(\smc{\generators,\equations}\).
In the former, only `structural' equalities of the axioms of SMCs hold: moving
boxes around while retaining connectivity.
In the latter, more equations hold so terms with completely different boxes and
connectivity can be equal.

\begin{exa}\label{ex:frobenius}
    The monoidal theory of
    \emph{special commutative Frobenius algebras} is defined as \(
    (\generators[\frob], \equations[\frob])
    \) where \(
    \generators[\frob] := \{
    \iltikzfig{strings/structure/monoid/merge}[colour=white],
    \iltikzfig{strings/structure/monoid/init}[colour=white],
    \iltikzfig{strings/structure/comonoid/copy}[colour=white],
    \iltikzfig{strings/structure/comonoid/discard}[colour=white]
    \}
    \) and the equations of \(\equations[\frob]\) are listed in
    Figures~\ref{fig:monoid-equations}, \ref{fig:comonoid-equations},
    and~\ref{fig:frobenius-equations}.
    We write \(\frob := \smc{\generators[\frob], \equations[\frob]}\).

    Terms in \(\frob\) are all the ways of composing the generators of
    \(\generators[\frob]\) in sequence and parallel.
    For example, the following are all terms in \(\frob\):
    \[
        \iltikzfig{strings/structure/frobenius/example-1}
        \quad
        \iltikzfig{strings/structure/frobenius/example-2}
        \quad
        \iltikzfig{strings/structure/frobenius/example-3}
    \]
    Using the equations of \(\equations[\frob]\), we can show that the latter
    two terms are equal in \(\frob\):
    \begin{gather*}
        \iltikzfig{strings/structure/frobenius/example-2}
        \eqaxioms[\monoidunitleqn]
        \iltikzfig{strings/structure/frobenius/example-equational/step-1}
        \eqaxioms[\frobleqn]
        \iltikzfig{strings/structure/frobenius/example-equational/step-2}
        \eqaxioms[\monoidassoceqn]
        \iltikzfig{strings/structure/frobenius/example-equational/step-3}
        \eqaxioms[\frobreqn]
        \iltikzfig{strings/structure/frobenius/example-3}
    \end{gather*}
    Since the diagrammatic notation takes care of the axioms of SMCs, we only
    need to worry about the equations of the monoidal theory.
\end{exa}

\begin{figure}
    \centering
    \(\equationdisplay{
        \iltikzfig{strings/structure/monoid/unitality-l-lhs}
    }{
        \iltikzfig{strings/structure/monoid/unitality-l-rhs}
    }{
        \monoidunitleqn
    }\)
    \quad
    \(\equationdisplay{
        \iltikzfig{strings/structure/monoid/associativity-lhs}
    }{
        \iltikzfig{strings/structure/monoid/associativity-rhs}
    }{
        \monoidassoceqn
    }\)
    \quad
    \(\equationdisplay{
        \iltikzfig{strings/structure/monoid/commutativity-lhs}
    }{
        \iltikzfig{strings/structure/monoid/commutativity-rhs}
    }{
        \monoidcommeqn
    }\)
    \caption{Equations \(\equations[\cmon]\) of a \emph{commutative monoid}}
    \label{fig:monoid-equations}
\end{figure}

\begin{figure}
    \centering
    \(\equationdisplay{
        \iltikzfig{strings/structure/comonoid/unitality-l-lhs}
    }{
        \iltikzfig{strings/structure/comonoid/unitality-l-rhs}
    }{
        \comonoidunitleqn
    }\)
    \quad
    \(\equationdisplay{
        \iltikzfig{strings/structure/comonoid/associativity-lhs}
    }{
        \iltikzfig{strings/structure/comonoid/associativity-rhs}
    }{
        \comonoidassoceqn
    }\)
    \quad
    \(\equationdisplay{
        \iltikzfig{strings/structure/comonoid/commutativity-lhs}
    }{
        \iltikzfig{strings/structure/comonoid/commutativity-rhs}
    }{
        \comonoidcommeqn
    }\)
    \caption{Equations \(\equations[\ccomon]\) of a \emph{commutative comonoid}}
    \label{fig:comonoid-equations}
\end{figure}

\begin{figure}[t]
    \centering
    \(\equationdisplay{
        \iltikzfig{strings/structure/frobenius/frobenius-l}
    }{
        \iltikzfig{strings/structure/bialgebra/merge-copy-lhs}
    }{
        \frobleqn
    }\)
    \quad
    \(\equationdisplay{
        \iltikzfig{strings/structure/frobenius/frobenius-r}
    }{
        \iltikzfig{strings/structure/bialgebra/merge-copy-lhs}
    }{
        \frobreqn
    }\)
    \quad
    \(\equationdisplay{
        \iltikzfig{strings/structure/frobenius/copy-merge-lhs}
    }{
        \iltikzfig{strings/structure/frobenius/copy-merge-rhs}
    }{
        \frobspeceqn
    }\)
    \caption{
        Equations \(\equations[\frob]\) of a
        \emph{special commutative Frobenius algebra}, in addition to those in
        Figures~\ref{fig:monoid-equations} and~\ref{fig:comonoid-equations}
    }
    \label{fig:frobenius-equations}
\end{figure}

\subsection{Categories of hypergraphs}

Reasoning equationally using string diagrams is certainly attractive
as a pen-and-paper method, but for larger systems it quickly becomes intractible
to do this by hand.
Instead, it is desirable to perform equational reasoning \emph{computationally}.
Unfortunately, string diagrams as topological objects are not particularly
suited for this purpose; it is more suitable to use a combinatorial
representation.
Fortunately, this has been well studied recently, first with
\emph{string graphs}~\cite{dixon2013opengraphs,kissinger2012pictures}
and later with \emph{hypergraphs}~\cite{bonchi2022string,bonchi2022stringa,bonchi2022stringb},
a generalisation of regular graphs in which edges can be the source or target of
an arbitrary number of nodes.
In this paper we are concerned with the latter.

Hypergraphs are formally defined as objects in a functor category.

\begin{defi}[Hypergraph]
    Let \(\mathbf{X}\) be the category containing objects \((k, l)\) for
    \(k, l \in \nat\) and one additional object \(\star\).
    For each \((k, l)\) there are \(k + l\) morphisms \((k, l) \to \star\).
    Let \(\hyp\) be the functor category \([\mathbf{X},\set]\).
\end{defi}

An object in \(\hyp\) maps \(\star\) to a set of nodes, and each pair
\((k,l)\) to a set of hyperedges with \(k\) sources and \(l\) targets.
Given a hypergraph \(F \in \hyp\), we write \(\vertices{F}\) for its set of
vertices and \(\edges{F}{k}{l}\) for the set of edges with \(k\) sources and
\(l\) targets.
Subsequently each functor induces functions
\(\morph{\sources{i}}{\edges{F}{k}{l}}{\vertices{F}}\) for \(i < k\) and
\(\morph{\targets{j}}{\edges{F}{k}{l}}{\vertices{F}}\) for \(j < l\).

\begin{exa}\label{ex:hypergraph}
    We define the hypergraph \(F\) as follows:
    \begin{gather*}
        \vertices{F} := \{v_0,v_1,v_2,v_3,v_4,v_5\}
        \quad
        \edges{F}{2}{1} := \{e_0,e_2\}
        \quad
        \edges{F}{1}{2} := \{e_1\}
        \\
        \sources{0}(e_0) := v_0
        \quad
        \sources{1}(e_0) := v_2
        \quad
        \sources{0}(e_1) := v_3
        \quad
        \sources{0}(e_2) := v_3
        \quad
        \sources{1}(e_2) := v_2
        \\
        \targets{0}(e_0) := v_3
        \quad
        \targets{0}(e_1) := v_5
        \quad
        \targets{1}(e_1) := v_4
        \quad
        \targets{0}(e_2) := v_4
    \end{gather*}

    Much like regular graphs, it is easier to comprehend hypergraphs
    graphically.
    nodes are drawn as black dots and hyperedges as boxes.
    Tentacles from edges to nodes identify the (ordered) sources and targets.
    The hypergraph \(F\) is drawn as follows:
    \[
        \iltikzfig{graphs/blank-example}
    \]
    Note that the nodes themselves do not have any notion of ordering or
    directionality.
\end{exa}

The graphical notation for hypergraphs already looks very similar to that of
string diagrams.
However, the boxes are lacking labels for generators in a signature; to remedy
this we shall translate signatures themselves into hypergraphs.

\begin{defi}[Hypergraph signature~{\cite[Sec.\ 3.1]{bonchi2022string}}]
    For a given monoidal signature \(\signature\), its corresponding
    \emph{hypergraph signature} \(\hypsignature{\Sigma}\) is the hypergraph with
    a single node \(v\) and edges \(
    e_\phi \in \edges{\hypsignature{\Sigma}}{\dom[\phi]}{\cod[\phi]}
    \) for each \(\phi \in \Sigma\).
    For a hyperedge \(e_\phi\), \(i < \dom[\phi]\) and \(j < \cod[\phi]\), \(
    \sources{i}(e_\phi) = \targets{j}(e_\phi) = v
    \).
\end{defi}

A labelling is then a morphism from a hypergraph to a signature.
A morphism of hypergraphs \(\morph{f}{F}{G} \in \hyp\) consists of functions
\(\vertices{f}\) and \(\edges{f}{k}{l}\) for each \(k,l \in \nat\) preserving
sources and targets in the obvious way.

\begin{exa}\label{ex:labelled-hypergraph}
    Let \(\Sigma = \{\morph{\phi}{2}{1}, \morph{\psi}{1}{2}\}\) be a monoidal
    signature.
    The corresponding monoidal signature \(\hypsignature{\Sigma}\) is
    \begin{gather*}
        \vertices{\hypsignature{\Sigma}} := \{v_0\}
        \quad
        \edges{\hypsignature{\Sigma}}{2}{1} := \{e_\phi\}
        \quad
        \edges{\hypsignature{\Sigma}}{1}{2} := \{e_\psi\}
        \\
        \sources{0}(e_\phi) := v_0
        \quad
        \sources{1}(e_\phi) := v_0
        \quad
        \sources{0}(e_\psi) := v_0
        \quad
        \targets{0}(e_\phi) := v_0
        \quad
        \targets{0}(e_\psi) := v_0
        \quad
        \targets{1}(e_\psi) := v_0
    \end{gather*}
    and is drawn as follows:
    \[
        \iltikzfig{graphs/signature-example}
    \]
    Recall the hypergraph \(F\) from \autoref{ex:hypergraph}; one labelling
    \(\morph{\Gamma}{F}{\hypsignature{\Sigma}}\) could be defined as
    \begin{gather*}
        \vertices{\Gamma}(-) := v_0
        \quad
        \edges{\Gamma}{2}{1}(e_0) := e_\phi
        \quad
        \edges{\Gamma}{1}{2}(e_1) := e_\psi
        \quad
        \edges{\Gamma}{2}{1}(e_2) := e_\phi
    \end{gather*}
    We simply call the morphism \(\Gamma\) a \emph{labelled hypergraph} and
    draw it in the same manner as a regular hypergraph but with labelled edges.
    \[
        \iltikzfig{graphs/blank-example-labelled}
    \]
    Note that if there are multiple generators with the same arity and coarity
    in a signature, there may well be multiple valid labellings of a hypergraph.
\end{exa}

A category of labelled hypergraphs is defined using another piece of
categorical machinery.

\begin{defi}[Slice category~{\cite[pg.\ 36]{lawvere1963functorial}}]
    For a category \(\mathbf{C}\) and an object \(C \in \mathbf{C}\), the
    \emph{slice category} \(\mathbf{C} / C\) has objects the
    morphisms of \(\mathbf{C}\) with target \(C\), and has a morphism \(
    (\morph{f}{X}{C}) \to (\morph{f^\prime}{X^\prime}{C})
    \) if there is a morphism \(\morph{g}{X}{X^\prime} \in \mathbf{C}\) such
    that \(f^\prime \circ g = f\).
\end{defi}

\begin{defi}[Labelled hypergraph~{\cite[Sec.\ 3.1]{bonchi2022string}}]
    For a monoidal signature \(\Sigma\), let the category \(\hypsigma\) be
    defined as the slice category \(\hyp / \hypsignature{\Sigma}\).
\end{defi}

There is another difference between hypergraphs and string diagrams.
While (labelled) hypergraphs may have dangling nodes, they do not have
\emph{interfaces} specifying the order of inputs and outputs.
These can be provided using \emph{cospans}.

\begin{defiC}[{\cite[Def.\ 2.10]{bonchi2022string}}]\label{def:cospans}
    For a category \(\mathbf{C}\) with finite colimits, a \emph{cospan} from
    \(X \to Y\) is a pair of arrows \(X \to A \leftarrow Y\).
    A \emph{cospan morphism} \(
    (\cospan{X}[f]{A}[g]{Y}) \to (\cospan{X}[h]{B}[k]{Y})
    \) is a morphism \(\morph{\alpha}{A}{B} \in \mathbf{C}\) such that the
    following diagram commutes:
    \begin{center}
        \includestandalone{figures/category/diagrams/cospan-morphism}
    \end{center}

    Two cospans \(\cospan{X}{A}{Y}\) and \(\cospan{X}{B}{Y}\) are
    \emph{isomorphic} if there exists a morphism of cospans as above where
    \(\alpha\) is an isomorphism.
    Composition is by pushout:

    \begin{center}
        \includestandalone{figures/category/diagrams/cospan-composition}
    \end{center}

    The identity is \(X \xrightarrow{\id[X]} X \xleftarrow{\id[X]} X\).
    The category of cospans over \(\mathbf{C}\), denoted \(\csp{\mathbf{C}}\),
    has as objects the objects of \(\mathbf{C}\) and as morphisms the
    isomorphism classes of cospans.
    This category has monoidal product given by the coproduct in \(\mathbf{C}\)
    and has monoidal unit given by the initial object \(0 \in \mathbf{C}\).
\end{defiC}

The interfaces of a hypergraph can be specified as cospans by having the `legs'
of the cospan pick nodes in the graph at the apex.

\begin{defi}[Discrete hypergraph]
    A hypergraph is called \emph{discrete} if it has no edges.
\end{defi}

A discrete hypergraph \(F\) with \(|\vertices{F}| = n\) is written as
\(n\) when clear from context.
Morphisms from discrete hypergraphs to another hypergraph pick out the nodes
in the `inputs' and `outputs' of the latter.
To establish a category of cospans in which the legs are discrete hypergraphs,
and moreover to establish an \emph{ordering} on these discrete hypergraphs, we
move from using plain cospans of hypergraphs to cospans in which the legs are in
the image of some functor.

\begin{thmC}[{\cite[Thm.\ 3.6]{bonchi2022string}}]\label{thm:cospan-functor-category}
    Let \(\mathbb{X}\) be a PROP whose monoidal product is a coproduct,
    \(\mathbf{C}\) a category with finite colimits, and \(
    \morph{F}{\mathbb{X}}{\mathbf{C}}
    \) a coproduct-preserving functor.
    Then there exists a PROP \(\csp[F]{\mathbf{C}}\) whose arrows \(m \to n\)
    are isomorphism classes of \(\mathbf{C}\) cospans \(\cospan{Fm}{C}{Fn}\).
\end{thmC}
\begin{proof}
    Composition is by pushout.
    For cospans \(\cospan{Fm}{C}{Fn}\) and \(\cospan{Fp}{C}{Fq}\), their
    coproduct is given by \(\cospan{Fm + Fp}{C + D}{Fn + Fq}\):
    \(F(m + p) \cong Fm + Fp\) and \(F(n + q) \cong Fn + Fq\) because \(F\)
    preserves coproducts.
    Symmetries in \(\mathbb{X}\) are determined by the universal property of
    the coproduct; they are inherited by \(\csp[F]{\mathbf{C}}\) because \(F\)
    preserves coproducts.
\end{proof}

\begin{thmC}[{\cite[Thm.\ 3.8]{bonchi2022string}}]\label{thm:cospan-homomorphism}
    Let \(\mathbb{X}\) be a PROP whose monoidal product is a coproduct,
    \(\mathbf{C}\) a category with finite colimits, and
    \(\morph{F}{\mathbb{X}}{\mathbf{C}}\) a colimit-preserving functor.
    Then there is a homomorphism of PROPs \(
    \morph{\tilde{F}}{\csp{\mathbb{X}}}{\csp[F]{\mathbf{C}}}
    \) that sends \(\cospan{m}[f]{X}[g]{n}\) to \(\cospan{Fm}[Ff]{FX}[Fg]{Fn}\).
    If \(F\) is full and faithful, then \(\tilde{F}\) is faithful.
\end{thmC}
\begin{proof}
    Since \(F\) preserves finite colimits, it preserves composition (pushout)
    and monoidal product (coproduct); symmetries are clearly preserved.
    To show that \(\tilde{F}\) is faithful when \(F\) is full and faithful,
    suppose that \(
    \tilde{F}(\cospan{m}[f]{X}[g]{n})
    =
    \tilde{F}(\cospan{m}[f^\prime]{X}[g^\prime]{n})
    \).
    This gives us the following commutative diagram in \(\mathbf{C}\):
    \begin{center}
        \begin{tikzcd}
            & FX \arrow{dd}{\phi} & \\
            Fm \arrow{ur}{Ff} \arrow{dr}{Ff^\prime} & &
            Fn \arrow{ul}{Fg} \arrow{dl}{Fg^\prime} \\
            & FY &
        \end{tikzcd}
    \end{center}
    where \(\phi\) is an isomorphism as objects in \(\csp[F]{\mathbf{C}}\) are
    isomorphism classes of cospans.
    As \(F\) is full, there exists \(\morph{\psi}{X}{Y}\) such that
    \(F\psi = \phi\).
    As \(F\) is faithful, \(\psi\) is an isomorphism; this means
    \(\cospan{m}[f]{X}[g]{n}\) and \(\cospan{m}[f^\prime]{X}[g^\prime]{n}\) are
    equal in \(\csp{\mathbb{X}}\), so \(\tilde{F}\) is faithful.
\end{proof}

In our setting, this functor will be from the category of finite sets.

\begin{defi}
    Let \(\finset\) be the PROP with morphisms \(m \to n\) the functions
    between finite sets \([m] \to [n]\).
\end{defi}

\begin{defiC}[{\cite[Sec.\ 3.2]{bonchi2022string}}]\label{def:discrete-functor}
    Let \(\morph{D}{\finset}{\hypsigma}\) be a faithful, coproduct-preserving
    functor that sends each object \([n] \in \finset\) to the discrete hypergraph
    \(n \in \hypsigma\) and each morphism to the induced homomorphism of
    discrete hypergraphs.
\end{defiC}

\(D\) (for `discrete') maps a finite set
\([n] := \{0, 1, \cdots, n-1\}\) to the discrete hypergraph with \(n\) nodes.
The functor induces an underlying function between sets
\([n] \to \vertices{n}\); this establishes an ordering on the interfaces.

By combining the above theorems and definitions we obtain the category
\(\cspdhyp\) with objects \emph{discrete cospans of hypergraphs}.
Since the legs of each cospan are discrete hypergraphs with \(n\) nodes, this
is another PROP.

\begin{exa}
    Recall the labelled hypergraph \(F\) from \autoref{ex:labelled-hypergraph}.
    We assign interfaces to it as the cospan \(\cospan{3}[f]{F}[g]{3}\), where
    \begin{gather*}
        f(D0) = v_0 \quad f(D1) = v_1 \quad
        g(D0) = v_4 \quad g(D1) = v_1 \quad g(D2) = v_4
    \end{gather*}
    Interfaces of the hypergraph \(F\) are drawn to its left and right, with
    numbers illustrating the action of the cospan maps.
    For clarity we number the outputs after the inputs.
    \[
        \iltikzfig{graphs/example-interfaces}
    \]
    Composition in \(\cspdhyp\) is by pushout; effectively the nodes in the
    output of the first cospan are `glued together' with the inputs of the
    second.
    \begin{gather*}
        \iltikzfig{graphs/example-interfaces}
        \seq
        \iltikzfig{graphs/example-2-interfaces}
        \\[1em]=
        \iltikzfig{graphs/example-composition}
    \end{gather*}
    Tensor in \(\cspdhyp\) is by direct product; putting cospans on top of each
    other.
    \begin{gather*}
        \iltikzfig{graphs/example-interfaces}
        \tensor
        \iltikzfig{graphs/example-2-interfaces}
        \\[1em]=
        \iltikzfig{graphs/example-tensor}
    \end{gather*}
\end{exa}

\subsection{Frobenius terms as hypergraphs}

The main result of \cite{bonchi2022string} is that the category of
`Frobenius terms' \(\smcsigma + \frob\) is in correspondence with \(\cspdhyp\):
every morphism in the former (modulo the Frobenius equations in
Figures~\ref{fig:monoid-equations}, \ref{fig:comonoid-equations},
and~\ref{fig:frobenius-equations}) corresponds to exactly one isomorphism class
of cospans in \(\cspdhyp\), and vice versa.
To show this, we make use of another type of category known as a
\emph{hypergraph category}

\begin{defi}[Hypergraph category~{\cite[Def.\ 2.12]{fong2019hypergraph}}]
    A \emph{hypergraph category} is a symmetric monoidal category in which
    each object \(X\) has a special commutative Frobenius structure in the sense
    of \autoref{ex:frobenius} satisfying the equations in
    \autoref{fig:hypergraph-category}.
\end{defi}

The prototypical hypergraph category over a signature is the
freely generated symmetric monoidal category \(\smcsigma\) augmented with the
Frobenius generators and equations in \(\frob\).

\begin{cor}
    \(\smcsigma + \frob\) is a hypergraph category.
\end{cor}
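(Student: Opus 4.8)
The plan is to exhibit, for every object of \(\smcsigma + \frob\), a special commutative Frobenius structure and then to verify that these structures satisfy the coherence equations of \autoref{fig:hypergraph-category}. Since \(\smcsigma + \frob\) is a PROP, its objects are the natural numbers, generated under \(\tensor\) by the single object \(1\); it therefore suffices to fix the structure on \(1\) and extend it coherently to all composite objects.

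First I would take the Frobenius structure on the generating object \(1\) to be the one provided directly by the generators and equations of \(\frob\): the multiplication \(\cmerge[1]\), unit \(\cinit[1]\), comultiplication \(\ccopy[1]\), and counit \(\cdel[1]\). By the definition of \(\frob\) in \autoref{ex:frobenius}, these satisfy the commutative monoid, commutative comonoid, and special Frobenius equations of Figures~\ref{fig:monoid-equations}, \ref{fig:comonoid-equations}, and~\ref{fig:frobenius-equations}, so \(1\) is a special commutative Frobenius object.

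Next I would extend the structure to an arbitrary object \(n\) by induction. The monoidal unit \(0\) carries the trivial (empty) structure, and the structure on \(m + n\) is assembled from those on \(m\) and \(n\) by tensoring the corresponding maps and interleaving the wires with the symmetry of the SMC: for instance \(\ccopy[m+n]\) is \(\ccopy[m] \tensor \ccopy[n]\) post-composed with the permutation reordering the \(2m + 2n\) outputs into \((m+n) + (m+n)\), dually for \(\cmerge[m+n]\), with \(\cinit[m+n] = \cinit[m] \tensor \cinit[n]\) and \(\cdel[m+n] = \cdel[m] \tensor \cdel[n]\). With these definitions the four coherence equations of \autoref{fig:hypergraph-category}, asserting that the (co)unit and (co)multiplication respect \(\tensor\), hold essentially by construction.

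The remaining work, and the main obstacle, is to confirm that the inductively defined structure on each composite object is again a special commutative Frobenius algebra. Here I would verify the monoid, comonoid, and special Frobenius axioms for \(m + n\) by reducing each to the corresponding axioms on \(m\) and \(n\) together with the naturality and coherence of the symmetry. The only delicate point is the bookkeeping needed to show that the interleaving permutations cancel correctly, which is discharged using the naturality and hexagon equations of \autoref{fig:smc-axioms}. Once this is checked, every object of \(\smcsigma + \frob\) carries a special commutative Frobenius structure satisfying \autoref{fig:hypergraph-category}, so \(\smcsigma + \frob\) is a hypergraph category.
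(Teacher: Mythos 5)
Your proposal is correct and matches the reasoning the paper leaves implicit: the corollary is stated without proof, as it follows directly from the fact that \(\smcsigma + \frob\) is by construction the free SMC on \(\Sigma\) together with the Frobenius generators and equations on the object \(1\), extended to all objects in the standard interleaved-tensor way you describe. Your detailed verification of the extension to arbitrary \(n\) and of the coherence equations is the standard argument and fills in exactly what the paper takes for granted.
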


The term `hypergraph category' should not be confused with the
category of hypergraphs \(\hyp\), which is not itself a hypergraph category.
However, the category of \emph{cospans} of hypergraphs \emph{is} such a category.

\begin{figure}
    \centering
    \iltikzfig{strings/structure/hypergraph/monoid-resp-lhs}[obj1=x,obj2=y,tensor=+]
    \(=\)
    \iltikzfig{strings/structure/hypergraph/monoid-resp-rhs}[obj1=x,obj2=y]
    \quad
    \iltikzfig{strings/structure/hypergraph/comonoid-resp-lhs}[obj1=x,obj2=y,tensor=+]
    \(=\)
    \iltikzfig{strings/structure/hypergraph/comonoid-resp-rhs}[obj1=x,obj2=y]
    \\[1.5em]
    \iltikzfig{strings/structure/hypergraph/unit-resp-lhs}[obj1=x,obj2=y,tensor=+]
    \(=\)
    \iltikzfig{strings/structure/hypergraph/unit-resp-rhs}[obj1=x,obj2=y]
    \quad
    \iltikzfig{strings/structure/hypergraph/counit-resp-lhs}[obj1=x,obj2=y,tensor=+]
    \(=\)
    \iltikzfig{strings/structure/hypergraph/counit-resp-rhs}[obj1=x,obj2=y]
    \caption{
        Equations \(\equations[\mathbf{Hyp}]\) of a
        \emph{hypergraph category}, as well as those in
        Figures~\ref{fig:monoid-equations}, \ref{fig:comonoid-equations},
        and~\ref{fig:frobenius-equations}.
    }
    \label{fig:hypergraph-category}
\end{figure}

\begin{prop}\label{prop:frobenius-map}
    \(\cspdhyp\) is a hypergraph category.
\end{prop}
\begin{proof}
    A Frobenius structure can be defined on \(\cspdhyp\) for each
    \(n \in \nat\)
    as follows:
    \begin{gather*}
        \iltikzfig{strings/structure/monoid/merge}[colour=white,obj=n]
        :=
        \cospan{n + n}{n}{n}
        \quad
        \iltikzfig{strings/structure/monoid/init}[colour=white,obj=n]
        :=
        \cospan{0}{n}{n}
        \\
        \iltikzfig{strings/structure/comonoid/copy}[colour=white,obj=n]
        :=
        \cospan{n}{n}{n+n}
        \quad
        \iltikzfig{strings/structure/comonoid/discard}[colour=white,obj=n]
        :=
        \cospan{n}{n}{0}
        \qedhere
    \end{gather*}
\end{proof}

We make use of the components of the above definition in order to define a
PROP morphism from \(\smcsigma + \frob\) into \(\cspdhyp\).
Since \(\smc{\Sigma}\) is freely generated, these PROP morphisms can be defined
solely on generators.

\begin{defiC}[{\cite[Sec.\ 3.2]{bonchi2022string}}]\label{def:hyp-morphisms}
    Let \(\morph{\termtohyp{\Sigma}}{\smc{\Sigma}}{\cspdhyp}\) be a PROP
    morphism defined on generators as \(
    \termtohyp[\iltikzfig{strings/category/generator}[box=\phi,colour=white,dom=m,cod=n]]{\Sigma}
    :=
    \cospan{m}{
        \iltikzfig{graphs/terms/generator}
    }{n}.
    \)
\end{defiC}

\begin{defi}
    Let \(\morph{\frobtohyp{\Sigma}}{\frob}{\cspdhyp}\) be a PROP morphism
    defined as in \autoref{prop:frobenius-map}.
\end{defi}

\begin{defi}
    Let the PROP morphism \(
    \morph{\termandfrobtohypsigma}{\smc{\Sigma} + \frob}{\cspdhyp}
    \) be the copairing of \(\termtohyp{\Sigma}\) and \(\frobtohyp{\Sigma}\).
\end{defi}

The PROP morphism \(\termandfrobtohypsigma\) maps Frobenius terms to cospans
of hypergraphs.
To show that \(\cspdhyp\) is suitable for reasoning about Frobenius terms, this
mapping must be full and faithful; i.e.\ it must be a one-to-one mapping between
Frobenius terms and isomorphism classes of cospans.

\begin{thmC}[{\cite[Thm.\ 4.1]{bonchi2022string}}]\label{thm:isomorphism-smcfrob-cospans}
    There is an isomorphism of PROPs \(\smcsigma + \frob \cong \cspdhyp\).
\end{thmC}
\begin{proof}[Proof (Sketch)]
    Since \(\smcsigma + \frob\) is a coproduct in the category of PROPs, this
    can be shown by proving that \(\cspdhyp\) satisfies the universal property
    of the coproduct: given a coloured PROP \(\mathbb{A}\) and PROP morphisms
    \(\smcsigma \to \mathbb{A}\) and \(\frob \to \mathbb{A}\), there exists
    a unique morphism \(\morph{u}{\cspdhyp}{\mathbb{A}}\) as below:
    \begin{center}
        \includestandalone{figures/graphs/isomorphism/coproduct-iso}
    \end{center}
    All the PROP morphisms involved are identity-on-objects, so all that is
    required to show the existence of \(u\) is to show that any morphism in
    \(\cspdchyp\) can be expressed as a composition of components either in the
    image of \(\termtohypsigma\) or \(\frobtohypsigma\).
\end{proof}

Rather than giving the formal construction required for the above proof, it is
more instructive to provide a concrete example.

\begin{exa}\label{ex:frobenius-composite}
    Consider the following term and its cospan interpretation:
    \begin{center}
        \iltikzfig{graphs/frobenius/correspondence/original-term}
        \qquad
        \begin{tikzcd}
            \iltikzfig{graphs/frobenius/correspondence/inputs}
            \arrow{r}
            &
            \iltikzfig{graphs/frobenius/correspondence/example}
            &
            \arrow{l}
            \iltikzfig{graphs/frobenius/correspondence/outputs}
        \end{tikzcd}
    \end{center}
    This cospan can be assembled into the form shown in
    \autoref{fig:frobenius-composite}; by following the vertex maps, one can
    verify that this is indeed isomorphic to the original cospan.
    The outermost components correspond to terms in \(\frob\) and the
    innermost to a term in \(\smcsigma\).
    \[
        \iltikzfig{graphs/frobenius/correspondence/term}
    \]
    This term is equal to the original term by the Frobenius equations.
\end{exa}

\begin{figure}
    \centering
    \begin{tikzcd}[column sep=tiny]
        \scalebox{0.7}{\iltikzfig{graphs/frobenius/correspondence/l-inputs}}
        \arrow{r}
        &
        \scalebox{0.7}{\iltikzfig{graphs/frobenius/correspondence/l}}
        &
        \arrow{l}
        \scalebox{0.7}{\iltikzfig{graphs/frobenius/correspondence/l-outputs}}
    \end{tikzcd}
    \hspace{-1em}
    \(\seq\)
    \hspace{-1em}
    \begin{tikzcd}[column sep=tiny, row sep=0.1cm]
        \scalebox{0.7}{\iltikzfig{graphs/frobenius/correspondence/e-inputs}}
        \arrow{r}
        &
        \scalebox{0.7}{\iltikzfig{graphs/frobenius/correspondence/n}}
        &
        \arrow{l}
        \scalebox{0.7}{\iltikzfig{graphs/frobenius/correspondence/e-outputs}}
        \\
        \scalebox{0.7}{\iltikzfig{graphs/frobenius/correspondence/e-inputs}}
        \arrow{r}
        &
        \scalebox{0.7}{\iltikzfig{graphs/frobenius/correspondence/e}}
        &
        \arrow{l}
        \scalebox{0.7}{\iltikzfig{graphs/frobenius/correspondence/e-outputs}}
    \end{tikzcd}
    \hspace{-1em}
    \(\seq\)
    \hspace{-1em}
    \begin{tikzcd}[column sep=tiny]
        \scalebox{0.7}{\iltikzfig{graphs/frobenius/correspondence/r-inputs}}
        \arrow{r}
        &
        \scalebox{0.7}{\iltikzfig{graphs/frobenius/correspondence/r}}
        &
        \arrow{l}
        \scalebox{0.7}{\iltikzfig{graphs/frobenius/correspondence/r-outputs}}
    \end{tikzcd}
    \caption{
        The cospan of \autoref{ex:frobenius-composite} in the form of
        \autoref{thm:isomorphism-smcfrob-cospans}
    }
    \label{fig:frobenius-composite}
\end{figure}

This result means that any two terms in \(\smcsigma + \frob\) which are equal by
the Frobenius equations can be mapped to isomorphic cospans of hypergraphs.

\begin{exa}
    Recall the following terms in \(\frob\) from \autoref{ex:frobenius}, which
    we showed were equal by the Frobenius equations.
    \[
        \iltikzfig{strings/structure/frobenius/example-2}
        =
        \iltikzfig{strings/structure/frobenius/example-3}
    \]
    By the isomorphism of \autoref{thm:isomorphism-smcfrob-cospans}, these two
    terms map to the same cospan of hypergraphs:
    \[
        \iltikzfig{graphs/example-frobenius-collapse}
    \]
    All of the Frobenius structure collapses into one vertex, much like when we
    considered the correspondence between Frobenius terms and finite sets.
\end{exa}

\section{Hypergraphs for traced categories}

In the previous section we summarised the results of \cite{bonchi2022string}
showing that every cospan of hypergraphs in \(\cspdhyp\) corresponds to
a single Frobenius term.
In this paper we are concerned with \emph{traced} terms, a more restrictive
setting: we lack the ability to fork and join wires, and enforce that outputs of
boxes may only be connected to inputs of boxes.

\begin{defi}[Symmetric traced monoidal category {\cite[Sec.\ 2]{joyal1996traced}}]
    A \emph{symmetric traced monoidal category} (STMC) is a symmetric monoidal
    category \(\mathbf{C}\) equipped with a family of functions \(
    \morph{
        \trace{X}[A][B]{-}
    }{
        \mathbf{C}(X \tensor A, X \tensor B)
    }{
        \mathbf{C}(A, B)
    }
    \) for any objects \(A,B\) and \(X\) satisfying the axioms of STMCs:
    \begin{align*}
        \id[X] \tensor f \seq g \seq \id[X] \tensor h
                                                                & = f \seq \trace{X}[B][C]{g}
                                                                &
                                                                & \text{(Tightening)}
        \\
        \trace{X}[A][B]{f \seq g \tensor \id[B]}
                                                                & =            \trace{Y}[A][B]{g \tensor \id[A] \seq f}
                                                                &
                                                                & \text{(Sliding)}
        \\
        \trace{X}[A][B]{\trace{Y}[X \tensor A][X \tensor B]{f}} & =            \trace{X \tensor Y}[A][B]{f}
                                                                &
                                                                & \text{(Vanishing)}
        \\
        \trace{X}[A][B]{f} \tensor g
                                                                & = \trace{X}[A \tensor C][B \tensor D]{f \tensor g}
                                                                &
                                                                & \text{(Superposing)}
        \\
        \trace{Y}[A][B]{g \tensor \id[A] \seq f}
                                                                & = \id[X]\
                                                                &
                                                                & \text{(Yanking)}
    \end{align*}
\end{defi}

\begin{figure}
    \centering
    \iltikzfig{strings/traced/naturality-lhs}[dom=m,cod=q,trace=x, box3=h]
    \(=\)
    \iltikzfig{strings/traced/naturality-rhs}[dom=m,cod=q,trace=x, box3=h]
    \\[1.5em]
    \iltikzfig{strings/traced/sliding-lhs}[dom=m,cod=n,trace=x]
    \(=\)
    \iltikzfig{strings/traced/sliding-rhs}[dom=m,cod=n,trace=y]
    \\[1.5em]
    \iltikzfig{strings/traced/vanishing-lhs}[dom=m,cod=n,trace1=x,trace2=y]
    \(=\)
    \iltikzfig{strings/traced/vanishing-prop-rhs}[dom=m,cod=n,trace1=x,trace2=y]
    \\[1.5em]
    \iltikzfig{strings/traced/superposing-lhs}[dom1=m,dom2=p,cod1=n,cod2=q,trace=x]
    \(=\)
    \iltikzfig{strings/traced/superposing-rhs}[dom1=m,dom2=p,cod1=n,cod2=q,trace=x]
    \qquad
    \iltikzfig{strings/traced/yanking-lhs}[obj=x]
    \(=\)
    \iltikzfig{strings/traced/yanking-rhs}[obj=x]
    \caption{
        Equations that hold in any \emph{symmetric traced monoidal category},
        expressed string diagrammatically in the language of PROPs
    }
    \label{fig:stmc-axioms}
\end{figure}

The trace is represented string diagrammatically by joining output
wires to input wires:
\[
    \trace{x}[m][n]{\iltikzfig{strings/traced/trace-lhs}[box=f,colour=white,trace=x,dom=m,cod=n]}
    \stackrel{\text{def}}{=}
    \iltikzfig{strings/traced/trace-rhs}[box=f,colour=white,dom=m,cod=n]
\]
When drawn in this manner the equations of STMCs can be elegantly expressed in
the language of PROPs as shown in \autoref{fig:stmc-axioms}.
Much like regular STMCs, this notation is sound and complete.

\begin{defi}
    For a monoidal signature \(\generators\), let \(\stmcsigma\) be the STMC
    freely generated over \(\generators\).
    For a set of equations \(\equations\), let \(\stmc{\generators,\equations}\)
    be defined as \(\stmcsigma / \equations\).
\end{defi}

\begin{thmC}[{\cite[Cor.\ 6.14]{kissinger2014abstract}}]
    Given two terms \(f,g \in \stmcsigma\), \(f = g\) by axioms of STMCs if and
    only if their string diagrams are isomorphic.
\end{thmC}

\subsection{Compact closed categories}

For \(\cspdhyp\) to be suitable for reasoning about traced terms, it must
necessarily have a trace.
Fortunately, the links between the trace and Frobenius structure is very
well-known, and can be expressed using the notion of another type of category.

\begin{defi}[Compact closed category~{\cite[Sec.\ 1]{kelly1980coherence}}]
    A \emph{compact closed category} (CCC) is a symmetric monoidal category in
    which every object \(X\) has a \emph{dual} \(\dual{X}\) equipped with
    morphisms called the \emph{unit} \(\morph{\eta}{I}{\dual{X} \tensor X}\) and
    the \emph{counit} \(\morph{\varepsilon}{X \tensor \dual{X}}{I}\), satisfying
    the `snake equations':
    \(\varepsilon \tensor \id[\dual{X}] \seq \id[X] \tensor \eta = \id[X]\)
    and
    \(\id[X] \tensor \eta \seq \varepsilon \tensor \id[\dual{X}] = \id[\dual{X}]\)
\end{defi}

\begin{figure}
    \centering
    \iltikzfig{strings/compact-closed/snake-l}[obj=x]
    \(=\)
    \iltikzfig{strings/compact-closed/snake-c}[obj=x]
    \quad
    \iltikzfig{strings/compact-closed/snake-r}[obj=x]
    \(=\)
    \iltikzfig{strings/compact-closed/snake-c}[obj=x]
    \caption{
        Equations that hold in any \emph{compact closed category},
        expressed string diagrammatically in the language of PROPs
    }
    \label{fig:ccc-axioms}
\end{figure}

When thinking string diagrammatically, a dual to an object can be thought of as
a wire `running backwards'.
However, in this paper we are only concerned with categories in which objects
are \emph{self-dual}: any object \(X\) is equal to \(\dual{X}\).
For the self-dual case, the unit and counit are drawn string diagrammatically as
\(\iltikzfig{strings/compact-closed/cup-self-dual}[colour=white,obj=n]
\) (`cup') and \(
\iltikzfig{strings/compact-closed/cap-self-dual}[colour=white,obj=n]
\) (`cap') respectively.
When drawing the snake equations string using the cup and cap as in
\autoref{fig:ccc-axioms}, the reasoning behind their name becomes apparent.

This graphical notation is suggestive of links between traced and compact closed
categories, and this is no accident.

\begin{propC}[{\cite[Prop.\ 3.1]{joyal1996traced}}]
    \label{prop:canonical-trace}
    Any compact closed category has a trace.
\end{propC}

This trace is called the \emph{canonical trace}; for the self-dual case, it
is constructed as follows:
\[\trace{x}[m][n]{\iltikzfig{strings/category/f-2-2}[box=f,colour=white,dom1=x,dom2=m,cod1=x,cod2=n]}
    \coloneqq
    \iltikzfig{strings/compact-closed/canonical-trace-self-dual}
\]

So to use the Frobenius results in the traced setting, we just need to show that
\(\cspdhyp\) is compact closed; this is also a well-known result.

\begin{lemC}[{\cite[Prop.\ 2.8]{rosebrugh2005generic}}]
    Every hypergraph category is self-dual compact closed.
\end{lemC}
\begin{proof}
    The cup is defined as \(
    \iltikzfig{strings/compact-closed/cup-self-dual}[colour=white,obj=n]
    :=
    \iltikzfig{strings/structure/frobenius/cup}[obj=n]
    \) and the cap as \(
    \iltikzfig{strings/compact-closed/cap-self-dual}[colour=white,obj=n]
    :=
    \iltikzfig{strings/structure/frobenius/cap}[obj=n]
    \).
    The snake equations follow by applying the Frobenius equation and unitality:
    \begin{gather*}
        \iltikzfig{strings/structure/frobenius/snake-1-0}
        =
        \iltikzfig{strings/structure/frobenius/snake-1-1}
        =
        \iltikzfig{strings/structure/frobenius/snake-1-2}
        \qquad
        \iltikzfig{strings/structure/frobenius/snake-2-0}
        =
        \iltikzfig{strings/structure/frobenius/snake-2-1}
        =
        \iltikzfig{strings/structure/frobenius/snake-2-2}
        \qedhere
    \end{gather*}
\end{proof}

\begin{cor}
    \(\cspdhyp\) is compact closed.
\end{cor}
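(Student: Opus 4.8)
The plan is to read this off directly from the two immediately preceding results. By \autoref{prop:frobenius-map} the category \(\cspdhyp\) carries a hypergraph structure, and the lemma of \cite{rosebrugh2005generic} shows that every hypergraph category is self-dual compact closed; composing these two facts at once yields that \(\cspdhyp\) is compact closed. There is no genuine obstacle here, since the substantive work — verifying that the four cospans defined for each \(n \in \nat\) in \autoref{prop:frobenius-map} really do satisfy the monoid, comonoid and Frobenius equations — has already been discharged, and the passage from a Frobenius structure to a compact closed one is exactly the content of the cited lemma.

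For concreteness I would make the induced structure explicit, following the recipe in the proof of that lemma. The self-dual cup on the object \(n\) is the unit \(\cospan{0}{n}{n}\) followed by the comultiplication \(\cospan{n}{n}{n+n}\); composition by pushout glues the two middle copies of \(n\) into one, yielding the cup \(\cospan{0}{n}{n+n}\), and dually the cap is \(\cospan{n+n}{n}{0}\). One could optionally check the two snake equations directly at the level of cospans — each side composes by pushout to a cospan whose apex is a single copy of \(n\) attached to both legs by the identity — but this verification is already subsumed by the abstract lemma and so may be omitted.

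Finally, I would orient the reader by noting that this is precisely the compact closed structure whose canonical trace (\autoref{prop:canonical-trace}) underlies the remainder of the development: the trace on \(\cspdhyp\) is obtained by bending wires with this cup and cap, which is exactly the closed-loop construction singled out as the litmus test in the introduction.
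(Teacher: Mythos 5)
Your proposal is correct and matches the paper exactly: the corollary is stated without proof precisely because it follows immediately by composing \autoref{prop:frobenius-map} with the preceding lemma that every hypergraph category is self-dual compact closed. Your explicit description of the induced cup \(\cospan{0}{n}{n+n}\) and cap \(\cospan{n+n}{n}{0}\) is accurate but, as you note, subsumed by the abstract lemma.
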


\begin{cor}
    \(\cspdhyp\) has a trace.
\end{cor}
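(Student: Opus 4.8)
The plan is to obtain the trace for free from the compact closed structure rather than constructing it by hand. The corollary immediately preceding this one establishes that \(\cspdhyp\) is compact closed, and \autoref{prop:canonical-trace} states that every compact closed category carries a canonical trace satisfying the STMC axioms. Chaining these two facts gives the result at once: the trace on \(\cspdhyp\) is taken to be the canonical trace induced by its (self-dual) compact closed structure.

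To make the statement concrete I would unfold what this canonical trace computes. Since \(\cspdhyp\) is a PROP whose objects are discrete hypergraphs indexed by natural numbers, the self-dual compact closed structure supplies, at each object \(n\), a cup and a cap built from the Frobenius maps of \autoref{prop:frobenius-map}: following the construction in the preceding lemma, the cup is the cospan \(\cospan{0}{n}{n+n}\) and the cap is \(\cospan{n+n}{n}{0}\). The canonical trace of a cospan \(f \colon X + A \to X + B\) is then \((\text{cup}_X \tensor \id[A]) \seq (\id[X] \tensor f) \seq (\text{cap}_X \tensor \id[B])\); computing the two pushouts shows that this glues the \(X\)-indexed inputs of \(f\) to its \(X\)-indexed outputs, matching the intuition that tracing a hypergraph identifies the traced interface vertices.

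I do not anticipate any real obstacle: the axioms of \autoref{fig:stmc-axioms} need not be re-verified, as they hold automatically for the canonical trace of any compact closed category by \autoref{prop:canonical-trace}. The only point worth remarking is uniformity — because \(\cspdhyp\) is a PROP, the compact closed structure and hence the trace are available coherently at every object — after which the corollary follows with no further work.
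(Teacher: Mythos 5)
Your proposal is correct and matches the paper's intended argument exactly: the corollary is stated without proof precisely because it follows immediately by chaining the preceding corollary (that \(\cspdhyp\) is compact closed, via its hypergraph/Frobenius structure) with \autoref{prop:canonical-trace}. Your additional unfolding of the cup \(\cospan{0}{n}{n+n}\), the cap \(\cospan{n+n}{n}{0}\), and the gluing behaviour of the canonical trace is consistent with the paper's constructions and adds only harmless detail.
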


A STMC freely generated over a signature faithfully embeds into a CCC
generated over the same signature, mapping the trace in the former to the
canonical trace in the latter.

\begin{lem}\label{lem:stmc-subcat-hypc}
    \(\stmcsigma\) is a subcategory of \(\smcsigma + \frob\).
\end{lem}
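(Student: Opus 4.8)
The plan is to exhibit an identity-on-objects faithful functor $\iota \colon \stmcsigma \to \smcsigma + \frob$. Since both categories are PROPs they agree on objects (the natural numbers), so the functor is automatically injective on objects, and being a subcategory reduces precisely to the faithfulness of $\iota$.

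\emph{Constructing $\iota$.} By the preceding corollary $\smcsigma + \frob$ is a hypergraph category, hence self-dual compact closed, and so by \autoref{prop:canonical-trace} it carries the canonical trace and is itself an STMC. I would then invoke the universal property of $\stmcsigma$ as the STMC freely generated over $\generators$: there is a unique strict symmetric traced monoidal functor $\iota$ that is the identity on objects and sends each generator $\phi \in \generators$ to itself. By construction $\iota$ preserves composition, tensor and symmetry strictly, and sends the trace of $\stmcsigma$ to the canonical trace of $\smcsigma + \frob$, i.e.\ to the cup--cap encoding of \autoref{prop:canonical-trace}. This half of the argument is essentially formal.

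\emph{Faithfulness.} This is the substantive content: one must show that if two traced terms become equal once their traces are unfolded into the Frobenius (compact closed) structure, then they were already equal under the axioms of STMCs alone. The cleanest route is to recognise the compact closed structure on $\smcsigma + \frob$ as faithfully containing the free self-dual compact closed category on $\generators$, and to appeal to the standard fact that the free traced category embeds fully faithfully into the free compact closed category on the same signature. Concretely, the $\mathrm{Int}$ (or $\mathcal{G}$) construction of Joyal--Street--Verity exhibits the free compact closed category on $\generators$ as $\mathrm{Int}(\stmcsigma)$, and the unit $\stmcsigma \to \mathrm{Int}(\stmcsigma)$ of the associated adjunction is full and faithful. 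Since $\iota$ factors through this unit, it is faithful, and the inclusion $\stmcsigma \hookrightarrow \smcsigma + \frob$ follows.

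\emph{Main obstacle.} The existence of $\iota$ is routine, so the crux is entirely faithfulness: ruling out that the extra Frobenius and compact closed equations identify traced terms that the STMC axioms keep distinct. Intuitively the danger is that cups and caps permit `bends' that have no traced counterpart, so any direct argument must confirm that, on the image of $\iota$, every such bend can be straightened back using only the naturality, sliding, vanishing, superposing and yanking axioms of \autoref{fig:stmc-axioms}. Making this precise -- rather than merely producing $\iota$ -- is where the real work lies, and it is exactly this that the cited full faithfulness of the $\mathrm{Int}$ unit supplies; an alternative, more self-contained route would be to construct a retraction on hom-sets that recovers the traced term from its Frobenius encoding, which amounts to the same coherence fact.
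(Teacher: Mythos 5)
Your construction of \(\iota\) is essentially the paper's: both categories are freely generated, \(\smcsigma + \frob\) is a hypergraph category and hence carries the canonical trace of \autoref{prop:canonical-trace}, and the universal property of \(\stmcsigma\) gives the identity-on-objects traced functor sending generators to generators. Where you diverge is faithfulness, and here your argument has a genuine gap. You factor \(\iota\) as \(\stmcsigma \xrightarrow{N} \mathrm{Int}(\stmcsigma) \xrightarrow{J} \smcsigma + \frob\) and cite the Joyal--Street--Verity result that the unit \(N\) is fully faithful; that much is fine. But faithfulness of the composite also requires faithfulness of \(J\) (on the image of \(N\)), and your justification for that --- ``recognise the compact closed structure on \(\smcsigma+\frob\) as faithfully containing the free self-dual compact closed category on \(\generators\)'' --- is an assertion, not an argument. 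It is precisely where the difficulty of the lemma lives: \(\smcsigma + \frob\) imposes the \emph{special commutative Frobenius} equations (\(\frobleqn\), \(\frobreqn\), \(\frobspeceqn\), commutativity, unitality), which are strictly stronger than compact closure, and one must rule out that these extra equations identify distinct cup--cap encodings of traced terms. There is also a mismatch you pass over silently: \(\mathrm{Int}(\stmcsigma)\) is not self-dual, so an additional quotient to the free \emph{self-dual} compact closed category intervenes, and its faithfulness is yet another unproven step. As written, the sentence ``Since \(\iota\) factors through this unit, it is faithful'' is not a valid inference --- factoring through a faithful functor says nothing about the composite unless the second leg is also faithful, which is the very thing left unestablished.

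By contrast, the paper's proof (itself terse) attacks exactly this point directly: it observes that the Frobenius equations cannot be applied to the subterms arising from the canonical-trace encoding of a traced term, so no two morphisms in the image are merged. Whether or not you find that argument fully satisfying, it at least addresses the right obstruction, whereas your route relocates the obstruction into an uncited intermediate claim of comparable difficulty. To repair your proof you would need either to prove that the free self-dual compact closed category on \(\generators\) embeds faithfully into \(\smcsigma + \frob\) (e.g.\ combinatorially, via the isomorphism \(\smcsigma + \frob \cong \cspdhyp\) of \cite{bonchi2022string}), or to carry out your suggested retraction on hom-sets explicitly --- either of which is the real content of the lemma.
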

\begin{proof}
    Since \(\hypsigma\) is compact closed, it has a (canonical) trace.
    For \(\stmcsigma\) to be a subcategory of \(\hypcsigma\), every morphism
    of the former must also be a morphism on the latter.
    Since the two categories are freely generated (with the trace constructed
    through the Frobenius generators in the latter), all that remains is to
    check that every morphism in \(\stmcsigma\) is a unique morphism in
    \(\hypcsigma\), i.e.\ the equations of \(\frob\) do not merge any together.
    This is trivial since the equations do not apply to the construction of the
    canonical trace.
\end{proof}

\begin{defi}
    Let \(\morph{\tracedtosymandfrobsigma}{\stmcsigma}{\smcsigma + \frob}\) be
    the inclusion functor of \autoref{lem:stmc-subcat-hypc}.
\end{defi}

\begin{cor}
    \(\tracedtosymandfrobsigma\) is faithful.
\end{cor}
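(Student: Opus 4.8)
The plan is to observe that this is an immediate consequence of \autoref{lem:stmc-subcat-hypc}, since faithfulness is automatic for the inclusion functor of a subcategory. Recall that a functor is faithful precisely when, for every pair of objects \(A, B\), the induced map on hom-sets \(\stmcsigma(A, B) \to (\smcsigma + \frob)(A, B)\) is injective. For an inclusion of a subcategory this map is nothing but the subset inclusion \(\stmcsigma(A, B) \hookrightarrow (\smcsigma + \frob)(A, B)\), which is injective by construction. So the real work is simply to extract the right statement from the lemma.

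First I would make explicit what \autoref{lem:stmc-subcat-hypc} actually provides. The lemma establishes not merely that every object and morphism of \(\stmcsigma\) appears in \(\smcsigma + \frob\) --- which would only give that \(\tracedtosymandfrobsigma\) is well defined --- but the stronger statement that the equations of \(\equations[\frob]\) do not identify any two distinct morphisms coming from \(\stmcsigma\). This is the genuine content: no additional collapsing occurs when passing from the freely generated STMC to the larger hypergraph category, because those equations never interact with the canonical-trace construction. Equivalently, distinct traced terms remain distinct once the trace is reinterpreted as the canonical trace built from the Frobenius cups and caps.

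I would then conclude by spelling out the injectivity directly: suppose \(f, g \in \stmcsigma\) satisfy \(\tracedtosymandfrobsigma(f) = \tracedtosymandfrobsigma(g)\). Since \(\tracedtosymandfrobsigma\) is the subcategory inclusion, it is the identity on morphisms, so this equation asserts exactly that \(f\) and \(g\) are equal as morphisms of \(\smcsigma + \frob\); by the uniqueness clause of \autoref{lem:stmc-subcat-hypc} this forces \(f = g\) already in \(\stmcsigma\). Hence \(\tracedtosymandfrobsigma\) is faithful.

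There is essentially no obstacle here, as all the difficulty has already been discharged in the proof of \autoref{lem:stmc-subcat-hypc}. The only point requiring care is not to conflate well-definedness of the functor with faithfulness: the former needs only that morphisms map into the larger category, whereas the latter needs the \emph{absence} of any identification, which is precisely what the uniqueness part of the lemma supplies.
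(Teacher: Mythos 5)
Your proposal is correct and matches the paper exactly: the corollary is stated without proof precisely because, as you observe, the inclusion functor of \autoref{lem:stmc-subcat-hypc} induces subset inclusions on hom-sets, and the lemma's uniqueness clause (that the equations of \(\frob\) merge no morphisms of \(\stmcsigma\)) is what makes it a genuine subcategory inclusion. The paper later confirms this reading verbatim, noting that \(\tracedtosymandfrobsigma\) ``is faithful as it is an inclusion.''
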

\begin{proof}
    \(\tracedtosymandfrobsigma\) is an inclusion functor.
\end{proof}

Crucially, this mapping is not \emph{full}: there are terms in a CCC that are
not terms in a STMC, such as \(\iltikzfig{strings/traced/invalid}\).
In order to identify a category of cospans of hypergraphs for traced terms we
must restrict the cospans of hypergraphs in \(\cspdhyp\) in some way.

\subsection{Properties of hypergraph cospans}

In~\cite{bonchi2022stringa}, it is shown that terms in a (non-traced)
symmetric monoidal category are interpreted via a faithful functor into a
sub-PROP of \(\cspdhyp\) named \(\macspdhyp\), containing only the `monogamous
acyclic` cospans.
We will examine these conditions of monogamicity and acyclicity, and show how
these can be adapted to the traced case.

Considered informally, \emph{monogamy} means that every node has exactly one
`in' and `out' connection, be it to an edge or an interface.
This corresponds to the fact that wires in symmetric monoidal categories cannot
arbitrarily fork or join.

\begin{defi}
    For a hypergraph \(F \in \hyp\), the \emph{degree} of a node
    \(v \in \vertices{F}\) is a tuple \((i,o)\) where \(i\) is the number of
    hyperedges with with \(v\) as a target, and \(o\) is the number of
    hyperedges with \(v\) as a source.
\end{defi}

\begin{defi}[Monogamy~{\cite[Def.\ 13]{bonchi2022stringa}}]
    For a cospan of hypergraphs \(\cospan{m}[f]{F}[g]{n}\) in
    \(\cspdhyp\), let \(\mathsf{in}(F)\) be the image of \(f\) and let
    \(\mathsf{out}(F)\) be the image of \(g\).
    The cospan \(\cospan{m}[f]{F}[g]{n}\) is \emph{monogamous} if
    \(f\) and \(g\) are mono and, for all nodes \(v\), the degree of \(v\) is:
    \begin{center}
        \begin{tabular}{rlcrl}
            \((0,0)\)
             &
            if \(v \in \mathsf{in}(F) \wedge v \in \mathsf{out}(F)\)
             &
            \quad
             &
            \((0,1)\)
             &
            if \(v \in \mathsf{in}(F)\)
            \\
            \((1,0)\)
             &
            if \(v \in \mathsf{out}(F)\)
             &
            \quad
             &
            \((1,1)\)
             &
            otherwise
        \end{tabular}
    \end{center}
\end{defi}

The second condition on cospans in \(\macspdhyp\) is that all hypergraphs are
\emph{acyclic}, as in symmetric monoidal terms all wires must flow from left to
right.

\begin{defi}[Predecessor~{\cite[Def.\ 18]{bonchi2022stringa}}]
    A hyperedge \(e\) is a \emph{predecessor} of another hyperedge \(e^\prime\)
    if there exists a node \(v\) such that \(v\) is a target of \(e\) and a
    source of \(e^\prime\).
\end{defi}

\begin{defi}[Path~{\cite[Def.\ 19]{bonchi2022stringa}}]
    A \emph{path} between two hyperedges \(e\) and \(e^\prime\) is a sequence of
    hyperedges \(e_0, \dots, e_{n-1}\) such that \(e = e_0\),
    \(e^\prime = e_{n-1}\), and for each \(i < n-1\), \(e_i\) is a predecessor
    of \(e_{i+1}\).
    A subgraph \(H\) is the \emph{start} or \emph{end} of a path if it contains
    a node in the sources of \(e\) or the targets of \(e^\prime\) respectively.
\end{defi}

Since nodes are single-element subgraphs, one can also talk about paths from
nodes.

\begin{defi}[Acyclicity~{\cite[Def.\ 20]{bonchi2022stringa}}]
    A hypergraph \(F\) is acyclic if there is no path from a node to itself.
    A cospan \(\cospan{m}{F}{n}\) is acyclic if \(F\) is.
\end{defi}

Cospans of hypergraphs with these properties form a sub-PROP of \(\cspdhyp\).

\begin{lemC}[{\cite[Lems.\ 15-17]{bonchi2022stringa}}]\label{lem:monogamicity-preserved}
    The following statements hold:
    \begin{enumerate}
        \item identities and symmetries are monogamous;
        \item monogamicity is preserved by composition; and
        \item monogamicity is preserved by tensor.
    \end{enumerate}
\end{lemC}
\begin{proof}
    For (1), identities and symmetries are monogamous by definition, as they
    are constructed from discrete hypergraphs, in which every node is an input
    and an output.
    For (2), as pushouts along monos in \(\hypsigma\) are monos, the legs of
    the composition of monogamous cospans must also be mono; moreover as each
    node in the outputs of the first cospan is merged with exactly one node in
    the inputs of the second, only nodes of degree at most \((1,1)\) can be
    created.
    For (3), the degree of each node is the same as in the original two cospans,
    and the coproduct of monos is mono.
\end{proof}

\begin{cor}
    There is a sub-PROP of \(\cspdhyp\), named \(\macspdhyp\), containing only
    the monogamous acyclic cospans of hypergraphs.
\end{cor}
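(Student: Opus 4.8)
To show that the monogamous acyclic cospans form a sub-PROP of \(\cspdhyp\), it suffices to verify that they contain the identities and symmetries and are closed under composition and tensor, since \(\cspdhyp\) is already a PROP. The plan is to dispatch the two defining properties separately: monogamy is handled entirely by \autoref{lem:monogamicity-preserved}, which gives closure under each of the three operations, so the only remaining work is to check that acyclicity is preserved in the same three situations.

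For the identity and symmetry cospans the underlying hypergraph is discrete, containing no hyperedges and hence no predecessor relations at all, so these are vacuously acyclic. For the tensor, the monoidal product is given by coproduct, so the underlying hypergraph is the disjoint union of the two factors; since a predecessor relation can only hold between edges sharing a vertex, and the two factors share no vertices, every path stays within a single factor and no new cycle can arise. Acyclicity is therefore immediate in both of these cases.

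The one case requiring care is composition, and here acyclicity genuinely relies on monogamy. Composition of \(\cospan{m}[f]{F}[g]{n}\) with \(\cospan{n}[h]{G}[k]{p}\) is by pushout, which glues each output vertex of \(F\) to the corresponding input vertex of \(G\). At such a glued vertex \(v\), monogamy forces \(v\) to be a target, but not a source, of an edge on the \(F\)-side, and a source, but not a target, of an edge on the \(G\)-side. In particular no edge of \(F\) has a glued vertex among its sources, so no \(F\)-edge can be a predecessor of a \(G\)-edge; the only predecessor relations crossing the interface have a \(G\)-edge as a predecessor of an \(F\)-edge. I expect this to be the main obstacle, as it is precisely the point where the argument would collapse without monogamy: in a non-monogamous cospan a shared interface vertex could be simultaneously a source and a target, allowing a cycle to close across the gluing.

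Finally, since each factor is acyclic and all crossing predecessor relations point uniformly from \(G\)-edges to \(F\)-edges, the edges of the composite admit a topological order listing all edges of \(G\) before all edges of \(F\), refined by the internal orders of each factor; a cycle would require a crossing in both directions across the interface, which is impossible. Hence the composite is acyclic. Combining this with \autoref{lem:monogamicity-preserved} shows that the monogamous acyclic cospans contain the identities and symmetries and are closed under composition and tensor, and so they form the desired sub-PROP \(\macspdchyp\) of \(\cspdhyp\).
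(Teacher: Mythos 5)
Your proposal is correct and takes essentially the same approach as the paper, which states this as an immediate corollary of \autoref{lem:monogamicity-preserved} and the corresponding closure properties for acyclicity established in the cited reference, without spelling out a proof. The acyclicity checks you supply --- in particular the observation that monogamy forces every predecessor relation across a glued interface vertex to point the same way, so no cycle can close under composition --- are exactly the details the paper delegates to that citation, and your argument for them is sound.
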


These two conditions are enough to establish a correspondence between
monogamous acyclic cospans of hypergraphs and symmetric monoidal terms.

\begin{thmC}[{\cite[Cor.\ 26]{bonchi2022stringa}}]
    There is an isomorphism \(\smcsigma \cong \macspdhyp\).
\end{thmC}

We want to show a similar result for \emph{traced} terms.
Clearly, to model trace the acyclicity condition must be dropped.
For the most part, monogamy also applies to the traced case: wires cannot
arbitrarily fork and join.
There is one nuance: the trace of the identity.
This is depicted as a closed loop \(
\trace{1}{\iltikzfig{strings/category/identity}[colour=white]}
=
\iltikzfig{strings/traced/trace-id}
\), and one might think that it can be discarded, i.e. \(
\iltikzfig{strings/traced/trace-id}
=
\iltikzfig{strings/monoidal/empty}
\).
This is \emph{not} always the case, such as in the category of finite
dimensional vector spaces~\cite[Sec. 6.1]{hasegawa1997recursion}.

These closed loops must be represented in the hypergraph framework:
there is a natural representation as a lone node disconnected
from either interface.
In fact, this is exactly how the canonical trace applied to an identity is
interpreted in \(\cspdhyp\).
This means we need a weaker version of monogamy, which we dub
\emph{partial monogamy}.

\begin{defi}[Partial monogamy]
    For a cospan \(\cospan{m}[f]{F}[g]{n} \in \cspdhyp\), let \(\mathsf{in}(F)\)
    be the image of \(f\) and let \(\mathsf{out}(F)\) be the image of \(g\).
    The cospan \(\cospan{m}[f]{F}[g]{n}\) is \emph{partial monogamous} if
    \(f\) and \(g\) are mono and, for all nodes
    \(v \in \vertices{F}\), the degree of \(v\) is
    \begin{center}
        \begin{tabular}{rlcrl}
            \((0,0)\)
             &
            if \(v \in \mathsf{in}(F) \wedge v \in \mathsf{out}(F)\)
             &
            \quad
             &
            \((0,1)\)
             &
            if \(v \in \mathsf{in}(F)\)
            \\
            \((1,0)\)
             &
            if \(v \in \mathsf{out}(F)\)
             &
            \quad
             &
            \((0,0)\)
            or \((1,1)\)
             &
            otherwise
        \end{tabular}
    \end{center}
\end{defi}

Intuitively, partial monogamy means that each node has either exactly one `in'
and one `out' connection to an edge or to an interface, or none at all.
\begin{figure}
    \centering
    \begin{gather*}
        \underbrace{
            \iltikzfig{graphs/monogamy/yes-0}
            \iltikzfig{graphs/monogamy/yes-1}
        }_{\text{partial monogamous}}
        \\[1em]
        \underbrace{
            \iltikzfig{graphs/monogamy/no-0}
            \iltikzfig{graphs/monogamy/no-1}
        }_{\text{not partial monogamous}}
    \end{gather*}
    \caption{Examples of cospans that are and are not partial monogamous}
    \label{fig:partial-monogamous-examples}
\end{figure}

\begin{exa}
    Examples of cospans that are and are not partial monogamous are shown
    in \autoref{fig:partial-monogamous-examples}.
\end{exa}

In order to establish a correspondence between cospans of partial monogamous
hypergraphs and traced terms, the former need to be assembled into a sub-PROP of
\(\cspdhyp\).

\begin{lem}
    Identities and symmetries are partial monogamous.
\end{lem}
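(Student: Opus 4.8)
The plan is to unfold the identity and symmetry cospans explicitly and then simply read the required degrees off the partial monogamy table. First I would recall that in $\cspdhyp$ the identity on $n$ is the cospan $\cospan{n}{n}{n}$ whose apex is the discrete hypergraph with $n$ vertices and whose two legs are both $\id[n]$, while the symmetry $\swap{m}{n}$ is the cospan whose apex is the discrete hypergraph on $m+n$ vertices and whose legs are the two bijections realising the relevant permutation, inherited from $\finset$ through the functor $D$ of \autoref{def:discrete-functor}. In both cases the legs are bijections, hence in particular mono, which discharges the first requirement of partial monogamy.

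Next I would verify the degree condition. The crucial observation is that both apexes are \emph{discrete}: they carry no hyperedges, so every vertex $v$ is the source and target of no edge at all, giving it degree $(0,0)$. Since each leg is a bijection onto the apex, every vertex lies simultaneously in $f_\star$ and in $g_\star$. The partial monogamy table demands exactly degree $(0,0)$ for such vertices, so the condition is met uniformly, and both families of cospans are partial monogamous.

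A more economical route, which I would mention as a sanity check, is to note that the partial monogamy conditions coincide with those of ordinary monogamy except in the final `otherwise' clause, where degree $(0,0)$ is now admitted in addition to $(1,1)$; the mono requirement and all other clauses are identical. Hence every monogamous cospan is automatically partial monogamous, and the statement follows at once from the fact that identities and symmetries are monogamous (\autoref{lem:monogamicity-preserved}). I do not anticipate any genuine obstacle here: the only thing that must be pinned down carefully is the explicit presentation of identities and symmetries as discrete-apex cospans with bijective legs, since once discreteness is established every entry of the table is forced.
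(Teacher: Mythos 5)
Your proposal is correct, and your ``more economical route'' is exactly the paper's proof: it simply observes that partial monogamy weakens monogamy (only the final clause of the degree table changes, admitting $(0,0)$ alongside $(1,1)$) and invokes \autoref{lem:monogamicity-preserved}. Your primary route --- unfolding identities and symmetries as cospans with discrete apexes and bijective legs and reading off that every vertex has degree $(0,0)$ and lies in both $f_\star$ and $g_\star$ --- is also a valid direct verification, just more work than the paper bothers with.
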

\begin{proof}
    By \autoref{lem:monogamicity-preserved}, identities and symmetries are
    monogamous, so they must also be partial monogamous.
\end{proof}

\begin{lem}\label{lem:partial-monogamous-composition}
    Given partial monogamous cospans \(\cospan{m}{F}{n}\)
    and \(\cospan{n}{G}{p}\), the composition \(
    (\cospan{m}{F}{n})
    \seq
    (\cospan{n}{G}{p})
    \) is partial monogamous.
\end{lem}
\begin{proof}
    As with the proof for preservation of regular monogamicity by composition,
    the legs of a composed cospan must be mono as pushouts along monos are
    themselves mono.

    To verify the degrees of nodes in the composition respect partial monogamy,
    recall that the only changes in nodes in the original two
    hypergraphs \(F\) and \(G\) when compared to the corresponding nodes in the
    composite is that \(i\)-th node in the outputs of \(F\) is merged with the
    \(i\)-th node in the inputs of \(G\), and their degrees summed.
    The only freedoms permitted by partial monogamy over regular monogamy are:
    \begin{itemize}
        \item there can be cycles, but by definition of a path any node in a
              cycle must have degree \((1, 1)\), so cannot be in the image of
              the interfaces; and
        \item nodes with degree \((0, 0)\) can occur outside of the image of the
              interfaces, so these will be unaffected by composition as well.
    \end{itemize}

    So the nodes than can be altered by composition are precisely those
    permitted by regular monogamy, i.e.\ nodes with degree \((0, 0)\) in the
    inputs of \(F\), and nodes with degree \((1, 0)\); as we already know that
    composition preserves the monogamicity degree conditions for these nodes,
    then it must also preserve the partial monogamy degree conditions.
\end{proof}

\begin{lem}\label{lem:partial-monogamous-tensor}
    Given partial monogamous cospans \(\cospan{m}{F}{n}\)
    and \(\cospan{p}{G}{q}\), the tensor \(
    (\cospan{m}{F}{n})
    \tensor
    (\cospan{n}{G}{p})
    \) is partial monogamous.
\end{lem}
\begin{proof}
    As with composition, tensor preserves monogamicity by
    \autoref{lem:monogamicity-preserved}, and as it does not affect the degree of
    nodes then it preserves partial monogamy as well.
\end{proof}

As partial monogamicity is preserved by both forms of composition, the
partial monogamous cospans themselves form a PROP.

\begin{defi}
    Let \(\pmcspdhyp\) be the sub-PROP of \(\cspdhyp\) containing only the
    partial monogamous cospans of hypergraphs.
\end{defi}

To show that \(\pmcspdhyp\) is a \emph{traced} PROP, we must show that
\(\pmcspdhyp\) has a trace.
Although \(\cspdhyp\) already has a trace, we must make sure that this does not
degenerate for cospans of partial monogamous hypergraphs.

\begin{thm}\label{thm:partial-monogamous-trace}
    The canonical trace is a trace on \(\pmcspdhyp\).
\end{thm}
\begin{proof}
    Consider a partial monogamous cospan \(
    \cospan{x + m}[f + h]{F}[g + k]{x + n}
    \); we must show that its trace \(
    \cospan{m}[h]{F^\prime}[k]{n}
    \) is partial monogamous.
    For each node \(a \in x\), \(f(a)\) and \(g(a)\) are merged together in the
    traced graph, summing their degrees.
    If a node is in the image of \(h\) or \(k\), this is also the case in the
    traced cospan.
    We consider the various cases:
    \begin{itemize}
        \item if \(f(a) = g(a)\), then this node must have degree \((0, 0)\);
              the traced node will still have degree \((0, 0)\) and will no
              longer be in the interface;
        \item if \(f(a)\) is also in the image of \(n \to F\) and \(g(a)\) is
              also in the image of \(m \to F\), then both \(f(a)\) and
              \(g(a)\) have degree \((0, 0)\); the traced node will still
              have degree \((0, 0)\) and be in both interfaces of the traced
              cospan;
        \item if \(f(a)\) is in the image of \(n \to F\), then \(f(a)\)
              has \((0, 0)\) and \(g(a)\) has degree
              \((1,0)\), so the traced node has degree \((1, 0)\) and
              is in the image of \(n \to F^\prime\);
        \item if \(g(a)\) is in the image of \(m \to F\), then the above
              applies in reverse; and
        \item if neither node is in the image of \(m \to F\) and \(n \to F\),
              then the traced node will have degree \((1,1)\) and be in the
              image of no interface.
    \end{itemize}
    In all these cases, partial monogamy is preserved.
\end{proof}

Crucially, while we leave \(\pmcspdhyp\) in order to construct the trace using
the cup and cap, the resulting cospan \emph{is} in \(\pmcspdhyp\).

\subsection{From terms to graphs}

Partial monogamous hypergraphs are the domain in which we will interpret terms
in symmetric monoidal theories equipped with a trace.
A \emph{(traced) PROP morphism} is a strict (traced) symmetric monoidal functor
between PROPs.
For \(\pmcspdhyp\) to be suitable for reasoning with a traced category
\(\stmc{\Sigma}\) for some given signature, there must be a
\emph{full and faithful} PROP morphism \(\stmc{\Sigma} \to \pmcspdhyp\).

We exploit the interplay between compact closed and traced categories in
order to reuse the existing PROP morphisms from~\cite{bonchi2022string} for the
traced case.

To map from traced terms to cospans of hypergraphs, we translate them into
\(\smcsigma + \frob\) using the inclusion functor \(\tracedtosymandfrobsigma\),
then use the previously defined PROP morphism \(\termandfrobtohypsigma\) to
map into \(\cspdhyp\).
A diagram showing the interaction of these PROP morphisms can be seen in
\autoref{fig:roadmap}.

We need to show that \(\termandfrobtohypsigma \circ \tracedtosymandfrobsigma\) is
full and faithful when restricted to \(\pmcspdhyp\).
Showing that it is faithful is simple: \(\tracedtosymandfrobsigma\) is faithful
as it is an inclusion, and the components of \(\termandfrobtohypsigma\) are
known to be faithful by the following result.

\begin{propC}[{\cite[Cor.\ 4.3]{bonchi2022string}}]
    \(\termtohypsigma\) is faithful.
\end{propC}
\begin{proof}[Proof (Sketch)]
    This follows by combining the so-called 3-for-2
    property~\cite[Thm.\ 3.3]{macdonald2009amalgamations} with the fact that
    \(\cspdhyp \cong \smcsigma + \frob\) is a pushout in the category of
    small SMCs.
\end{proof}

\begin{prop}
    \(\frobtohypsigma\) is faithful.
\end{prop}
\begin{proof}
    By definition of \(\frobtohypsigma\).
\end{proof}

The more interesting proof is that of the fullness of
\(\termandfrobtohypsigma \circ \tracedtosymandfrobsigma\).
Following the strategy of \cite{bonchi2022string}, we will
assemble an arbitrary partial monogamous cospan of hypergraphs into a form
constructed of components mapped from \(\stmcsigma\).

\begin{thmC}[{\cite[Thm.\ 25]{bonchi2022stringa}}]\label{thm:monog-acyclic-full}
    A cospan \(\cospan{m}{F}{n}\) is in the image of
    \(\termtohypsigma\) if and only if \(\cospan{m}{F}{n}\)
    is monogamous acyclic.
\end{thmC}
\begin{proof}[Proof (Sketch)]
    The only if direction follows by induction on morphisms on
    \(\smcsigma\).
    The if direction is by induction on the hyperedges in \(F\); if
    there are no hyperedges then \(\cospan{m}{F}{n}\) is in the image of a
    morphism containing only identities and symmetries, and for the inductive
    step each edge corresponds to the image of a generator in some larger term.
\end{proof}

\begin{thm}\label{thm:termtohyp-image}
    A cospan \(\cospan{m}{F}{n}\) is in the image of \(
    \termandfrobtohypsigma \circ \tracedtosymandfrobsigma\) if
    and only if it is partial monogamous.
\end{thm}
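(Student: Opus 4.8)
The plan is to prove the two directions separately, using the non-traced characterisation of \autoref{thm:monog-acyclic-full} as the engine for the harder (fullness) direction, and the closure lemmas for the easier one.

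\textbf{Forward direction (image is partial monogamous).} First I would record that the composite \(\termandfrobtohypsigma \circ \tracedtosymandfrobsigma\) is a \emph{traced} PROP morphism: the inclusion \(\tracedtosymandfrobsigma\) sends the trace of \(\stmcsigma\) to the canonical trace built from the Frobenius cup and cap, and \(\termandfrobtohypsigma\) preserves this Frobenius structure, so the canonical trace in \(\smcsigma + \frob\) is carried to the canonical trace in \(\cspdhyp\); hence the composite sends trace to canonical trace. I would then induct on the structure of a traced term. The images of generators, identities, and symmetries are partial monogamous (indeed monogamous), and partial monogamy is closed under composition, tensor, and the canonical trace by the preceding closure lemmas together with \autoref{thm:partial-monogamous-ops}. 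Every term's image is therefore partial monogamous.

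\textbf{Backward direction (partial monogamous is in the image).} This is the fullness direction and the main work. Given a partial monogamous cospan \(\cospan{m}{F}{n}\), the goal is to exhibit it as \(\trace{x}{\termtohypsigma[t]}\) for some \(t \in \smcsigma \subseteq \stmcsigma\), by cutting \(F\) into a monogamous acyclic cospan. The only two features separating a partial monogamous cospan from a monogamous acyclic one are internal vertices of degree \((0,0)\) lying in no interface and the presence of cycles, and I would neutralise both by exposing selected vertices to fresh traced wires. Each isolated \((0,0)\) vertex is precisely the image of the closed loop \(\trace{1}{\id[1]}\); I expose it to a new input and a new output that both point to that single vertex, exactly as in the identity cospan. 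For cycles, partial monogamy forces every internal vertex to have degree \((0,0)\) or \((1,1)\), and since \((0,0)\) vertices carry no edges, every cycle runs only through \((1,1)\) vertices; I pick a feedback vertex set of such vertices and split each chosen \(v\) into an output vertex \(v^{out}\) (the former target of its incoming edge, now degree \((1,0)\)) and an input vertex \(v^{in}\) (the former source of its outgoing edge, now degree \((0,1)\)). Writing \(x\) for the total number of exposed wires, this yields a cospan \(\cospan{x+m}{G}{x+n}\) that is monogamous (each former internal vertex now carries a legal interface degree) and acyclic (the splits break every cycle, and isolated vertices lie on none). By \autoref{thm:monog-acyclic-full}, \(G = \termtohypsigma[t]\) for some term \(t\); by construction the canonical trace reglues each \(v^{out}\) to \(v^{in}\) and each isolated output to its input, recovering \(F\), so \(F = \trace{x}{\termtohypsigma[t]}\). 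Since \(t \in \smcsigma\) embeds into \(\stmcsigma\) and the composite sends trace to canonical trace, \(F\) lies in the image.

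\textbf{Main obstacle.} The delicate part is the cutting construction of the backward direction. The real care lies in verifying that a feedback vertex set may always be taken among degree-\((1,1)\) vertices, that splitting those vertices (and exposing the \((0,0)\) vertices) yields a cospan that is genuinely monogamous \emph{and} acyclic with no illegal degrees introduced, and—most importantly—that the pushout defining the canonical trace of \(G\) reglues exactly the split vertices, so that \(\trace{x}{G}\) is isomorphic to \(F\) as a cospan. Managing the interface orderings of the exposed wires is the remaining bookkeeping; once it is in place, the appeal to \autoref{thm:monog-acyclic-full} together with the traced-functor property completes the argument.
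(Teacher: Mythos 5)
Your proposal is correct in both directions, and the forward direction matches the paper's argument (generators land on monogamous cospans, and partial monogamy is closed under composition, tensor and the canonical trace by \autoref{thm:partial-monogamous-ops}). The backward direction, however, takes a genuinely different decomposition from the paper's. You perform a \emph{minimal} surgery: expose each internal degree-\((0,0)\) vertex to a matched input/output pair, choose a feedback vertex set among the degree-\((1,1)\) vertices (which is legitimate, since partial monogamy forces every vertex on a cycle to be internal of degree \((1,1)\)), split each such vertex into a \((1,0)\) output vertex and a \((0,1)\) input vertex, and apply \autoref{thm:monog-acyclic-full} to the resulting single monogamous acyclic cospan before regluing with the canonical trace. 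The paper instead uses a \emph{maximal} cut: it separates \(F\) into a discrete wiring layer \(L + V\) and a layer \(L + E\) containing all hyperedges disconnected from one another, composes these two cospans, and traces out \emph{all} edge-target vertices (together with the isolated ones) at once. The paper's route never has to mention cycles or feedback vertex sets at all --- acyclicity of each layer is automatic because one layer is discrete and the other has no edge-to-edge connections --- at the cost of a larger trace and a two-stage composite; your route yields a smaller trace and a single appeal to \autoref{thm:monog-acyclic-full}, at the cost of the extra (but correct, and correctly flagged) verifications that the chosen set breaks every cycle and that the trace pushout reglues exactly the split vertices. Both arguments discharge the same residual obligation, namely that the recomposed cospan is isomorphic to the original, so neither is materially easier to make fully rigorous.
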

\begin{proof}
    For the \(\onlyifdir\) direction, the generators of \(\stmcsigma\) are
    mapped to monogamous cospans by \(
    \termandfrobtohypsigma \circ \tracedtosymandfrobsigma
    \), and partial monogamy is preserved by composition
    (\autoref{lem:partial-monogamous-composition}),
    tensor (\autoref{lem:partial-monogamous-tensor}),
    and trace
    (\autoref{thm:partial-monogamous-trace}).

    For the \(\ifdir\) direction, we show that any partial monogamous cospan \(
    \cospan{m}[f]{F}[g]{n}
    \) is in the image of \(
    \termandfrobtohypsigma \circ \tracedtosymandfrobsigma
    \) by constructing an isomorphic trace of cospans, in which
    each component under the trace is in the image of \(\termtohypsigma\).
    The components of the new cospan are as follows:
    \begin{itemize}
        \item let \(L\) be the discrete hypergraph containing nodes with
              degree
              \((0,0)\) that are not in the image of \(f\) or \(g\);
        \item let \(S\) and \(T\) be the discrete hypergraphs containing
              the source and target nodes of hyperedges in \(F\)
              respectively, with the ordering determined by some order
              \(e_1,e_2,\cdots,e_n\) on the edges in \(F\).
        \item let \(E\) be the hypergraph containing the nodes of
              \(S + T\) and the hyperedges of \(F\), in which the sources of
              edges of \(E\) are the nodes in \(S\) that are the sources of
              the edge in \(F\), and the targets of an edge are the nodes
              in \(T\) that are the targets of the edge in \(F\); and
        \item let \(V\) be the discrete hypergraph containing all the
              nodes of \(F\); and
    \end{itemize}
    These parts can be composed to form the following composite:
    \begin{gather*}
        \cospan{L + T + m}[\id + \id + f]{L + V}[\id + \id + g]{L + S + n}
        \,\seq\,
        \cospan{L + S + n}[\id]{L + E + n}[\id]{L + T + n}
    \end{gather*}
    We take the trace of \(L + T\) over this composite to obtain a
    cospan isomorphic to the original.
    The components of the composite under the trace are all monogamous acyclic
    so are in the image of \(\termtohypsigma\) by
    \autoref{thm:monog-acyclic-full}; this means there is a term
    \(f \in \smcsigma\) such that \(\termtohypsigma[f]\) is isomorphic to the
    original composite.
    The trace of \(f\) is in \(\stmcsigma\), so the trace of the composite is in
    the image of
    \(\termandfrobtohypsigma \circ \tracedtosymandfrobsigma\).
\end{proof}

The large composite cospan may appear rather inpenetrable at first glance;
essentially, it is constructed by stacking up the edges in the cospan
\(\cospan{\tilde{m}}{\tilde{E}}{\tilde{n}}\), and joining up the targets to
the appropriate sources by tracing them around and shuffling them to the correct
source.
The graph \(L\) contains any identity loops.

\begin{exa}\label{ex:trace-composite}
    Consider the following term and its cospan interpretation:
    \begin{center}
        \iltikzfig{graphs/trace/correspondence/original-term}
        \quad
        \begin{tikzcd}[column sep=tiny]
            \iltikzfig{graphs/trace/correspondence/inputs}
            \arrow{r}
            &
            \iltikzfig{graphs/trace/correspondence/example}
            &
            \arrow{l}
            \iltikzfig{graphs/trace/correspondence/outputs}
        \end{tikzcd}
    \end{center}
    We assemble the latter into the composite cospan of
    \autoref{thm:termtohyp-image} as shown in \autoref{fig:trace-composite}.
    Both of the components under the trace correspond to terms in \(\smcsigma\),
    so applying the trace to this produces a term in \(\stmcsigma\):
    \begin{center}
        \iltikzfig{graphs/trace/correspondence/term}
    \end{center}
    This term is equal to the original by string diagrammatic deformations.
\end{exa}

\begin{figure}
    \centering
    \(\trace{4}{
        \begin{tikzcd}[column sep=tiny, ampersand replacement=\&]
            \scalebox{0.7}{\(\iltikzfig{graphs/trace/correspondence/s-inputs}\)}
            \arrow{r}
            \&
            \scalebox{0.7}{\(\iltikzfig{graphs/trace/correspondence/s}\)}
            \&
            \arrow{l}
            \scalebox{0.7}{\(\iltikzfig{graphs/trace/correspondence/s-outputs}\)}
        \end{tikzcd}
        \seq
        \begin{tikzcd}[row sep=tiny,column sep=tiny, ampersand replacement=\&]
            \scalebox{0.7}{\(\iltikzfig{graphs/trace/correspondence/l-inputs}\)}
            \arrow{r}
            \&
            \scalebox{0.7}{\(\iltikzfig{graphs/trace/correspondence/l}\)}
            \&
            \arrow{l}
            \scalebox{0.7}{\(\iltikzfig{graphs/trace/correspondence/l-outputs}\)}
            \\
            \scalebox{0.7}{\(\iltikzfig{graphs/trace/correspondence/e-inputs}\)}
            \arrow{r}
            \&
            \scalebox{0.7}{\(\iltikzfig{graphs/trace/correspondence/e}\)}
            \&
            \arrow{l}
            \scalebox{0.7}{\(\iltikzfig{graphs/trace/correspondence/e-outputs}\)}
            \\
            \scalebox{0.7}{\(\iltikzfig{graphs/trace/correspondence/l-inputs}\)}
            \arrow{r}
            \&
            \scalebox{0.7}{\(\iltikzfig{graphs/trace/correspondence/l}\)}
            \&
            \arrow{l}
            \scalebox{0.7}{\(\iltikzfig{graphs/trace/correspondence/l-outputs}\)}
        \end{tikzcd}
    }\)
    \caption{
        The cospan of \autoref{ex:trace-composite} in the form of
        \autoref{thm:termtohyp-image}
    }
    \label{fig:trace-composite}
\end{figure}

This brings us to the first major result of this paper: that terms in a freely
generated symmetric traced monoidal category are in a one-to-one correspondence
with isomorphism classes of partial monogamous cospans of hypergraphs.

\begin{cor}\label{cor:stmc-graph-iso}
    \(\stmc{\Sigma} \cong \pmcspdhyp\).
\end{cor}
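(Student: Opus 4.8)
The plan is to read off the isomorphism directly from the functor already in hand, the composite PROP morphism $\termandfrobtohypsigma \circ \tracedtosymandfrobsigma \colon \stmc{\Sigma} \to \cspdhyp$. Being a PROP morphism it is the identity on objects, hence bijective on objects, so everything reduces to showing that it corestricts to a \emph{full and faithful} functor onto $\pmcspfihyp$. The general principle I invoke at the end is that an identity-on-objects PROP morphism that is full and faithful is an isomorphism of PROPs: full plus faithful yields a bijection on every hom-set, and identity-on-objects yields a bijection on objects, so the two packages of data combine into a strict inverse.

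First I would dispatch faithfulness. The inclusion $\tracedtosymandfrobsigma$ is faithful by \autoref{lem:stmc-subcat-hypc}, and $\termtohypsigma$ and $\frobtohyp{\Sigma}$ are faithful by the cited result of~\cite{bonchi2022string}. A traced term is sent into $\smcsigma + \frob$ using only the symmetric monoidal generators together with the canonical trace built from the Frobenius cups and caps, and (as already observed in the proof of \autoref{lem:stmc-subcat-hypc}) the equations of $\frob$ never apply to this construction; consequently no two distinct traced terms are identified, and $\termandfrobtohypsigma \circ \tracedtosymandfrobsigma$ is faithful. Faithfulness is then inherited by the corestriction to $\pmcspfihyp$, since restricting the codomain cannot affect injectivity on hom-sets.

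Fullness is exactly the content of \autoref{thm:termtohyp-image}: a cospan $\cospan{m}{F}{n}$ lies in the image of $\termandfrobtohypsigma \circ \tracedtosymandfrobsigma$ if and only if it is partial monogamous, i.e.\ if and only if it is a morphism of $\pmcspfihyp$. The forward implication guarantees that the composite genuinely lands inside $\pmcspfihyp$, so the corestriction is well defined; the reverse implication says that every morphism of $\pmcspfihyp$ is hit, and since the functor is the identity on objects this is precisely fullness of $\stmc{\Sigma} \to \pmcspfihyp$. Assembling faithfulness, fullness, and bijectivity-on-objects gives the desired isomorphism.

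I expect no real obstacle at the level of the corollary itself: all the substance has already been spent in \autoref{thm:termtohyp-image}, whose hard direction decomposes an arbitrary partial monogamous cospan as a trace of a monogamous acyclic cospan so that \autoref{thm:monog-acyclic-full} applies componentwise, and in \autoref{thm:partial-monogamous-ops}, which guarantees the trace stays within the partial monogamous world. The corollary is the bookkeeping step that repackages these facts as an equivalence of syntactic and combinatorial descriptions; the only mild subtlety is to keep straight that faithfulness is a property of the composite on the traced fragment, not of $\termandfrobtohypsigma$ on all of $\smcsigma + \frob$ (where the Frobenius quotient could in principle collapse morphisms outside the image of $\tracedtosymandfrobsigma$).
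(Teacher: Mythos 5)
Your proposal matches the paper's argument: the paper explicitly sets up the corollary by observing that faithfulness follows from $\tracedtosymandfrobsigma$ being an inclusion together with the cited faithfulness of $\termtohypsigma$ and $\frobtohyp{\Sigma}$, and that fullness onto $\pmcspfihyp$ is precisely the content of \autoref{thm:termtohyp-image}, so the corollary is the same bookkeeping step you describe. Your closing remark about faithfulness being a property of the composite on the traced fragment (rather than of $\termandfrobtohypsigma$ on all of $\smcsigma + \frob$) is a fair and slightly more careful articulation of the point the paper passes over quickly, but it does not change the route.
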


This means we can now freely translate between traced string
diagram terms and partial monogamous cospans of hypergraphs.
Crucially, terms that are equal by the axioms of STMCs are mapped to isomorphic
cospans; to verify if two traced terms \(
\iltikzfig{strings/category/f}[box=f]
\) and \(
\iltikzfig{strings/category/f}[box=g]
\) are equal in \(\stmcsigma\), we can check if their hypergraph interpretations
are isomorphic.

\begin{exa}
    The partial monogamous cospans from
    \autoref{fig:partial-monogamous-examples} are shown below with their
    corresponding terms in \(\stmcsigma\).
    \[
        \iltikzfig{graphs/partial-monogamy/yes-0}
        \quad
        \raisebox{-0.5em}{\iltikzfig{graphs/terms/term-0}}
        \qquad
        \iltikzfig{graphs/partial-monogamy/yes-1}
        \quad
        \iltikzfig{graphs/terms/term-1}
    \]
    Note that the position of the `closed loop' could be moved anywhere in the
    term, but the hypergraph interpretation would be unchanged.
\end{exa}

This result is also important for rewriting traced terms modulo some equational
theory, in which the graph interpretations themselves will have to be rewritten.
\autoref{cor:stmc-graph-iso} will be instrumental in showing that any such
rewriting system is sound and complete, that is to say a graph rewrite \(
\termandfrobtohypsigma[\tracedtosymandfrobsigma[
        \iltikzfig{strings/category/f}[box=f]
    ]]
\Rightarrow
\termandfrobtohypsigma[\tracedtosymandfrobsigma[
        \iltikzfig{strings/category/f}[box=g]
    ]]
\) is valid if and only if \(
\iltikzfig{strings/category/f}[box=f]
=
\iltikzfig{strings/category/f}[box=g]
\) under the equational theory.

\section{Hypergraphs for traced commutative comonoid categories}

\begin{figure}
    \centering
    \includestandalone{figures/graphs/roadmap}
    \caption{Interactions between categories of terms and hypergraphs}
    \label{fig:roadmap}
\end{figure}

The trace is but one kind of structure we are interested in adding to
symmetric monoidal terms.
We also want to be able to \emph{fork} and \emph{eliminate} wires by adding a
\emph{(commutative) comonoid structure}; categories equipped
with such a structure are also known as \emph{gs-monoidal}
(\emph{garbage-sharing}) categories~\cite{fritz2023free}.

\begin{defi}
    Let \((\generators[\ccomon], \equations[\ccomon])\) be the symmetric
    monoidal theory of \emph{commutative comonoids}, with \(\Sigma_\ccomon := \{
    \iltikzfig{strings/structure/comonoid/copy}[colour=white],
    \iltikzfig{strings/structure/comonoid/discard}[colour=white]
    \}\) and \(\mce_\ccomon\) defined as in \autoref{fig:comonoid-equations}.
    We write \(\ccomon := \smc{\generators[\ccomon], \equations[\ccomon]}\).
\end{defi}

From now on, we write `comonoid' to mean `commutative comonoid'.
There has already been work using hypergraphs for PROPs with a (co)monoid
structure~\cite{fritz2023free,milosavljevic2023string} but these consider
\emph{acyclic} hypergraphs: we must ensure that removing the acyclicity
condition does not lead to any degeneracies.

\begin{defi}[Partial left-monogamy]
    For a cospan \(\cospan{m}[f]{H}[g]{n}\), we say it is
    \emph{partial left-monogamous} if \(f\) is mono and, for all nodes
    \(v \in H_\star\), the degree of \(v\) is
    \[
        \begin{cases}
            (0, m)                    & \text{if } v \text{ is in the image of } f \\
            (0, m) \text{ or } (1, m) & \text{otherwise}
        \end{cases}
    \] for some \(m \in \nat\).
\end{defi}

Partial left-monogamy is a weakening of partial monogamy that allows nodes
to have multiple `out' connections, which represent the use of the comonoid
structure to fork wires.

\begin{figure}
    \centering
    \begin{gather*}
        \underbrace{
            \iltikzfig{graphs/partial-monogamy/yes-0}
            \iltikzfig{graphs/partial-monogamy/yes-1}
        }_{\text{partial left-monogamous}}
        \\[1em]
        \underbrace{
            \iltikzfig{graphs/partial-monogamy/no-0}
            \iltikzfig{graphs/partial-monogamy/no-1}
        }_{\text{not partial left-monogamous}}
    \end{gather*}
    \caption{Examples of cospans that are and are not partial left-monogamous}
    \label{fig:partial-left-monogamous-examples}
\end{figure}

\begin{exa}
    Examples of cospans that are and are not partial left-monogamous are shown
    in \autoref{fig:partial-left-monogamous-examples}.
\end{exa}

\begin{rem}
    As with the nodes not in the interfaces with degree \((0, 0)\) in the
    vanilla traced case, the nodes not in the interface with degree
    \((0, m)\) allow for the interpretation of terms such as \(
    \trace{}{\iltikzfig{strings/structure/comonoid/copy}[colour=white]}
    \).
\end{rem}

\begin{lem}\label{lem:partial-monogamous-id-sym}
    Identities and symmetries are partial left-monogamous.
\end{lem}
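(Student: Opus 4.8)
The plan is to mirror the proof of the analogous fact for partial monogamy and reduce to the monogamous case already established in \autoref{lem:monogamicity-preserved}. First I would recall that identities and symmetries are interpreted as \emph{discrete} cospans in \(\cspdhyp\): the identity on \(n\) is \(\cospan{n}[\id]{n}[\id]{n}\), and the symmetry \(\swap{m}{n}\) is carried by a discrete apex whose legs are an identity and the appropriate permutation. In either case the apex hypergraph has no edges, so every vertex \(v \in \vertices{H}\) has degree \((0,0)\), and the left leg \(f\) is mono (indeed a bijection on vertices).

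The key step is then to observe that the degree \((0,0)\) satisfies the partial left-monogamy constraints regardless of whether \(v\) lies in \(\vertices{f}\): taking the free out-component to be \(0\), we have \((0,0)\) of the form \((0,m)\), which is exactly the shape required both for vertices in the image of \(f\) and for vertices outside it. Since \(f\) is mono, all the conditions of \emph{partial left-monogamy} are met, so identities and symmetries are partial left-monogamous.

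Alternatively, and perhaps more in keeping with the earlier development, I would argue abstractly that monogamy already implies partial left-monogamy and then invoke \autoref{lem:monogamicity-preserved} directly. Comparing the two degree tables, every monogamous vertex in \(\vertices{f}\) has degree \((0,0)\) or \((0,1)\), both of the form \((0,m)\), while every monogamous vertex outside \(\vertices{f}\) has degree \((1,0)\) or \((1,1)\), both of the form \((1,m)\); moreover monogamy requires the left leg to be mono, which is all that partial left-monogamy demands. Hence any monogamous cospan is partial left-monogamous, and the claim follows from the monogamicity of identities and symmetries.

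I do not anticipate any genuine obstacle here: the statement is a routine weakening, and the only point requiring care is matching the case analysis of the degree tuples correctly — in particular noting that partial left-monogamy constrains only the first (in-)component of the degree while leaving the out-component \(m\) free, so the stricter fixed values appearing in the monogamous case are automatically admissible.
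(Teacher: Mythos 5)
Your proposal is correct, and your second (``alternative'') route is exactly the paper's proof: a one-line reduction via \autoref{lem:monogamicity-preserved}, noting that every monogamous cospan is partial left-monogamous because the monogamous degree constraints and the mono condition on the left leg are strictly stronger. Your direct verification via discrete apexes is also sound, and your degree-table comparison correctly spells out the implicit step the paper leaves to the reader.
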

\begin{proof}
    Again by \autoref{lem:monogamicity-preserved}, identities and symmetries are
    monogamous so they are also partial left-monogamous.
\end{proof}

\begin{lem}
    Given partial left-monogamous cospans
    \(\cospan{m}{F}{n}\) and
    \(\cospan{n}{G}{p}\), the composition \(
    (\cospan{m}{F}{n})
    \seq
    (\cospan{n}{G}{p})
    \) is partial left-monogamous.
\end{lem}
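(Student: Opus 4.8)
The plan is to analyse the composite directly through its defining pushout. Writing the two cospans as \(\cospan{m}[f]{F}[f']{n}\) and \(\cospan{n}[g]{G}[g']{p}\), their composite is \(\cospan{m}[\iota_F f]{F +_n G}[\iota_G g']{p}\), where \(F +_n G\) is the pushout of \(\spann{F}[f']{n}[g]{G}\), gluing \(f'(i)\) to \(g(i)\) for each \(i \in n\). First I would pin down the vertex equivalence classes of this pushout: an identification only ever relates an output vertex \(f'(i)\) of \(F\) to the input vertex \(g(i)\) of \(G\), and since \(g\) is mono (as \(G\) is partial left-monogamous), each \(G\)-input vertex \(g(i)\) is glued to the single \(F\)-vertex \(f'(i)\). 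Consequently no two distinct vertices of \(F\) are ever merged, and a class meeting \(F\) consists of exactly one \(F\)-vertex \(w\) together with the \(G\)-input vertices \(\{g(i) \mid f'(i) = w\}\). The mono condition then follows immediately: if \(\iota_F f(j_1) = \iota_F f(j_2)\) then \(f(j_1)\) and \(f(j_2)\) lie in one class, but distinct \(F\)-vertices never merge, so \(f(j_1) = f(j_2)\) and \(f\) mono gives \(j_1 = j_2\).

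Next I would track how degrees combine at a merged vertex, where in- and out-degrees are the sums over the class members. Out-degrees are unconstrained by partial left-monogamy, so only in-degrees matter. Every \(G\)-input vertex \(g(i)\) has in-degree \(0\) by partial left-monogamy of \(G\); hence the in-degree of the class of \(w\) equals the in-degree of \(w\) in \(F\), which is \(0\) or \(1\), and singleton classes retain their original in-degree \(\le 1\). Thus every vertex of the composite has in-degree in \(\{0,1\}\). Finally, a vertex in the image of \(\iota_F f\) is the class of an input vertex of \(F\) (in-degree \(0\)) glued only to in-degree-\(0\) vertices, so it has in-degree \(0\), exactly as the definition demands for interface vertices.

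The step I expect to be the main obstacle is precisely the one that distinguishes this case from the monogamous setting of \autoref{lem:monogamicity-preserved}: the output leg \(f'\) of \(F\) is no longer required to be mono, so a single vertex \(w\) may be the image of several output ports and hence be glued simultaneously to several distinct \(G\)-input vertices. The bookkeeping above—using that \(g\) is mono and that \(G\)-input vertices carry in-degree \(0\)—is what ensures this forking neither pushes any in-degree above \(1\) nor collapses the input interface, so partial left-monogamy is preserved by composition.
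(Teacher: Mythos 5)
Your proposal is correct and follows essentially the same route as the paper's proof: only in-degrees need checking, the vertices glued by the pushout are the $G$-inputs (in-degree $0$ by partial left-monogamy of $G$), and monomorphy of $n \to G$ ensures each merged class contains at most one $F$-vertex, so in-degrees stay at most $1$. You are somewhat more explicit than the paper — spelling out the pushout's equivalence classes and verifying that the composite's input leg remains mono, a point the paper leaves implicit — but the underlying argument is the same.
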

\begin{proof}
    We only need to check the in-degree of nodes, as the out-degree can be
    arbitrary.
    Only the nodes in the image of \(n \to G\) have their in-degree modified;
    they will gain the in-tentacles of the corresponding nodes in the image
    of \(n \to F\).
    Initially the nodes in \(n \to G\) must have in-degree \(0\) by partial
    monogamy.
    They can only gain at most one in-tentacle from nodes in \(n \to F\)
    as each of these nodes has in-degree \(0\) or \(1\) and \(n \to G\) is
    mono.
    So the composite graph is partial left-monogamous.
\end{proof}

\begin{lem}
    Given partial left-monogamous cospans \(\cospan{m}{F}{n}\)
    and \(\cospan{p}{G}{q}\), the tensor \(
    (\cospan{m}{F}{n})
    \tensor
    (\cospan{n}{G}{p})
    \) is partial left-monogamous.
\end{lem}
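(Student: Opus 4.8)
The plan is to exploit the fact that the tensor in \(\cspdhyp\) is given by the coproduct (direct product) of cospans, which on the apex is the disjoint union \(F + G\) and on the legs is the copairings \(f + f^\prime\) and \(g + g^\prime\). The essential observation --- exactly as in the analogous monogamous case handled by \autoref{lem:monogamicity-preserved} --- is that taking a coproduct neither identifies any vertices nor introduces any new source or target tentacles. Hence every vertex of \(F + G\) retains precisely the degree it carried in whichever of \(F\) or \(G\) it originated from, and it lies in the image of the left leg \(f + f^\prime\) exactly when it already lay in the image of \(f\) (respectively \(f^\prime\)).

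Concretely, I would verify the two clauses of partial left-monogamy in turn. First, the left leg \(f + f^\prime\) must be mono: since \(f\) and \(f^\prime\) are both mono by hypothesis and the two coproduct injections have disjoint images, the induced map \(f + f^\prime\) is mono. Second, the degree condition: a vertex in the image of \(f + f^\prime\) comes from the image of one of \(f\) or \(f^\prime\), where it had in-degree \(0\); since the coproduct leaves degrees unchanged, it still has in-degree \(0\) in \(F + G\). Likewise a vertex outside the image of \(f + f^\prime\) inherits an in-degree of \(0\) or \(1\) unchanged. The out-degree is left unconstrained by the definition, so there is nothing further to check there.

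Unlike the composition case --- where the subtlety was that vertices in the shared boundary could acquire a fresh in-tentacle, forcing one to argue that the in-degree stayed below \(2\) --- the tensor case genuinely modifies nothing, so I do not expect a real obstacle. The only point requiring any care is the mono check for the left leg, and even that reduces to the standard fact that coproducts of monos with disjoint injections are mono in this hypergraph setting. I therefore anticipate a short proof in the same spirit as the corresponding monogamous statement.
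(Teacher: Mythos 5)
Your proposal is correct and takes essentially the same approach as the paper, whose proof simply observes that the tensor leaves the elements of the original graphs unaffected so the degrees are unchanged. Your additional explicit check that the left leg \(f + f^\prime\) remains mono is a detail the paper elides but is entirely in the same spirit.
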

\begin{proof}
    The elements of the original graphs are unaffected so
    the degrees are unchanged.
\end{proof}

\begin{defi}
    Let \(\plmcspdhyp\) be the sub-PROP of \(\cspdhyp\) containing only the
    partial left-monogamous cospans of hypergraphs.
\end{defi}

\begin{prop}
    The canonical trace is a trace on \(\plmcspdhyp\).
\end{prop}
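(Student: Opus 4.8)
The plan is to follow the template of \autoref{thm:partial-monogamous-ops} almost verbatim, adjusting only for the fact that partial left-monogamy constrains \emph{in}-degrees alone. Concretely, I would take a partial left-monogamous cospan $\cospan{x + m}[f + h]{F}[g + k]{x + n}$ and show that its canonical trace $\cospan{m}[h]{F^\prime}[k]{n}$ remains partial left-monogamous. As established in the proof of \autoref{thm:partial-monogamous-ops}, the canonical trace (assembled from the cup and cap of the compact closed structure on $\cspdchyp$) affects only the vertices in the images of the traced legs: for each $a \in x$ it merges $f(a)$ with $g(a)$, and the degree of the merged vertex is the sum of the two original degrees. Every other vertex, together with its tentacles, is untouched, so it suffices to analyse these merged vertices.

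Since partial left-monogamy places no condition whatsoever on out-degrees, I would track only in-degrees. The vertex $f(a)$ lies in $f_\star$, so by hypothesis it has in-degree $0$; the vertex $g(a)$ is an arbitrary node, so it has in-degree $0$ or $1$. Their sum, the in-degree of the merged vertex, is therefore again $0$ or $1$. Because the combined left leg $f + h$ is mono we have $f(a) \neq h(b)$ for every $b \in m$, so the merged vertex does not lie in the image of the new left leg $h$; it thus falls under the ``otherwise'' clause of the definition, where in-degree $0$ or $1$ is exactly what is permitted. Finally $h$, being a corestriction of the mono map $f + h$ (whose image is disjoint from the merged vertices), remains mono, so the input-leg condition survives. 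Together these observations give partial left-monogamy of the traced cospan.

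The one place that needs care --- and the conceptual heart of the statement --- is the interaction with the comonoid structure, namely when the right leg $g$ is \emph{not} mono and several indices $a, a^\prime \in x$ satisfy $g(a) = g(a^\prime)$. Here the trace glues all of $f(a)$, $f(a^\prime)$ and $g(a)$ into a single vertex; but since each $f(a)$ contributes in-degree $0$, the total in-degree is still bounded by that of $g(a)$, hence $\le 1$, and no violation occurs. This is precisely the configuration that produces a traced fork such as $\trace{}{\ccopy}$: the outcome is a non-interface vertex of degree $(0, m)$ with $m \ge 1$, which partial left-monogamy was deliberately weakened to admit. I expect this degeneracy check --- verifying that forking under the trace yields an admissible $(0,m)$ vertex rather than breaking the in-degree bound --- to be the main obstacle, since dropping monogamy of $g$ is exactly the step that could in principle have failed; confirming that it does not is what vindicates the definition of $\plmcspdchyp$.
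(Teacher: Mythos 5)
You follow the paper's proof of this proposition (and the template of \autoref{thm:partial-monogamous-ops}) essentially step for step: the canonical trace only affects the vertices hit by the traced legs, every vertex in the image of the input leg has in-degree \(0\), and monogamy of the input leg ensures each merged class absorbs at most one vertex of in-degree \(1\); your explicit treatment of the non-mono-\(g\) (forking) case is a welcome elaboration of the paper's terser wording. There is, however, one false intermediate claim. From \(f(a) \neq h(b)\) you conclude that a merged vertex never lies in the image of the residual left leg \(h\), and you reuse the same disjointness to argue that \(h\) stays mono. But the merged class also contains \(g(a)\), and nothing prevents \(g(a) = h(b)\): already for \(\trace{1}{\swap{1}{1}} = \id[1]\) the single traced vertex lies in the image of both residual legs. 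The conclusion survives because in that case \(g(a)\) is itself in the image of the old input leg and hence has in-degree \(0\), so the merged vertex meets the stricter \((0,m)\) requirement of the \(v \in f_\star\) clause rather than needing the ``otherwise'' clause; and \(h\) stays mono because each merged class contains at most one vertex outside the image of \(f\) (chains \(f(a) = g(a^\prime)\) only ever pass through in-degree-\(0\) vertices of the input leg), so distinct \(h(b) \neq h(b^\prime)\) are never identified. With that repair your argument coincides with the paper's.
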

\begin{proof}
    We must show that for any set of nodes in the image
    of \(x + n \to K\) merged by the canonical trace, at most one of them can
    have in-degree \(1\).
    But this must be the case because any image in the image of
    \(x + m \to K\) must have in-degree \(0\), and \(x + m \to K\) is
    moreover mono so it cannot merge nodes in the image of
    \(x + n \to K\).
\end{proof}

This category can be equipped with a comonoid structure.

\begin{defi}
    Let \(
    \morph{
        \comonoidtofrob
    }{
        \ccomon
    }{
        \frob
    }
    \) be the obvious embedding of \(\ccomon\) into \(\frob\), and let \(
    \morph{
        \tracedandcomonoidtofrob{\Sigma}
    }{
        \stmc{\Sigma} + \comon
    }{
        \smc{\Sigma} + \frob
    }
    \) be the copairing of \(\tracedtosymandfrob{\Sigma}\) and
    \(\comonoidtofrob\).
\end{defi}

As before, these PROP morphisms are summarised in \autoref{fig:roadmap}.
To show that partial left-monogamy is the correct notion to characterise terms
in a traced comonoid setting, it is necessary to ensure that the image of these
PROP morphisms lands in \(\plmcspdhyp\).

\begin{lem}
    The image of \(\frobtohyp{\Sigma} \circ \comonoidtofrob\) is in
    \(\plmcspdhyp\).
\end{lem}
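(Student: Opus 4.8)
The plan is to use that $\frobtohyp{\Sigma} \circ \comonoidtofrob$ is a PROP morphism together with the fact that $\ccomon$ is the free PROP generated by the two comonoid generators, copy $\ccopy \colon 1 \to 2$ and discard $\cdel \colon 1 \to 0$, modulo the comonoid equations. Consequently every cospan in the image is built from the images of these two generators, together with identities and symmetries, by repeated composition ($\seq$) and tensor ($\tensor$). Since identities and symmetries are partial left-monogamous (\autoref{lem:partial-monogamous-id-sym}), and both composition and tensor preserve partial left-monogamy (the two preceding lemmas), it suffices to check that the images of $\ccopy$ and $\cdel$ are themselves partial left-monogamous; membership of the whole image in $\plmcspdhyp$ then follows by induction on the structure of the term.

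First I would read off the two generator images from the Frobenius structure of \autoref{prop:frobenius-map}: the copy generator maps to $\cospan{1}[f]{H}[g]{2}$ and the discard generator to $\cospan{1}[f]{H}[g]{0}$, where in each case the apex $H$ is the discrete hypergraph on a single vertex $v$, the left leg $f$ sends the one input vertex to $v$, and the right leg $g$ sends all (respectively none) of the output vertices to $v$. In both cases $f$ is injective, hence mono, and since $H$ has no hyperedges the vertex $v$ has degree $(0,0)$. As $v \in f_\star$, the partial left-monogamy requirement that $v$ have degree $(0,m)$ for some $m \in \nat$ is met with $m = 0$. Thus both generator images lie in $\plmcspdhyp$, and the claim follows.

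I expect no genuine obstacle here; the one point that deserves care is the intuition that copying ought to create a vertex with several outgoing connections. In this representation the forking of wires is carried entirely by the right leg $g$ being non-injective rather than by any hyperedge sourced at $v$, so the \emph{out}-degree of $v$ remains $0$ and the degree condition holds trivially. In fact, because $\ccomon$ contributes no signature generators, the images of $\ccopy$ and $\cdel$ are edge-free, and the pushouts and coproducts that assemble the rest of the image keep every apex discrete; every vertex therefore has degree $(0,0)$, so the ``arbitrary out-degree'' clause of partial left-monogamy is only ever used vacuously, and the entire content of the lemma reduces to the left leg remaining mono, which is precisely what the composition lemma guarantees.
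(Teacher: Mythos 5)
Your proof is correct and matches the paper's intent: the paper disposes of this lemma with ``By definition,'' meaning exactly what you spell out --- the images of the copy and discard generators are discrete cospans with mono left legs (hence partial left-monogamous), and the already-established closure of partial left-monogamy under identities, symmetries, composition and tensor handles the rest. Your additional observation that every apex in this image stays discrete, so the degree condition is only ever used with degree $(0,0)$, is a correct and harmless refinement.
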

\begin{proof}
    By definition.
\end{proof}

\begin{cor}
    The image of \(
    \termandfrobtohypsigma \circ \tracedandcomonoidtofrob{\Sigma}
    \) is in \(\plmcspdhyp\).
\end{cor}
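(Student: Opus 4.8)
The plan is to reduce the claim to the behaviour of the morphism on generators, and then invoke the closure properties of \(\plmcspdhyp\) established earlier in this section. The domain \(\stmc{\Sigma} + \comon\) is a coproduct of PROPs, so every one of its morphisms is built from the generators of the two summands---the signature boxes of \(\stmc{\Sigma}\) together with identities and symmetries, and the copy and discard generators of \(\comon\)---using the PROP operations of composition, tensor and symmetry, together with the trace. Since \(\termandfrobtohypsigma \circ \tracedandcomonoidtofrob{\Sigma}\) preserves all of these operations (the trace of \(\stmc{\Sigma}\) being sent to the canonical trace, as in \autoref{lem:stmc-subcat-hypc}), it suffices to check that every generator is sent into \(\plmcspdhyp\) and that \(\plmcspdhyp\) is closed under the operations used to assemble the domain.

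The closure is exactly what the preceding results provide: the three lemmas above show that partial left-monogamy is preserved by identities, symmetries, composition and tensor, making \(\plmcspdhyp\) a sub-PROP, and the preceding proposition shows that the canonical trace restricts to it. For the generators, I would split into the two summands. The comonoid generators are handled directly by the preceding lemma, which places the image of \(\frobtohyp{\Sigma} \circ \comonoidtofrob\) inside \(\plmcspdhyp\). For the traced summand, the composite agrees with \(\termandfrobtohypsigma \circ \tracedtosymandfrobsigma\) on \(\stmc{\Sigma}\), whose image consists precisely of the partial monogamous cospans by \autoref{thm:termtohyp-image}; it then remains only to observe the inclusion \(\pmcspdhyp \subseteq \plmcspdhyp\). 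Comparing the two degree tables, every vertex permitted by partial monogamy satisfies the weaker partial left-monogamy constraint (in-degree \(0\) on input-interface vertices, in-degree at most \(1\) elsewhere, with arbitrary out-degree), while monogamy of \(f\) is retained and the monogamy condition on \(g\) is simply dropped.

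Combining these, every generator of \(\stmc{\Sigma} + \comon\) lands in the traced sub-PROP \(\plmcspdhyp\), which is closed under exactly the operations from which every morphism of the domain is built, so the entire image lies in \(\plmcspdhyp\). The one point that genuinely needs care---and the closest thing to an obstacle---is the trace: because morphisms of \(\stmc{\Sigma}\) are generated using the trace and not merely composition and tensor, it is essential to use the preceding proposition (that the canonical trace is a trace on \(\plmcspdhyp\)) rather than sub-PROP closure alone, and to confirm that the composite does carry the trace of a traced term to the canonical trace of the corresponding image.
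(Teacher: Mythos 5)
Your argument is correct and matches the route the paper intends: the paper states this as an immediate corollary of the preceding lemma (the comonoid part), with the traced summand handled by \autoref{thm:termtohyp-image} together with the observation that partial monogamy implies partial left-monogamy, and the whole assembled using the closure of \(\plmcspdhyp\) under composition, tensor and the canonical trace. You have simply made explicit the steps the paper leaves implicit, including the one genuinely non-automatic point (closure under trace), so there is nothing to correct.
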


To show the correspondence between \(\stmcsigma + \ccomon\) and
\(\plmcspdhyp\), we use a similar strategy to the one of
\autoref{thm:termtohyp-image}.

\begin{lem}\label{lem:discrete-mono}
    Given a discrete hypergraph \(X \in \hypsigma\), any cospan
    \(\cospan{m}[f]{X}{n}\) with \(f\) mono is in the image of
    \(\frobtohypsigma \circ \comonoidtofrob\).
\end{lem}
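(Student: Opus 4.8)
The plan is to exploit the standard presentation of the free commutative comonoid PROP, namely \(\ccomon \cong \op{\finset}\), under which a morphism \(m \to n\) is precisely a function \(\phi \colon [n] \to [m]\) recording, for each of the \(n\) outputs, which of the \(m\) inputs it is copied from (an input with empty preimage being discarded). First I would compute the image of \(\frobtohypsigma \circ \comonoidtofrob\) on such a \(\phi\). Since both are PROP morphisms it suffices to evaluate on the generators: by \autoref{prop:frobenius-map}, \(\comonoidtofrob\) followed by \(\frobtohypsigma\) sends the copy \(\ccopy\) to \(\cospan{1}[\id]{1}[g]{2}\) and the discard \(\cdel\) to \(\cospan{1}[\id]{1}{0}\), in each case with an \emph{identity} left leg. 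I would then verify, by induction on the term and using that composition is pushout, that every comonoid morphism \(\phi\) is sent to \(\cospan{m}[\id]{m}[\phi]{n}\): the apex is the \(m\)-vertex discrete hypergraph, the left leg is the identity, and the right leg is \(\phi\). In particular the image consists exactly of discrete cospans whose left leg is an isomorphism onto the apex.

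Given the target cospan \(\cospan{m}[f]{X}[g]{n}\) with \(f\) mono, I would use monogamy of \(f\) to identify \([m]\) with its image in \(X\); since \(X\) is discrete and spanned by its inputs, up to cospan isomorphism we may take \(X = [m]\) and \(f = \id\), and then set \(\phi := g \colon [n] \to [m]\). The realizing comonoid term \(c \in \ccomon\) is built explicitly: for each input \(i \in [m]\) take the \(|\phi^{-1}(i)|\)-fold iterated copy (the empty case being \(\cdel\), the singleton case \(\id\)), tensor these together across all \(i\), and post-compose with the permutation that rearranges the grouped wires into the output order dictated by \([n]\). By the generator computation of the previous paragraph, \((\frobtohypsigma \circ \comonoidtofrob)(c)\) is isomorphic to \(\cospan{m}[\id]{m}[\phi]{n}\), which coincides with \(\cospan{m}[f]{X}[g]{n}\) under the identification supplied by \(f\); hence the cospan lies in the image.

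The step requiring the most care — the genuine content of the lemma — is the inductive verification that the cospan assigned to a comonoid term has right leg exactly the underlying function \(\phi\) and left leg the identity. This is a matter of tracking the pushouts in \(\cspdhyp\): each composition glues the output vertices of the first factor onto the identity-indexed input vertices of the second, and one must confirm that such gluings never merge two input vertices (this is precisely where the monogamy of the left legs, inherited from the generators, is used) and that they compose the right legs as functions. The remaining ingredients — the reduction to \(X = [m]\) via the mono hypothesis on \(f\), and the explicit copy/discard/permutation construction realizing an arbitrary \(\phi\) — are then routine once this pushout book-keeping is settled.
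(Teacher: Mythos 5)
The paper gives no proof of this lemma at all, so there is nothing to compare your argument against; judged on its own, your computation of the image is correct, but it ends up \emph{contradicting} the statement as written, and the contradiction is papered over at one step. Your identification \(\ccomon \cong \op{\finset}\), the evaluation on generators, and the pushout bookkeeping showing that the comonoid morphism corresponding to \(\phi\colon [n]\to[m]\) is sent to \(\cospan{m}[\id]{m}[\phi]{n}\) are all right --- and, as you yourself observe, they show that the image of \(\frobtohypsigma \circ \comonoidtofrob\) consists exactly of the discrete cospans whose \emph{left leg is an isomorphism}. The lemma only assumes \(f\) is mono, and that is strictly weaker: the cospan \(\cospan{0}{1}{0}\) (a single isolated vertex, i.e.\ \(\trace{1}{\id[1]}\)) has a vacuously mono left leg but is not in the image, since the only morphism \(0 \to 0\) in \(\ccomon\) is \(\id[0]\), whose image is the empty cospan. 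The step where you slip past this is ``since \(X\) is discrete and spanned by its inputs, up to cospan isomorphism we may take \(X = [m]\) and \(f = \id\)'': nothing in the hypotheses says \(X\) is spanned by its inputs, i.e.\ that \(f\) is epi, and without that the reduction to \(X=[m]\) fails for any \(X\) containing a vertex outside the image of \(f\).

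So the genuine gap is not in your construction but in the mismatch between what you actually prove (the statement with \(f\) mono \emph{and} epi, equivalently \(f\) a bijection onto the vertices of \(X\)) and what is claimed (\(f\) mono). The stronger hypothesis is what the application needs anyway: in the decomposition underlying \autoref{thm:comonoid-fully-complete}, every in-degree-zero vertex of the discrete first component that is not in the image of \(f\) is routed through the traced part \(L\) of the interface, so the left leg of the cospan to which the lemma is applied is a bijection. You should either strengthen the hypothesis accordingly or note explicitly that isolated vertices outside the image of \(f\) must be excluded; with that amendment, your argument --- generators have identity left legs, composition is pushout along an identity so left legs stay invertible and right legs compose contravariantly as functions, and every \(\phi\colon[n]\to[m]\) is realised by an explicit copy/discard/permutation term --- is complete and is surely the argument the authors intended.
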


This leads to a version of \autoref{cor:stmc-graph-iso} for traced terms
additionally equipped with a comonoid structure: every such term is in
one-to-one correspondence with isomorphism classes of partial left-monogamous
cospans of hypergraphs.

\begin{thm}\label{thm:comonoid-fully-complete}
    \(\stmcsigma + \ccomon \cong \plmcspdhyp\).
\end{thm}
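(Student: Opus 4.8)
The plan is to reuse the template of \autoref{thm:termtohyp-image} and \autoref{cor:stmc-graph-iso}: consider the PROP morphism $G \coloneqq \termandfrobtohypsigma \circ \tracedandcomonoidtofrob{\Sigma}$, whose image lands in $\plmcspdhyp$ by the preceding corollary, and show that $G$ restricts to a full and faithful---hence invertible---PROP morphism onto $\plmcspdhyp$. Since $\termandfrobtohypsigma$ is a faithful interpretation into $\cspdhyp$, faithfulness of $G$ reduces to faithfulness of $\tracedandcomonoidtofrob{\Sigma} \colon \stmcsigma + \ccomon \to \smcsigma + \frob$, which I would obtain by a canonical read-back argument as discussed below.

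For fullness I would adapt the trace-of-cospans decomposition of \autoref{thm:termtohyp-image} almost verbatim, the crucial point being that partial left-monogamy constrains only the \emph{in}-degree of vertices. Given a partial left-monogamous cospan $\cospan{m}[f]{H}[g]{n}$, take $V$ to be the discrete hypergraph of all vertices, $E$ the disconnected hyperedges of $H$ together with their source and target vertices, $S$ and $T$ the source- and target-occurrences of those edges, and $L$ the in-degree-zero vertices outside the input interface (which, unlike the purely traced case, may now have nonzero out-degree, standing for traced forks). Forming the composite
\[
    \cospan{L + T + m}{L + V}{L + S + n}
    \,\seq\,
    \cospan{L + S + n}{L + E + n}{L + T + n}
\]
and tracing over $L + T$ reconstructs $H$ after computing the pushouts. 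The edge component $\cospan{L + S + n}{L + E + n}{L + T + n}$ is monogamous acyclic, hence in the image of $\termtohypsigma$ by \autoref{thm:monog-acyclic-full}.

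The genuinely new ingredient is the wiring cospan $\cospan{L + T + m}{L + V}{L + S + n}$. Its left leg is mono precisely because partial left-monogamy forces in-degree at most one, so each vertex has a unique in-connection coming from an edge target ($T$), from the input interface ($m$), or from nowhere ($L$); its right leg, by contrast, may be non-mono, since a single vertex can feed the sources of several edges, which is exactly the forking of the comonoid structure. As this cospan is \emph{discrete} with mono left leg, \autoref{lem:discrete-mono} places it in the image of $\frobtohypsigma \circ \comonoidtofrob$. Thus every component sitting under the trace lies in the image of $G$, and because the trace is part of the structure of $\stmcsigma + \ccomon$, the whole cospan does too.

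The step I expect to be the main obstacle is faithfulness. In the purely traced case it was immediate, since $\tracedtosymandfrobsigma$ was a faithful inclusion; here $\tracedandcomonoidtofrob{\Sigma}$ is a copairing, and both the trace (realised through the canonical trace) and the comonoid live inside the \emph{same} Frobenius structure of $\smcsigma + \frob$, so one must rule out that the extra monoid and special-Frobenius equations available in the target collapse terms that were distinct in $\stmcsigma + \ccomon$. I would resolve this by turning the decomposition above into a canonical read-back: recovering a term from its partial left-monogamous cospan is well-defined up to the axioms of a traced comonoid category, so an isomorphism of cospans forces equality of the underlying terms. Verifying that this read-back respects the bookkeeping of the degree-$(0,m)$ vertices representing traced forks is where the remaining care is needed.
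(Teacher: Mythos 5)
Your proposal is correct and follows essentially the same route as the paper: fullness via the trace-of-cospans decomposition of \autoref{thm:termtohyp-image}, relaxing the discrete wiring component to be partial left-monogamous (mono left leg only, non-mono right leg encoding the forks) and invoking \autoref{lem:discrete-mono} to place it in the image of \(\frobtohypsigma \circ \comonoidtofrob\), with the remaining components staying in the image of \(\termtohypsigma\). The only divergence is that the paper dispatches faithfulness in one clause by citing that the constituent PROP morphisms are faithful, whereas you flag it as the main obstacle and sketch a read-back argument --- extra caution rather than a different approach.
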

\begin{proof}
    Since \(\termandfrobtohypsigma\) and \(\comonoidtohypsigma\) are faithful,
    it suffices to show that a cospan \(\cospan{m}{F}{n}\) in
    \(\plmcspdhyp\) can be decomposed into a traced cospan in which every
    component under the trace is in the image of either
    \(\termandfrobtohypsigma\) or \(\frobtohypsigma \circ \comonoidtofrob\).
    This is achieved by taking the construction of \autoref{thm:termtohyp-image}
    and allowing the first component to be partial left-monogamous; by
    \autoref{lem:discrete-mono} this is in the image of
    \(\frobtohypsigma \circ \comonoidtofrob\).
    The remaining components remain in the image of \(\termtohypsigma\).
    Subsequently, the entire traced cospan must be in the image of \(
    \termandfrobtohypsigma \circ [\tracedtosymandfrob, \comonoidtofrob]
    \).
\end{proof}

Much like with the result for traced terms and partial monogamous cospans of
hypergraphs, this result is key for both performing rewriting on traced
comonoid terms and for showing the soundness and completeness of such a
rewriting system.

\section{Graph rewriting}

We have now shown that we can reason up to the axioms of symmetric traced
categories with a comonoid structure using hypergraphs: string diagrams equal by
topological deformations are translated into isomorphic cospans of hypergraphs.
Already this is a useful tool to have for reasoning with string diagrams, but
ultimately this only allows us to `move boxes and wires about' while preserving
the connectivity between them; the boxes themselves cannot be altered.

Reasoning with string diagrams becomes more interesting when performed modulo
a \emph{monoidal theory}, in which we have additional equations between terms.
These equations can be used to replace certain patterns of boxes and wires
with others, actually changing the make-up of a diagram.
Some examples of useful monoidal theories will be examined in
\autoref{sec:case-studies}.

The process of translating one traced string diagram term into another is
defined formally as \emph{term rewriting}.

\begin{defi}[Term rewriting]\label{def:term-rewriting}
    A \emph{rewriting system} \(\mcr\) for a traced PROP \(\stmc{\Sigma}\)
    consists of a set of \emph{rewrite rules} \(
    \rrule{
        \iltikzfig{strings/category/f}[box=l,colour=white,dom=i,cod=j]
    }{
        \iltikzfig{strings/category/f}[box=r,colour=white,dom=i,cod=j]
    }
    \).
    Given terms \(
    \iltikzfig{strings/category/f}[box=g,colour=white,dom=m,cod=n]
    \) and \(
    \iltikzfig{strings/category/f}[box=h,colour=white,dom=m,cod=n]
    \) in \(\stmc{\generators}\) we write \(
    \iltikzfig{strings/category/f}[box=g,colour=white]
    \rewrite[\mcr]
    \iltikzfig{strings/category/f}[box=h,colour=white]
    \) if there exists rewrite rule \((
    \iltikzfig{strings/category/f}[box=l,colour=white,dom=i,cod=j],
    \iltikzfig{strings/category/f}[box=r,colour=white,dom=i,cod=j]
    )\) in \(\mcr\) and \(
    \iltikzfig{strings/category/f-2-2}[box=c,colour=white,dom1=j,dom2=m,cod1=i,cod2=n]
    \) in \(\stmc{\Sigma}\) such that \(
    \iltikzfig{strings/category/f}[box=g,colour=white]
    =
    \iltikzfig{strings/rewriting/rewrite-l}
    \) and \(
    \iltikzfig{strings/category/f}[box=h,colour=white]
    =
    \iltikzfig{strings/rewriting/rewrite-r}
    \) by axioms of STMCs.
\end{defi}

Term rewriting using string diagrams is convenient for pen-and-paper reasoning,
but as we have already mentioned is difficult to automate.
Now armed with hypergraph interpretations of string diagram terms, we can turn
to \emph{graph rewriting}.

Graph rewriting is a deeply studied field with a plethora of techniques.
To tie in with our categorical motivations, we use the framework of
\emph{double pushout rewriting} (DPO rewriting), which has its roots in the
70s~\cite{ehrig1976parallelism}.
Rather than using the original presentation, we use an extension, known as
\emph{double pushout rewriting with interfaces}
(DPOI rewriting)~\cite{bonchi2017confluence}.
This definition is advantageous as confluence of graph rewriting using
interfaces has been shown to be decidable~\cite[Cor.\ 20]{bonchi2017confluence}.

In DPO rewriting, rewrite rules are specified as pairs of morphisms from some
shared interface graph.
While the framework can be applied to all manner of graph-based structures, the
definitions we present will be in terms of hypergraphs.

\begin{defi}[DPO rule]
    Given interfaced hypergraphs \(
    \cospan{i}[a_1]{L}[a_2]{j}
    \) and \(
    \cospan{i}[b_1]{R}[b_2]{j}
    \), their \emph{DPO rule} in \(\hypsigma\) is a span \(
    \spann{L}[[a_1,a_2]]{i+j}[[b_1,b_2]]{R}
    \).
\end{defi}

A DPO rule makes up part of a larger diagram that represents the application of
said rule in a larger context graph.

\begin{defi}[DPO(I) rewriting]\label{def:dpo}
    Let \(\mcr\) be a set of DPO rules.
    Then, for morphisms \(G \leftarrow m+n\) and \(H \leftarrow m+n\) in
    \(\hypsigma\), there is a rewrite \(G \trgrewrite[\mcr] H\) if there
    exist a rule \(
    \spann{L}{i+j}{R} \in \mcr
    \) and cospan \(
    \cospan{i+j}{C}{m+n} \in \hypsigma
    \) such that the diagram in the left of \autoref{fig:dpo} commutes.
\end{defi}

\begin{figure}
    \centering
    \raisebox{1em}{\includestandalone{figures/graphs/dpo/dpo}}
    \qquad
    \includestandalone{figures/graphs/dpo/pushout-complement}
    \caption{The DPO diagram and a pushout complement}
    \label{fig:dpo}
\end{figure}

The first thing to note is that the graphs in the DPO diagram have a
\emph{single} interface \(G \leftarrow m + n\) instead of the cospans \(
\cospan{m}{G}{n}
\) we are used to.
Before performing DPO rewriting in \(\hypsigma\), the interfaces must be
`folded' into one.

\begin{defiC}[{\cite[Sec.\ 4.4]{bonchi2022stringa}}]
    Let \(\morph{\foldinterfaces}{\smc{\Sigma} + \frob}{\smc{\Sigma} + \frob}\)
    be defined as having action \(
    \iltikzfig{strings/category/f}[box=f,colour=white,dom=m,cod=n]
    \mapsto
    \iltikzfig{strings/rewriting/folding}[box=f,colour=white,dom=m,cod=n]
    \).
\end{defiC}

Note that the result of applying \(\foldinterfaces\) is no longer a valid
traced term due to the input wire bending round to the output.
This is not an issue for the purpose of rewriting traced terms, as long as we
`unfold' the interfaces once rewriting is completed.
When viewed through the lens of hypergraphs, the distinction is even less
important.

\begin{propC}[{\cite[Prop.\ 4.8]{bonchi2022string}}]
    For \(\iltikzfig{strings/category/f}[box=f,colour=white,dom=m,cod=n]\)
    in \(\smcsigma + \frob\), if \(
    \termandfrobtohypsigma[\iltikzfig{strings/category/f}[box=f,colour=white]]
    =
    \cospan{m}[i]{F}[o]{n}
    \) then \(
    \foldinterfaces[
        \termandfrobtohypsigma[\iltikzfig{strings/category/f}[box=f,colour=white]]
    ]
    \) is isomorphic to \(
    \cospan{0}[]{F}[i+o]{m+n}
    \).
\end{propC}

In order to apply a given DPO rule \(\spann{L}{i+j}{R}\) in some larger
graph \(\cospan{m}{G}{n}\), a morphism \(L \to G\) must first be identified.
The next step is to `cut out' the components of \(L\) that exist in \(G\),
using what is known as a \emph{pushout complement}.
The pushout complement is a key part of DPO rewriting as it defines the
rewriting context for a given rule in a larger term.

\begin{defi}[Pushout complement]\label{def:pushout-complement}
    Let \(i + j \to L \to G \rightarrow m + n\) be morphisms in \(\hypsigma\);
    their \emph{pushout complement} is an object \(C\)
    with morphisms \(i + j \to C \to G\) such that \(\cospan{L}{G}{C}\) is a
    pushout and the right diagram in \autoref{fig:dpo} commutes.
\end{defi}

Given a rule \(L \leftarrow i+j \to R\) and morphism \(L \to G\), a pushout
complement \(i+j \to C \to G\) is effectively \(C\) is `\(G\) with \(L\) cut out
of it'.

Once a pushout complement is computed, the pushout of
\(\spann{C}{i+j}{R}\) can be performed to obtain the completed rewrite \(H\),
Put simply, the pushout \(\cospan{L}{H}{R}\) pastes \(R\) into \(C\) along the
dangling wires left by cutting \(L\) out of \(H\).

\begin{exa}
    Consider the following DPO rule: \[
        \raisebox{2em}{\(\rrule{
                \iltikzfig{strings/category/generator}[box=e_1]
            }{
                \iltikzfig{strings/category/generator}[box=e_2]
            }\)}
        \qquad
        \includestandalone{figures/graphs/dpo/example/rule}
    \]
    Now consider the term \(
    \iltikzfig{graphs/dpo/example/term}
    \); a DPO rewrite is performed as follows:
    \begin{center}
        \includestandalone{figures/graphs/dpo/example/rewrite}
    \end{center}
    As one would expect, the resulting hypergraph is the interpretation of term
    \(\iltikzfig{graphs/dpo/example/term-rewritten}\).
\end{exa}

A pushout complement may not exist for a given rule and matching;
there are two conditions that must be satisfied for this to be the case.
The first ensures that all the sources and targets of a hyperedge are present
in a candidate context.

\begin{defi}[No-dangling-hyperedges condition~{\cite[Prop.\ 3.3.4]{corradini1997algebraic}}]
    Given morphisms \(i + j \xrightarrow{a} L \xrightarrow{f} G\) in \(\hypsigma\),
    they satisfy the \emph{no-dangling} condition if, for every hyperedge not in
    the
    image of \(f\), each of its source and target nodes is either not in the
    image of \(f\) or are in the image of \(f \circ a\).
\end{defi}

\begin{exa}
    The following morphisms do not satisfy the no-dangling-hyperedges
    condition.
    \[
        \iltikzfig{graphs/dpo/no-dangling/k}
        \xrightarrow{a}
        \iltikzfig{graphs/dpo/no-dangling/l}
        \xrightarrow{f}
        \iltikzfig{graphs/dpo/no-dangling/g}
    \]
    To obtain the pushout complement we `cut out' any vertices in the
    rightmost graph which are in the image of \(f\) but not the image of
    \(f \circ a\), as the latter are the interfaces of the rule.
    However, if we cut out the vertices labelled \(2\) and \(3\), the edge
    \(e_2\) will be left with `dangling' tentacles connected to no vertices, a
    malformed hypergraph.
    \begin{center}
        \begin{tikzcd}
            \iltikzfig{graphs/dpo/no-dangling/l}
            \arrow{d}{f}
            &
            \iltikzfig{graphs/dpo/no-dangling/k}
            \arrow{l}{a}
            \arrow{d}
            \\
            \iltikzfig{graphs/dpo/no-dangling/g}
            &
            \iltikzfig{graphs/dpo/no-dangling/c}
            \arrow{l}
        \end{tikzcd}
    \end{center}
    This illustrates why every node in the image of \(f\) must also be in the
    image of \(f \circ a\).
\end{exa}

The second condition enforces that merging of vertices is well-defined.

\begin{defi}[No-identification condition~{\cite[Prop.\ 3.3.4]{corradini1997algebraic}}]
    Given morphisms \(i + j \xrightarrow{a} L \xrightarrow{f} G\) in
    \(\hypsigma\), they satisfy the \emph{no-identification} condition if any
    two distinct elements merged by \(f\) are also in the image of \(f \circ a\).
\end{defi}

\begin{exa}
    The following morphisms do not satisfy the no-identification
    condition.
    \[
        \iltikzfig{graphs/dpo/no-identification/k}
        \xrightarrow{a}
        \iltikzfig{graphs/dpo/no-identification/l}
        \xrightarrow{f}
        \iltikzfig{graphs/dpo/no-identification/g}
    \]
    When trying to construct a pushout complement, the edge \(e_2\) will be
    removed.
    However, since vertices \(2\) and \(3\) are not mapped from the rule
    interfaces, there is no reason that a pushout would glue them together so
    that they are merged in the final graph.
    Therefore no pushout complement can exist.
    \begin{center}
        \begin{tikzcd}
            \iltikzfig{graphs/dpo/no-identification/l}
            \arrow{d}{f}
            &
            \iltikzfig{graphs/dpo/no-identification/k}
            \arrow{l}{a}
            \arrow{d}{b}
            \\
            \iltikzfig{graphs/dpo/no-identification/g}
            &
            \iltikzfig{graphs/dpo/no-identification/c}
            \arrow{l}
        \end{tikzcd}
    \end{center}
    Although the above diagram may look reasonable, it is \emph{not} a pushout;
    the pushout can only merge vertices that are in the image of \(b\).
\end{exa}

With these two conditions, we can establish when pushout complements exist for
a pair of hypergraph homomorphisms.
If there is a pushout complement, then there is an opportunity for a rewrite.

\begin{propC}[{\cite[Prop.\ 3.3.4]{corradini1997algebraic}}]
    \label{prop:pushout-complement}
    Morphisms \(i + j \to L \to G\) have at least one pushout complement if
    and only if they satisfy the no-dangling and no-identification conditions.
\end{propC}

\begin{defi}
    Given a partial monogamous cospan \(\cospan{i}{L}{j}\), a morphism
    \(L \to G\) is called a \emph{matching} if it has at least one pushout
    complement.
\end{defi}

In certain settings, known as
\emph{adhesive categories}~\cite{lack2004adhesive}, it is possible to be more
precise about the number of pushout complements for a given matching and rewrite
rule.

\begin{propC}[{\cite[Lem.\ 15]{lack2004adhesive}}]
    In an adhesive category, pushout complements of \(
    i + j \xrightarrow{a} L \to G\)
    are unique if they exist and \(a\) is mono.
\end{propC}

\begin{propC}[{\cite[Prop.\ 3.5]{lack2005adhesive}}]
    \(\hypsigma\) is adhesive.
\end{propC}

A given pushout complement uniquely determines the rewrite performed, so it
might seem advantageous to always have exactly one.
However, when writing modulo traced comonoid structure there are settings where
having multiple pushout complements is beneficial.

\subsection{Rewriting with traced structure}

In the Frobenius case considered in~\cite{bonchi2022string}, all valid
pushout complements correspond to a valid rewrite, but this is not the case for
the traced monoidal case.
In the symmetric monoidal case considered in \cite{bonchi2022stringa},
pushout complements that correspond to a valid rewrite in the non-traced
symmetric monoidal case are identified as \emph{boundary complements}.
As with monogamy, we will weaken this definition to find one suitable for the
traced case.

\begin{defi}[Traced boundary complement]
    \label{def:traced-boundary-complement}
    A pushout complement as in \autoref{def:pushout-complement} is called a
    \emph{traced boundary complement} if \(c_1\) and \(c_2\) are mono and \(
    \cospan{j+m}[[c_2,d_1]]{C}[[d_2,c_1]]{{n+i}}
    \) is a partial monogamous cospan.
\end{defi}

Unlike~\cite{bonchi2022stringa}, we do not enforce that the matching is mono,
as this cuts off potential rewrites in the \emph{traced} setting, such as a
matching inside a loop: \[
    \iltikzfig{graphs/dpo/matchings/trace-l}
    \to
    \iltikzfig{graphs/dpo/matchings/trace-g}
\]

The definition of a traced boundary complement leads to a definition of double
pushout rewriting for traced terms.

\begin{defi}[Traced DPO]
    For morphisms \(G \leftarrow m+n\) and \(H \leftarrow m+n\) in
    \(\hypsigma\), there is a traced rewrite \(G \trgrewrite[\mcr] H\) if there
    exists a rule \(
    \spann{L}{i+j}{G} \in \mcr
    \) and cospan \(
    \cospan{i+j}{C}{n+m} \in \hypsigma
    \) such that the diagram in \autoref{def:dpo} commutes and
    \(i+j \to C \to G\) is a
    traced boundary complement.
\end{defi}

Some intuition on the construction of traced boundary complements may be
required: this will be provided through a lemma and some examples.

\begin{lem}
    Consider the DPO diagram as in \autoref{fig:dpo}, and let
    \(i+j \to C \to G\) be a traced boundary complement.
    Let \(v \in i\) and \(w_0, w_1, \cdots w_k \in j\) such that
    \(f(a_1(v)) = f(a_2(w_0))\), \(f(a_1(v)) = f(a_2(w_1))\) and so on.
    Then either (1) there exists exactly one \(w_l\) not in the image of \(d_1\)
    such that \(c_1(v) = c_2(w_l)\); (2) \(c_1(v)\) is in the image of \(d_1\);
    or (3) \(c_1(v)\) has degree \((1,0)\).
    The same also holds for \(w \in j\), with the interface map as \(d_2\) and
    the degree as \((0,1)\).
\end{lem}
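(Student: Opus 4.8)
The plan is to read the three cases straight off the partial monogamy of the traced boundary complement, invoking the explicit description of the defining pushout only to match the combinatorial data living on the \(G\)-side with that on the \(C\)-side.

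First I would unfold the pushout of \autoref{def:pushout-complement}. Here \(G\) is the pushout of \(\spann{L}[{[a_1,a_2]}]{i+j}[{[c_1,c_2]}]{C}\), and since \(\hypsigma\) is a presheaf category this pushout is computed as the quotient of the coproduct \(L + C\) by the least equivalence relation identifying \(a_1(v)\) with \(c_1(v)\) for each \(v \in i\) and \(a_2(w)\) with \(c_2(w)\) for each \(w \in j\). Writing \(f\) and \(C \to G\) for the two coprojections, this gives \(f(a_1(v)) = [c_1(v)]\) and \(f(a_2(w)) = [c_2(w)]\) in \(G\) for every boundary vertex, where \([{-}]\) denotes the class in the quotient.

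Next I would apply the degree classification of partial monogamy to the vertex \(c_1(v) \in \vertices{C}\). By the definition of traced boundary complement (\autoref{def:traced-boundary-complement}) the cospan \(\cospan{j+m}[{[c_2,d_1]}]{C}[{[d_2,c_1]}]{n+i}\) is partial monogamous, and \(c_1(v)\) lies in the image of its right leg \([d_2,c_1]\) via \(c_1\). Hence \(c_1(v)\) has degree \((0,0)\) if it also lies in the image of the left leg \([c_2,d_1]\), and degree \((1,0)\) otherwise; the latter is immediately case (3). In the former, since \([c_2,d_1]\) is mono its two components have disjoint images, so \(c_1(v)\) lies either in the image of \(d_1\), which is case (2), or in the image of \(c_2\).

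It then remains to show that the subcase \(c_1(v) \in \operatorname{im}(c_2)\) is exactly case (1). Writing \(c_1(v) = c_2(w')\) with \(w' \in j\), injectivity of \(c_2\) (a component of the mono \([c_2,d_1]\)) makes \(w'\) unique, while disjointness of the images of \(c_2\) and \(d_1\) shows \(c_2(w')\) is not in the image of \(d_1\). Transporting along the pushout of the first step, \(f(a_2(w')) = [c_2(w')] = [c_1(v)] = f(a_1(v))\), so \(w'\) occurs among \(w_0,\dots,w_k\), and no other \(w_l\) can satisfy \(c_1(v) = c_2(w_l)\) by the same injectivity; this is precisely case (1). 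The statement for \(w \in j\) follows by the evident dualisation, exchanging the roles of \((i,c_1,d_2)\) and \((j,c_2,d_1)\) and replacing the degree \((1,0)\) by \((0,1)\), as \(c_2(w)\) now lies in the image of the left leg. I expect the only delicate point to be the pushout bookkeeping of the first step; once the identifications \(a_1(v)\sim c_1(v)\) and \(a_2(w)\sim c_2(w)\) are recorded, the trichotomy and the uniqueness of \(w'\) are forced purely by partial monogamy together with the two mono conditions.
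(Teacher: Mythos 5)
Your proof is correct and takes essentially the same route as the paper's: read the trichotomy off the degree of \(c_1(v)\) under the partial monogamy required of a traced boundary complement, splitting the degree-\((0,0)\) case according to whether \(c_1(v)\) lies in the image of \(c_2\) (case (1), with uniqueness from \(c_2\) mono) or of \(d_1\) (case (2)). The only difference is that you additionally carry out the pushout bookkeeping verifying that the \(w'\) with \(c_1(v)=c_2(w')\) really occurs among the listed \(w_l\) --- a step the paper's much terser proof leaves implicit --- and you correctly use \(d_1\) where the paper's proof appears to contain a typo (\(d_2\)).
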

\begin{proof}
    Since \(c_1(v)\) is in the image of \(c\), it must have either degree
    \((0,0)\) or \((1,0)\) by partial monogamy.
    For it to have degree \((0, 0)\), it must either be in the image of one
    of \(c_2\) or \(d_2\).
    In the case of the former, this means that there must be a \(w_l\) such
    that \(c_1(v) = c_2(w_l)\), and only one such node as \(c_2\) is mono,
    so (1) is satisfied.
    In the case of the latter, (2) is immediately satisfied.
    Otherwise, (3) is satisfied.
    The proof for the flipped case is exactly the same.
\end{proof}

Often there can be valid rewrites in the realm of graphs that are non-obvious
in the term language.
This is because we are rewriting modulo \emph{yanking}.

\begin{exa}
    Consider the rule \(
    \rrule{
        \iltikzfig{graphs/dpo/split-loop/rule-lhs}
    }{
        \iltikzfig{graphs/dpo/split-loop/rule-rhs}
    }.
    \)
    The interpretation of this as a DPO rule in a valid traced boundary
    complement is illustrated below.
    \begin{center}
        \includestandalone[scale=0.8]{figures/graphs/dpo/split-loop/rewrite}
    \end{center}
    This corresponds to a valid term rewrite:
    \[
        \iltikzfig{graphs/dpo/split-loop/derivation-1}
        =
        \iltikzfig{graphs/dpo/split-loop/derivation-2}
        =
        \iltikzfig{graphs/dpo/split-loop/derivation-3}
        =
        \iltikzfig{graphs/dpo/split-loop/derivation-4}
    \]

    Note that applying yanking is required in the term setting because
    the traced wire is flowing from right to left, whereas applying the rule
    requires all wires flowing left to right.
\end{exa}

Unlike regular boundary complements, traced boundary complements need not be
unique.
However, this is not a problem since all pushout complements can be enumerated
given a rule and matching~\cite{heumuller2011construction}.

\begin{exa}
    Consider the rule \(
    \rrule{
        \iltikzfig{graphs/dpo/non-unique/rule-lhs}
    }{
        \iltikzfig{graphs/dpo/non-unique/rule-rhs}
    }
    \).
    Below are two valid traced boundary complements involving a matching of this
    rule.

    \begin{center}
        \includestandalone[scale=0.75]{figures/graphs/dpo/non-unique/rewrite-1}
        \quad
        \includestandalone[scale=0.75]{figures/graphs/dpo/non-unique/rewrite-2}
    \end{center}

    Once again, these derivations arise through yanking:
    \begin{gather*}
        \iltikzfig{graphs/dpo/non-unique/derivation-1}
        =
        \iltikzfig{graphs/dpo/non-unique/derivation-2}
        =
        \iltikzfig{graphs/dpo/non-unique/derivation-3a}
        =
        \iltikzfig{graphs/dpo/non-unique/derivation-4a}
        =
        \iltikzfig{graphs/dpo/non-unique/derivation-5a}
        \\
        \iltikzfig{graphs/dpo/non-unique/derivation-1}
        =
        \iltikzfig{graphs/dpo/non-unique/derivation-2}
        =
        \iltikzfig{graphs/dpo/non-unique/derivation-3b}
        =
        \iltikzfig{graphs/dpo/non-unique/derivation-4b}
        =
        \iltikzfig{graphs/dpo/non-unique/derivation-5b}
    \end{gather*}
\end{exa}

Rewriting modulo yanking also eliminates another foible of rewriting modulo
(non-traced) symmetric monoidal structure.
In the SMC case, the image of the matching must be `convex': any path between
nodes must also be captured by a matching.
This is not necessary in the traced case.

\begin{exa}
    Consider the rule \(
    \rrule{
        \iltikzfig{graphs/dpo/convex/example-l}
    }{
        \iltikzfig{graphs/dpo/convex/example-r}
    }
    \) and the term \(\iltikzfig{graphs/dpo/convex/example-g}\).
    Although it is not immediately obvious, there is in fact
    a matching of the former in the latter.
    Performing the DPO procedure yields the following:
    \begin{gather*}
        \includestandalone[scale=0.8]{figures/graphs/dpo/convex/rewrite}
    \end{gather*}
    In a non-traced setting this is an invalid rule, but it is possible with
    yanking.
    \begin{gather*}
        \iltikzfig{graphs/dpo/convex/example-g}
        =
        =
        \iltikzfig{graphs/dpo/convex/rewrite-4}
        \\[1em]=
        \iltikzfig{graphs/dpo/convex/rewrite-5}
        =
        \iltikzfig{graphs/dpo/convex/example-h}
    \end{gather*}
\end{exa}

We are almost ready to show the soundness and completeness of this DPO rewriting
system with respect to term rewriting, but first we prove some lemmas to be used
in the final result.
The first is a decomposition lemma, akin to that used in
\cite[Lem.\ 24]{bonchi2022stringa}.

\begin{lem}[Traced decomposition]\label{lem:traced-decomposition}
    Given partial monogamous cospans \(
    \cospan{m}[d_1]{G}[d_2]{n}
    \) and \(
    \cospan{i}[a_1]{L}[a_2]{j}
    \), and a morphism \(
    L \xrightarrow{f} G
    \) such that \(i+j \rightarrow L \rightarrow G\) satisfies the no-dangling
    and no-identification conditions, then there exists a partial monogamous
    cospan \(
    \cospan{j+m}[[c_2,d_1]]{C}[[c_1,d_2]]{i+n}
    \) such that \(
    \cospan{m}{G}{n}
    \) can be factored as
    \begin{gather}
        \trace{i}{
            \begin{array}{cc}
                \cospan{i}[a_1]{L}[a_2]{j} \\
                \tensor                    \\
                \cospan{m}{m}{m}
            \end{array}
            \seq
            \cospan{j+m}[[c_2,d_1]]{C}[[c_1,d_2]]{i+n}
        }
        \label{gath:decomposition}
    \end{gather}
    where \(
    \cospan{j+m}[c_2,d_1]{C}[c_1,d_2]{i+n}
    \) is a traced boundary complement.
\end{lem}
\begin{proof}
    Let \(
    i + j \xrightarrow{[c_1, c_2]} C \xleftarrow{[d_1, d_2]} m+n
    \) be defined as a traced boundary complement of \(
    i+j \xrightarrow{[a_1,a_2]} L \xrightarrow{f} G
    \), which exists as the no-dangling and no-identification condition is
    satisfied.
    We assign names to the various cospans in play, and reason string
    diagrammatically:
    \begin{align*}
        \iltikzfig{strings/category/f}[box=l,colour=white,dom=i,cod=j] & := \cospan{i}{L}{j}
                                                                       &
        \iltikzfig{strings/category/f-0-2}[box=\hat{l},colour=white,cod1=i,cod2=j]
                                                                       & :=
        \cospan{0}{L}{i+j}
        \\
        \iltikzfig{strings/category/f-2-2}[box=c,colour=white,dom1=j,dom2=m,cod1=i,cod2=n]
                                                                       & :=
        \cospan{j+m}[[c_2, d_1]]{C}[[c_1, d_2]]{i+n}
                                                                       &
        \iltikzfig{strings/category/f-2-2}[box=\hat{c},colour=white,dom1=i,dom2=j,cod1=m,cod2=n]
                                                                       & :=
        \cospan{i+j}[[c_1, c_2]]{C}[[d_1, d_2]]{m+n}
        \\
        \iltikzfig{strings/category/f}[box=g,colour=white,dom=m,cod=n]
                                                                       & :=
        \cospan{m}{G}{n}
                                                                       &
        \iltikzfig{strings/category/f-0-2}[box=\hat{g},colour=white,cod1=m,cod2=n]
                                                                       & :=
        \cospan{0}{G}{m+n}
    \end{align*}
    Note that the cospans in the left column are partial monogamous by definition
    of rewrite rules and traced boundary complements.
    We will show that  \(
    \iltikzfig{strings/category/f}[box=g,colour=white]
    \) can be decomposed into a form using the two cospans \(
    \iltikzfig{strings/category/f}[box=l,colour=white]
    \) and \(
    \iltikzfig{strings/category/f-2-2}[box=c,colour=white]
    \), along with identities.

    By using the compact closed structure of \(\cspdhyp\), we have the following:
    \begin{gather*}
        \iltikzfig{strings/category/f}[box=g,colour=white,dom=m,cod=n]
        =
        \iltikzfig{graphs/dpo/g-bent}
        \qquad
        \iltikzfig{strings/category/f-2-2}[box=\hat{c},colour=white,dom1=i,dom2=j,cod1=m,cod2=n]
        =
        \iltikzfig{graphs/dpo/cprime-as-c}
        \qquad
        \iltikzfig{strings/category/f-0-2}[box=\hat{l},colour=white,cod1=i,cod2=j]
        =
        \iltikzfig{strings/compact-closed/f-bent-input}[box=l,colour=white,cod=i,dom=j]
    \end{gather*}
    Since \(G\) is the pushout of \(
    L \xleftarrow{[a_1, a_2]} i+j \xrightarrow{[c_1, c_2]} C
    \) and pushout is cospan composition, we also have that \(
    \iltikzfig{strings/category/f-0-2}[box=\hat{g},colour=white,cod1=m,cod2=n]
    =
    \iltikzfig{graphs/dpo/lctilde}
    \).
    Putting this all together we can show that
    \begin{gather*}
        \iltikzfig{strings/category/f}[box=g,colour=white,dom=m,cod=n]
        =
        \iltikzfig{graphs/dpo/g-bent}
        =
        \iltikzfig{graphs/dpo/l-c-bent}
        \\[1em]
        =
        \iltikzfig{graphs/dpo/l-c-bent-1}
        =
        \iltikzfig{graphs/dpo/lc-bent-2}
        =
        \iltikzfig{strings/rewriting/rewrite-l}[dom=m,cod=n]
        \label{gath:deomposition}
    \end{gather*}
    Since the `loop' is constructed in the same manner as the canonical trace on
    \(\cspdhyp\) (and is therefore identical in the graphical notation), this is a
    term in the form of (\ref{gath:decomposition}).
\end{proof}

The second result relates to how cospans of hypergraphs need to be `bent'
using \(\foldinterfaces\) in order to be used in DPO rewriting.
We write \(
\foldinterfaces[\tracedtosymandfrob[\mcr]{\Sigma}]
\) for the pointwise map \(
(
\iltikzfig{strings/category/f}[box=l,colour=white],
\iltikzfig{strings/category/f}[box=r,colour=white]
)
\mapsto
(
\foldinterfaces[
    \tracedtosymandfrob[
    \iltikzfig{strings/category/f}[box=l,colour=white]
]{\Sigma}
],
\foldinterfaces[
    \tracedtosymandfrob[
    \iltikzfig{strings/category/f}[box=r,colour=white]
]{\Sigma}
]
).
\)

\begin{lem}\label{lem:switch-interfaces}
    Let \(
    \iltikzfig{strings/category/f-2-2}[box=c,colour=white,dom1=m,dom2=n,cod1=p,cod2=q]
    \) be a term in \(\smcsigma + \frob\); if \[
        \termandfrobtohypsigmac[
            \iltikzfig{strings/category/f-2-2}[box=c,colour=white]
        ]
        =
        \cospan{m + n}[[f_1,f_2]]{F}[[g_1,g_2]]{p + q}
    \] then \[
        \termandfrobtohypsigmac[
            \iltikzfig{strings/rewriting/c-folded}
        ]
        =
        \cospan{p + m}[[g_1,f_1]]{F}[[f_2,g_2]]{n + q}.
    \]
\end{lem}
\begin{proof}
    By definition of cups and caps in \(\cspdhyp\).
\end{proof}

We now show the main result of this section: that traced DPO rewriting on
partial monogamous cospans of hypergraphs coincides exactly with term rewriting.
That is to say, a term rewrite \(
\iltikzfig{strings/category/f}[box=f]
\rewrite[\mcr]
\iltikzfig{strings/category/f}[box=g]
\) is valid if and only if the hypergraph interpretation of \(
\iltikzfig{strings/category/f}[box=f]
\) can be rewritten using traced DPO to the hypergraph interpretation of \(
\iltikzfig{strings/category/f}[box=g]
\).

\begin{thm}\label{thm:traced-rewriting}
    Let \(\mcr\) be a rewriting system on \(\stmcsigma\).
    Then,
    \begin{gather*}
        \iltikzfig{strings/category/f}[box=g,colour=white,dom=m,cod=n]
        \rewrite[\mcr]
        \iltikzfig{strings/category/f}[box=h,colour=white,dom=m,cod=n]
        \,\,
        \text{if and only if}
        \,\,
        \termandfrobtohypsigma[
            \foldinterfaces[
                \tracedtosymandfrob[
                    \iltikzfig{strings/category/f}[box=g,colour=white]
                ]{\Sigma}
            ]
        ]
        \grewrite[
            \termandfrobtohypsigma[
                \foldinterfaces[
                    \tracedtosymandfrob[\mcr]{\Sigma}
                ]
            ]
        ]
        \termandfrobtohypsigma[
            \foldinterfaces[
                \tracedtosymandfrob[
                    \iltikzfig{strings/category/f}[box=h,colour=white]
                ]{\Sigma}
            ]
        ].
    \end{gather*}
\end{thm}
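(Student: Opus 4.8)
The plan is to prove both directions by setting up a correspondence between term-level \emph{contexts} and graph-level \emph{traced boundary complements}, mediated by the translation $\termandfrobtohypsigma \circ \tracedtosymandfrobsigma$ and the isomorphism $\stmcsigma \cong \pmcspfihyp$ of \autoref{cor:stmc-graph-iso}. Throughout I will use that this translation is a traced PROP morphism, so it preserves composition, tensor, and trace, and that by \autoref{thm:termtohyp-image} its image is exactly the partial monogamous cospans. Since DPO rewriting operates on single-interface graphs, I will also use that folding via $\foldinterfaces$ leaves the underlying graph unchanged up to isomorphism (\cite[Prop.~4.8]{bonchi2022string}); this lets me pass freely between the two-interface cospans $\cospan{m}{G}{n}$ used at the term level and the folded graphs appearing in the statement.

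For the forward direction, suppose $g \rewrite[\mcr] h$. By \autoref{def:term-rewriting} there is a rule $\rrule{l}{r}$ with $l,r \colon i \to j$ and a context $c \colon j+m \to i+n$ in $\stmcsigma$ with $g = \trace{i}{(l \tensor \id[m]) \seq c}$ and $h = \trace{i}{(r \tensor \id[m]) \seq c}$ up to the axioms of STMCs. Applying the translation and using that it preserves trace and composition yields a factorisation of $G := \termandfrobtohypsigma[\tracedtosymandfrobsigma[g]]$ as a trace of $\termtohypsigma[l] \tensor \id[m]$ composed with $C := \termandfrobtohypsigma[\tracedtosymandfrobsigma[c]]$. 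As $C$ is in the image of the translation it is partial monogamous, and since partial monogamous cospans have mono legs, the split maps $c_1,c_2$ are mono; hence, arranged with interfaces $j+m$ and $i+n$, $C$ is a traced boundary complement in the sense of \autoref{def:traced-boundary-complement}. Reading the trace-composition as the pushout it computes via the compact closed structure of $\cspdhyp$ (exactly as in the proof of \autoref{lem:traced-decomposition}) exhibits $G$ as the pushout of $L \leftarrow i+j \to C$, and the identical computation with $r$ exhibits $H$ as the pushout of $R \leftarrow i+j \to C$ over the \emph{same} complement. This is a traced DPO step, and folding the interfaces gives the folded graph rewrite of the statement.

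For the converse, suppose the folded graph rewrite holds: there is a folded rule, a matching, and a traced boundary complement $C$ making the DPO diagram commute, with $G$ (respectively $H$) the pushout of $L \leftarrow i+j \to C$ (respectively $R \leftarrow i+j \to C$). Unfolding the interfaces and applying \autoref{lem:traced-decomposition} — whose no-dangling and no-identification hypotheses are guaranteed by the existence of the complement (\autoref{prop:pushout-complement}) — factors $G$ as $\trace{i}{(L \tensor \id[m]) \seq C}$ with $C$ the traced boundary complement. As $C$ is partial monogamous, \autoref{cor:stmc-graph-iso} supplies a unique term $c \in \stmcsigma$ with $\termandfrobtohypsigma[\tracedtosymandfrobsigma[c]] = C$; since $L = \termtohypsigma[l]$ for the left-hand side $l$ of the rule, transporting the factorisation back across the isomorphism gives $g = \trace{i}{(l \tensor \id[m]) \seq c}$, and the same complement gives $h = \trace{i}{(r \tensor \id[m]) \seq c}$, which is precisely a term rewrite $g \rewrite[\mcr] h$.

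The main obstacle will be the careful bookkeeping of interfaces: reconciling the single-interface DPO formulation with the two-interface cospans used at the term level, and verifying that the partial monogamous cospan produced from a context has its legs split as $[c_2,d_1]$ and $[c_1,d_2]$ so as to meet the monicity and degree conditions of a traced boundary complement. Once \autoref{lem:traced-decomposition} is in hand, the soundness and completeness essentially reduce to the faithfulness and structure-preservation of the translation, so the genuine content is concentrated entirely in identifying the two notions of context; the folding step contributes only routine transport across the isomorphism of \cite[Prop.~4.8]{bonchi2022string}.
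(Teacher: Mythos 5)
Your proposal is correct and follows essentially the same route as the paper: the forward direction uses functoriality of the translation to turn the term-level factorisation $g = \trace{i}{(l \tensor \id) \seq c}$ into the two pushout squares of a DPO step and then checks that the resulting complement satisfies the traced boundary conditions, while the converse uses fullness onto partial monogamous cospans together with \autoref{lem:traced-decomposition} to recover the term-level context. The only cosmetic differences are that the paper works directly with the folded cospans (so that trace-then-compose becomes composition of single-interface cospans, whose composition is literally the pushout) and that uniqueness of $c$ is not needed, only existence.
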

\begin{proof}
    First the \((\Rightarrow)\) direction.
    If \(
    \iltikzfig{strings/category/f}[box=g,colour=white]
    \rewrite[\mcr]
    \iltikzfig{strings/category/f}[box=h,colour=white]
    \) then we have \(
    \iltikzfig{strings/category/f}[box=g,colour=white]
    =
    \iltikzfig{strings/rewriting/rewrite-l}
    \) and \(
    \iltikzfig{strings/rewriting/rewrite-r}
    =
    \iltikzfig{strings/category/f}[box=h,colour=white]
    \); we must derive the DPO diagram in \(\hypsigma\).
    First we give names to the following cospans:
    \begin{alignat*}{3}
        \cospan{0}{L}{i + j}
         & \coloneqq
        \termandfrobtohypsigma[
            \foldinterfaces[
                \tracedtosymandfrobsigma[
                    \iltikzfig{strings/category/f}[box=l,colour=white]
                ]
            ]
        ]
         &           & =
        \termandfrobtohypsigma[
            \iltikzfig{strings/rewriting/l-folded}
        ]
        \\
        \cospan{0}{R}{i + j}
         & \coloneqq
        \termandfrobtohypsigma[
            \foldinterfaces[
                \tracedtosymandfrobsigma[
                    \iltikzfig{strings/category/f}[box=r,colour=white]
                ]
            ]
        ]
         &           & =
        \termandfrobtohypsigma[\iltikzfig{strings/rewriting/r-folded}]
        \\
        \cospan{0}{G}{m + n}
         & \coloneqq
        \termandfrobtohypsigma[
            \foldinterfaces[
                \tracedtosymandfrobsigma[
                    \iltikzfig{strings/category/f}[box=g,colour=white]
                ]
            ]
        ]
         &           & =
        \termandfrobtohypsigma[\iltikzfig{strings/rewriting/lc-folded-shifted}]
        \\
        \cospan{0}{H}{m + n}
         & \coloneqq
        \termandfrobtohypsigma[
            \foldinterfaces[
                \tracedtosymandfrobsigma[
                    \iltikzfig{strings/category/f}[box=h,colour=white]
                ]
            ]
        ]
         &           & =
        \termandfrobtohypsigma[\iltikzfig{strings/rewriting/rc-folded-shifted}]
    \end{alignat*}

    Moving into \(\smcsigma + \frob\), we have that \(
    \iltikzfig{strings/rewriting/lc-folded-shifted}
    =
    \iltikzfig{strings/rewriting/lc-folded}
    \); so by functoriality \(
    \termandfrobtohypsigma[
        \foldinterfaces[
            \tracedtosymandfrobsigma[
                \iltikzfig{strings/category/f}[box=g,colour=white]
            ]
        ]
    ]
    =
    \termandfrobtohypsigma[\iltikzfig{strings/rewriting/l-folded}]
    \seq
    \termandfrobtohypsigma[\iltikzfig{strings/rewriting/c-folded}]
    \), i.e.\ \(
    \cospan{0}{G}{m + n} =
    \cospan{0}{L}{i + j}
    \seq
    \cospan{i + j}{C}{m + n}
    \).
    Cospan composition is by pushout, so \(\cospan{L}{G}{C}\) is a pushout.
    Using the same reasoning, \(\cospan{R}{G}{C}\) is also a pushout; this
    gives us the DPO diagram.
    All that remains is to check that the aforementioned pushouts are traced
    boundary complements; this follows by \autoref{lem:switch-interfaces} as \(
    \termandfrobtohypsigma[
        \tracedandcomonoidtofrobsigma[
            \iltikzfig{strings/category/f-2-2}[box=c,colour=white]
        ]
    ]
    \) is partial monogamous.

    Now for the \(\ifdir\) direction: we assume we have a traced DPO
    rewrite, so there exist cospans \(
    \cospan{0}{L}{i + j},
    \cospan{0}{R}{i + j},
    \cospan{i + j}{C}{m + n}
    \) as above such that the DPO diagram commutes and
    \(i + j \to C \to G\) is a traced boundary complement.
    We must show that \(
    \iltikzfig{strings/category/f}[box=g,colour=white]
    =
    \iltikzfig{strings/rewriting/rewrite-l}
    \) and \(
    \iltikzfig{strings/category/f}[box=h,colour=white]
    =
    \iltikzfig{strings/rewriting/rewrite-r}
    \).

    We have that \(
    \cospan{0}{G}{m + n} =
    \cospan{0}{L}{i + j} \seq
    \cospan{i + j}[[c_1,c_2]]{C}[[d_1,d_2]]{m + n}
    \) as cospan composition is by pushout.
    Let \(
    \iltikzfig{strings/category/f-2-2}[box=c^\prime, colour=white,dom1=i,dom2=j,cod1=m,cod2=n]
    \) be the term in \(\smcsigma + \frob\) such that \(
    \termandfrobtohypsigma[
        \iltikzfig{strings/category/f-2-2}[box=c^\prime, colour=white]
    ]
    =
    \cospan{i + j}[[c_1,c_2]]{C}[[d_1,d_2]]{m + n}
    \), which exists as \(\termandfrobtohypsigma\) is full.

    The cospan \(\cospan{j + m}[[c_2,d_1]]{C}[[c_1,d_2]]{i + n}\)
    is partial monogamous because \(i + j \to C \to G\) is a traced
    boundary complement.
    Let \(
    \iltikzfig{strings/category/f-2-2}[box=c, colour=white,dom1=j,dom2=m,cod1=i,cod2=n]
    \)  be the term in \(\smcsigma + \frob\) such that \(
    \termandfrobtohypsigma[
        \iltikzfig{strings/category/f-2-2}[box=c, colour=white]
    ]
    =
    \cospan{j + m}[[c_2,d_1]]{C}[[c_1,d_2]]{i + n}
    \); by \autoref{lem:switch-interfaces}, we have \(
    \termandfrobtohypsigma[
        \iltikzfig{strings/rewriting/c-folded}
    ]
    =
    \cospan{i + j}[[c_1,c_2]]{C}[[d_1,d_2]]{m + n}
    .\)

    So we have that \(
    \termandfrobtohypsigma[
        \foldinterfaces[
            \iltikzfig{strings/category/f}[box=g,colour=white]
        ]
    ]
    =
    \termandfrobtohypsigma[
        \foldinterfaces[
            \iltikzfig{strings/category/f}[box=l,colour=white]
        ]
    ]
    \seq
    \termandfrobtohypsigma[
        \iltikzfig{strings/category/f-2-2}[box=c^\prime, colour=white]
    ]
    \); by fullness we derive that \(
    \iltikzfig{strings/rewriting/g-folded-box}
    =
    \iltikzfig{strings/rewriting/lc}
    =
    \iltikzfig{strings/rewriting/lc-folded}
    =
    \iltikzfig{strings/rewriting/lc-folded-shifted}
    \).
    This means that \(
    \foldinterfaces[
        \iltikzfig{strings/category/f}[box=g,colour=white]
    ]
    =
    \iltikzfig{strings/rewriting/lc-folded-shifted}
    \) so `unfolding' the interface gives us \(
    \iltikzfig{strings/category/f}[box=g,colour=white]
    =
    \iltikzfig{strings/rewriting/rewrite-l}
    \).
    Since \(
    \termandfrobtohypsigma[
        \iltikzfig{strings/category/f-2-2}[box=c, colour=white]
    ]
    \) is partial monogamous, \(
    \iltikzfig{strings/category/f-2-2}[box=c, colour=white]
    \) is in \(\stmcsigma\).
    As the trace in \(\stmcsigma\) is the canonical trace, the entire term is in
    \(\stmcsigma\), completing the proof.
    The same procedure holds for rewriting from the other direction.
\end{proof}

\subsection{Rewriting with traced comonoid structure}

It is straightforward to adapt the results for rewriting with partial monogamous
cospans to those for rewriting with partial \emph{left}-monogamous cospans.
In this case, the definition of traced boundary complement is too restrictive,
so must be weakened to permit more valid rewrites.

\begin{defi}[Traced left-boundary complement]
    \label{def:traced-left-boundary-complement}
    For partial left-monogamous cospans \(
    \cospan{i}[a_1]{L}[a_2]{j}
    \) and \(
    \cospan{n}[b_1]{G}[b_2]{m} \in \hypsigma
    \), a pushout complement as in \autoref{def:traced-boundary-complement}
    is called a \emph{traced left-boundary complement} if \(c_2\)
    is mono and \(
    \cospan{j+m}[[c_2,d_1]]{C}[[c_1,d_2]]{{i+n}}
    \) is a partial left-monogamous cospan.
\end{defi}

\begin{defi}[Traced comonoid DPO]
    For morphisms \(G \leftarrow m+n\) and \(H \leftarrow m+n\) in
    \(\hypsigma\), there is a traced comonoid rewrite \(G \trgrewrite[\mcr] H\)
    if there exists a rule \(
    \spann{L}{i+j}{G} \in \mcr
    \) and cospan \(
    \cospan{i+j}{C}{n+m} \in \hypsigma
    \) such that the diagram in \autoref{def:dpo} commutes and \(i+j \to C \to G\) is a
    traced left-boundary complement.
\end{defi}

\begin{exa}
    As with traced DPO, there may be multiple valid traced comonoid DPO rewrites
    for a given rule and instance in a larger graph.
    Consider the following rule and its interpretation.
    \begin{gather*}
        \rrule{
            \iltikzfig{graphs/dpo/non-unique-comonoid/rule-lhs}
        }{
            \iltikzfig{graphs/dpo/non-unique-comonoid/rule-rhs}
        }
        \qquad
        \raisebox{-3em}{\includestandalone{figures/graphs/dpo/non-unique-comonoid/rule}}
    \end{gather*}
    Two valid rewrites are as follows:
    \begin{center}
        \includestandalone[scale=0.8]{figures/graphs/dpo/non-unique-comonoid/rewrite-1}
        \quad
        \includestandalone[scale=0.8]{figures/graphs/dpo/non-unique-comonoid/rewrite-2}
    \end{center}
    The first rewrite is the `obvious' one, but the second also holds by
    cocommutativity:
    \begin{gather*}
        \iltikzfig{graphs/dpo/non-unique-comonoid/rewrite-1}
        =
        \iltikzfig{graphs/dpo/non-unique-comonoid/rewrite-2a}
        \qquad
        \iltikzfig{graphs/dpo/non-unique-comonoid/rewrite-1}
        =
        \iltikzfig{graphs/dpo/non-unique-comonoid/rewrite-2b}
        =
        \iltikzfig{graphs/dpo/non-unique-comonoid/rewrite-3b}
    \end{gather*}
\end{exa}

To show that traced comonoid rewriting is sound and complete with respect to
traced comonoid term rewriting we follow the same procedure as for the traced
setting.

\begin{lem}[Traced comonoid decomposition]\label{lem:traced-comonoid-decomposition}
    Given partial left-monogamous cospans \(
    \cospan{m}[d_1]{G}[d_2]{n}
    \) and \(
    \cospan{i}[a_1]{L}[a_2]{j}
    \), along with a morphism \(
    L \xrightarrow{f} G
    \) such that \(i+j \rightarrow L \rightarrow G\) satisfies the no-dangling
    and no-identification conditions, then there exists a partial
    left-monogamous cospan \(
    \cospan{j+m}[[c_2,d_1]]{C}[[c_1,d_2]]{i+n}
    \) such that \(
    \cospan{m}{G}{n}
    \) can be factored as
    \begin{gather*}
        \trace{i}{
            \begin{array}{cc}
                \cospan{i}[a_1]{L}[a_2]{j} \\
                \tensor                    \\
                \cospan{m}{m}{m}
            \end{array}
            \seq
            \cospan{j+m}[[c_2,d_1]]{C}[[c_1,d_2]]{i+n}
        }
    \end{gather*}
    where \(
    \cospan{j+m}[c_2,d_1]{C}[c_1,d_2]{i+n}
    \) is a traced left-boundary complement.
\end{lem}
\begin{proof}
    As \autoref{lem:traced-decomposition}, but with partial left-monogamous
    cospans.
\end{proof}

Traced comonoid decomposition is then used in exactly the same role as traced
decomposition in the previous section.

\begin{thm}\label{thm:traced-comonoid-rewriting}
    Let \(\mcr\) be a rewriting system on \(\stmcsigma + \ccomon\).
    Then, \[
        \iltikzfig{strings/category/f}[box=g,colour=white]
        \rewrite[\mcr]
        \iltikzfig{strings/category/f}[box=h,colour=white]
        \quad\text{if and only if}\quad
        \termandfrobtohypsigma[
            \foldinterfaces[
                \tracedandcomonoidtofrob[
                    \iltikzfig{strings/category/f}[box=g,colour=white]
                ]{\Sigma}
            ]
        ]
        \grewrite[
            \termandfrobtohypsigma[
                \foldinterfaces[
                    \tracedandcomonoidtofrob[\mcr]{\Sigma}
                ]
            ]
        ]
        \termandfrobtohypsigma[
            \foldinterfaces[
                \tracedandcomonoidtofrob[
                    \iltikzfig{strings/category/f}[box=h,colour=white]
                ]{\Sigma}
            ]
        ].
    \]
\end{thm}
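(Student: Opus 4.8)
The plan is to mirror the proof of \autoref{thm:traced-rewriting} almost line for line, substituting the left-monogamous machinery for its monogamous counterpart: partial left-monogamous cospans in place of partial monogamous ones, traced left-boundary complements (\autoref{def:traced-left-boundary-complement}) in place of traced boundary complements, the comonoid decomposition of \autoref{lem:traced-comonoid-decomposition} in place of \autoref{lem:traced-decomposition}, and the fullness of \autoref{thm:comonoid-fully-complete} in place of that of \autoref{thm:termtohyp-image}.

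For the $\onlyifdir$ direction I would start from $g \rewrite[\mcr] h$, which by \autoref{def:term-rewriting} supplies a rule $\rrule{l}{r} \in \mcr$ and a context $c$ with $g = \trace{i}{(l \tensor \id[m]) \seq c}$ and $h = \trace{i}{(r \tensor \id[m]) \seq c}$ up to the axioms of traced comonoid categories. Translating through $\tracedandcomonoidtofrob[-]{\Sigma}$, folding with $\foldinterfaces$, and applying $\termandfrobtohypsigma$, I would define $L$, $R$, $G$, $H$ and $C$ exactly as in (\ref{gath:l-cospan}--\ref{gath:h-cospan}). Functoriality then gives $\cospan{0}{G}{m+n} = \cospan{0}{L}{i+j} \seq \cospan{i+j}{C}{m+n}$, and since cospan composition is pushout this makes $\cospan{L}{G}{C}$ a pushout; the same argument applied to $r$ gives $\cospan{R}{H}{C}$, producing the DPO diagram. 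It then remains to check that $i+j \to C \to G$ is a traced left-boundary complement, i.e.\ that $\cospan{j+m}{C}{i+n}$ is partial left-monogamous with $c_2$ mono; this holds because $C$ lands in $\plmcspdhyp$, and is verified by inspecting the components.

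For the $\ifdir$ direction I would begin with a DPO diagram whose complement $i+j \to C \to G$ is a traced left-boundary complement. By \autoref{def:traced-left-boundary-complement} the cospan $\cospan{j+m}{C}{i+n}$ is partial left-monogamous, so the fullness half of \autoref{thm:comonoid-fully-complete} produces a term $c \in \stmcsigma + \ccomon$ with $\termandfrobtohypsigma[\tracedandcomonoidtofrob[c]{\Sigma}]$ isomorphic to it. Applying \autoref{lem:traced-comonoid-decomposition} to $\cospan{m}{G}{n}$ and the matching $L \to G$ then factors $G$ as the trace of $(l \tensor \id[m]) \seq c$; transporting back along functoriality yields $g = \trace{i}{(l \tensor \id[m]) \seq c}$ in $\stmcsigma + \ccomon$, so $g \rewrite[\mcr] h$ via $\rrule{l}{r}$ and $c$. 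The argument for $h$ is identical with $r$ in place of $l$.

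The hard part will be the component check in the $\onlyifdir$ direction, where the subtlety is that \autoref{def:traced-left-boundary-complement} demands only $c_2$ mono rather than both $c_1$ and $c_2$. I expect the work to be in confirming that this weakening is exactly what the comonoid structure produces: the traced $i$ inputs of $L$ may now be forked by comonoid generators, so the right leg $[c_1, d_2]$ of the complement need not be mono while the left leg $[c_2, d_1]$ must remain so. Making this precise means tracking how $\foldinterfaces$ and the image of $\comonoidtofrob$ act on the interface maps, which is the one place where the present proof genuinely departs from the purely monogamous argument of \autoref{thm:traced-rewriting}.
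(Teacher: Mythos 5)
Your proposal matches the paper's proof exactly: the paper proves this theorem simply by stating ``As \autoref{thm:traced-rewriting}, but with traced left-boundary complements,'' which is precisely the substitution of left-monogamous machinery (partial left-monogamous cospans, \autoref{def:traced-left-boundary-complement}, \autoref{lem:traced-comonoid-decomposition}, and the fullness of \autoref{thm:comonoid-fully-complete}) that you carry out. Your additional remarks on the asymmetry of requiring only \(c_2\) mono are a sensible elaboration of detail the paper leaves implicit, not a departure from its argument.
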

\begin{proof}
    As \autoref{thm:traced-rewriting}, but with traced left-boundary complements
    and traced comonoid decomposition
    (\autoref{lem:traced-comonoid-decomposition}).
\end{proof}

This means that not only can we perform rewriting on traced terms modulo some
equational theory using partial monogamous cospans and traced boundary
complements; we can do the same for traced comonoid terms using partial
left-monogamous cospans and traced left-boundary complements.

\section{Case studies}\label{sec:case-studies}

\subsection{Cartesian structure}

One important class of categories with a traced comonoid structure are
\emph{traced Cartesian}, or \emph{dataflow},
categories~\cite{cazanescu1990new,hasegawa1997recursion}.
These categories are interesting because any traced Cartesian category has a
fixpoint operator~\cite[Thm. 3.1]{hasegawa1997recursion}.

\begin{defi}[Cartesian category~\cite{fox1976coalgebras}]
    A monoidal category is \emph{Cartesian} if its tensor is given by the
    Cartesian product.
\end{defi}

As a result of this, the unit is a terminal object in any Cartesian category,
and any object has a comonoid structure.
Cartesian categories are settings in which morphisms can be \emph{copied} and
\emph{discarded}.
These two operations are more clearly illustrated when viewed through the lens
of a monoidal theory.

\begin{defi}
    For a given PROP \(\stmc{\Sigma_\mathbf{C}}\) with a comonoid
    structure, the traced SMT \((
    \generators[\mathbf{Cart}_\mathbf{C}],
    \equations[\mathbf{Cart}_\mathbf{C}]
    )\) is defined with \(
    \generators[\mathbf{Cart}_\mathbf{C}] := \Sigma_\mathbf{C}
    \) and \(
    \equations[\mathbf{Cart}_\mathbf{C}]
    \) as the equations in \autoref{fig:cartesian-equations}.
\end{defi}

The hypergraph interpretations of these rules for a generator \(e\)
are shown in \autoref{fig:cartesian-graphs}.

\begin{rem}
    The combination of Cartesian equations with the underlying compact closed
    structure of \(\cspdhyp\) may prompt alarm bells, as a compact closed
    category in which the tensor is the Cartesian product is trivial.
    However, it is important to note that \(\cspdhyp\) is \emph{not} subject to
    these equations: it is only a setting for performing graph
    rewrites.
\end{rem}

\begin{figure}
    \centering
    \begin{tabular}{cc}
        \iltikzfig{strings/structure/cartesian/naturality-copy-lhs}[box=f,colour=white,dom=m,cod=n]
        \(=\)
        \iltikzfig{strings/structure/cartesian/naturality-copy-rhs}[box=f,colour=white,dom=m,cod=n]
         &
        \iltikzfig{strings/structure/cartesian/naturality-discard-lhs}[box=f,colour=white,dom=m]
        \(=\)
        \iltikzfig{strings/structure/cartesian/naturality-discard-rhs}[dom=m]
        \\[2em]
    \end{tabular}
    \caption{
        Equations of the monoidal theory \(\mathbf{Cart}_\mathcal{C}\),
        for an arbitrary generator \(f\)
    }
    \label{fig:cartesian-equations}
\end{figure}

\begin{figure}
    \centering
    \includestandalone[scale=0.8]{figures/graphs/dpo/cartesian/copy/rule}
    \qquad
    \raisebox{1em}{\includestandalone[scale=0.8]{figures/graphs/dpo/cartesian/discard/rule}}
    \caption{
        Interpretations of equations in \(\mathbf{Cart}_\mathcal{C}\) for an
        arbitrary generator \(e\).
    }
    \label{fig:cartesian-graphs}
\end{figure}

Reasoning about fixpoints can be performed using the \emph{unfolding} rule,
which holds in any traced Cartesian category.

\begin{center}
    \iltikzfig{strings/traced/trace-rhs}[box=f,colour=white,dom=m,cod=n,trace=x]
    \(=\)
    \iltikzfig{circuits/examples/reasoning/unfolding/unfolding-1}[box=f,colour=white,dom=m,cod=n,trace=x]
    \(=\)
    \iltikzfig{circuits/examples/reasoning/unfolding/unfolding-2}[box=f,colour=white,dom=m,cod=n,trace=x]
    \(=\)
    \iltikzfig{circuits/examples/reasoning/unfolding/unfolding-3}[box=f,colour=white,dom=m,cod=n,trace=x]
\end{center}

In the syntactic setting, this requires the application of multiple
equations: the two counitality equations followed by the copy equation and
optionally some axioms of STMCs for housekeeping.
However, if we interpret this in the hypergraph setting, the comonoid equations
are absorbed into the notation so only one rewrite rule needs to be applied.

\begin{center}
    \includestandalone{figures/graphs/dpo/unfolding/rewrite-1}
\end{center}

The dual notion of traced \emph{cocartesian}
categories~\cite{bainbridge1976feedback} are also important in computer science:
a trace in a traced cocartesian category corresponds to \emph{iteration} in the
context of \emph{control flow}.
The details of this section could also be applied to the cocartesian case by
flipping all the directions and working with partial \emph{right}-monogamous
cospans.

However, attempting to combine the product and coproduct approaches for settings
with a \emph{biproduct} would simply yield the category \(\cspdhyp\), a
hypergraph category (\autoref{prop:frobenius-map}) subject to the Frobenius
equations in \autoref{fig:frobenius-equations}.
A category with biproducts is not necessarily subject to such equations, so this
would not be a suitable approach.

\subsection{Digital circuits}
\label{sec:digital-circuits}

As mentioned above, traced Cartesian categories are useful for reasoning in
settings with fixpoint operators.
One such setting is that of \emph{sequential digital circuits} built from
primitive logic gates: in \cite{ghica2024fully}, such circuits are
modelled as morphisms in a STMC.
Here, the trace models a feedback loop, and the comonoid structure represents
forking wires.
We are interested in using graph rewriting to implement an
\emph{operational semantics} for sequential circuits.

\begin{defi}
    Let \(\values\) be the set \(\{\bot,\belnaptrue,\belnapfalse,\top\}\) with
    a lattice structure defined by \(\belnaptrue \ljoin \belnapfalse = \top\)
    and \(\belnaptrue \lmeet \belnapfalse = \bot\).
\end{defi}

The elements of \(\values\) are \emph{values} that flow through wires in a
circuit.
The \(\belnaptrue\) and \(\belnapfalse\) values are the traditional true and
false values, the \(\bot\) value represents \emph{no} information
(a \emph{disconnected wire}) and the \(\top\) value represents \emph{both} true
and false at once (a \emph{short circuit}).

\begin{nota}
    For \(m \in \nat\), we write elements of \(\valuetuple{m}\)
    with an overline, e.g.\ \(
    \listvar{v} \in \valuetuple{3} := \belnaptrue\belnapfalse\belnaptrue
    \).
\end{nota}

Values are one part of the signature for sequential circuits.

\begin{defi}[Gate-level signature]
    Let the set \(\generators[\ccirc{}]\) of
    \emph{combinational gate-level circuit generators} be defined as \(\{
    \iltikzfig{circuits/components/gates/and},
    \iltikzfig{circuits/components/gates/or},
    \iltikzfig{circuits/components/gates/not},
    \iltikzfig{strings/structure/monoid/init}[colour=comb]
    \iltikzfig{strings/structure/comonoid/copy}[colour=comb],
    \iltikzfig{strings/structure/monoid/merge}[colour=comb],
    \iltikzfig{strings/structure/comonoid/discard}[colour=comb],
    \}\) and the set \(\generators[\scirc{}]\) of
    \emph{sequential gate-level circuit generators} be defined as \(\{
    \iltikzfig{circuits/components/values/vs}[val=\belnaptrue],
    \iltikzfig{circuits/components/values/vs}[val=\belnapfalse],
    \iltikzfig{circuits/components/values/vs}[val=\top],
    \iltikzfig{circuits/components/waveforms/delay}
    \}\).
\end{defi}

The \emph{combinational} generators are components that model functions; they
are respectively \(\andgate\), \(\orgate\) and \(\notgate\) gates along with
\emph{structural} constructs for introducing, forking, joining and eliminating
wires.
Combinational circuits are drawn with light blue backgrounds \(
\iltikzfig{strings/category/f}[box=f,colour=comb,dom=m,cod=n]
\); these circuits \emph{always} produce the same outputs given the same
inputs.

\begin{nota}
    The structural generators are defined for wires of width \(1\), but
    versions for arbitrary widths can be easily derived by axioms of STMCs.
    In diagrams, these are drawn the same as their single-bit
    counterparts: \(
    \iltikzfig{strings/structure/monoid/init}[colour=comb,obj=n]
    \), \(
    \iltikzfig{strings/structure/comonoid/copy}[colour=comb,obj=n]
    \), \(
    \iltikzfig{strings/structure/monoid/merge}[colour=comb,obj=n]
    \) and \(
    \iltikzfig{strings/structure/comonoid/discard}[colour=comb,obj=n]
    \).
\end{nota}

\emph{Sequential} generators are components that model state:
\emph{instantaneous values} and a delay of one unit of time.
Sequential circuits are drawn with green backgrounds \(
\iltikzfig{strings/category/f}[box=f,colour=seq,dom=m,cod=n]
\); these are circuits where the outputs may differ depending on past states.

The intended interpretation of the value generators is that they
produce the relevant value on the first cycle of execution, followed by the
disconnected \(\bot\) value after that.
Subsequently, there is no sequential \(\bot\) value generator; it is instead
modelled by the combinational \(
\iltikzfig{strings/structure/monoid/init}[colour=comb,obj=n]
\), as it will \emph{always} emit \(\bot\).

The delay component is the opposite: initially it outputs \(\bot\) but on
subsequent cycles it will output the input to the previous cycle.
What a `cycle' is can differ depending on application; the most obvious
interpretation is a D flipflop in a clocked circuit, but it could also model
the inertial delay on wires.

\begin{nota}
    We write \(
    \iltikzfig{circuits/components/values/vs}[val=v]
    \) for an arbitrary value \(v \in \values\); note that this could also
    include \(\bot\).
    For \(\listvar{v} \in \valuetuple{m}\) we collapse multiple
    value and delay generators into one as \(
    \iltikzfig{circuits/components/values/vs}[width=m]
    \) and \(
    \iltikzfig{circuits/components/waveforms/delay}[width=m]
    \), and write \(
    \iltikzfig{circuits/components/waveforms/register-shorthand}[width=m]
    :=
    \iltikzfig{circuits/components/waveforms/register}[width=m]
    \) for a \emph{register}.
\end{nota}

\begin{exa}[SR Latch~\cite{ghica2024fully}]
    An example of a sequential circuit component is an
    \emph{SR NOR latch}, illustrated in \autoref{fig:latch}; a \(\norgate\) gate
    is constructed as \(
    \iltikzfig{circuits/components/gates/nor}
    :=
    \iltikzfig{circuits/components/gates/nor-construction}
    \).
    SR latches are used to hold state: when the \(\mathsf{S}\) input (`set') is
    pulsed true then the \(\mathsf{Q}\) output is held at true until the
    \(\mathsf{R}\) input (`reset') is true.
    The state is held because there are delays in the gates and the wires; one
    of the feedback loops between the two \(\norgate\) gates will `win'.
    In \(\scircsigma\) this is modelled by using a different number of delay
    generators on the wires between the top and the bottom of the latch, as
    shown in \autoref{fig:latch}.
    The interpretation of this implementation as a cospan of hypergraphs is
    also depicted in \autoref{fig:latch}, where \(\lor\) is the interpretation
    of the \(\orgate\) gate, \(\neg\) is the interpretation of the \(\notgate\)
    gate, and \(\delta\) is the interpretation of the delay.
\end{exa}

\begin{figure}
    \centering
    \iltikzfig{circuits/examples/sr-latch/real-circuit}
    \qquad
    \iltikzfig{circuits/examples/sr-latch/circuit}
    \\[1em]
    \iltikzfig{graphs/examples/sr-latch}
    \caption{
        An SR NOR latch, a possible construction of said latch in
        \(\scircsigma\), and the hypergraph interpretation of this construction
    }
    \label{fig:latch}
\end{figure}

An operational semantics for sequential circuits is defined in terms of
equations showing how a circuit transforms an input value into an output value.
On sticking point is \emph{non-delay-guarded feedback}; this is tackled by
applying the Kleene fixpoint theorem and \emph{iterating}
a circuit multiple times until a fixpoint is reached.

\begin{defi}
    Let the set \(
    \equations[\scirc{}]
    \) of \emph{gate-level circuit equations} be those listed
    in~\autoref{fig:circuit-equations}, where \(
    \iltikzfig{circuits/components/gates/gate}
    \) is one of the logic gates and \(
    \gateinterpretation
    \) maps them to the corresponding truth table.
\end{defi}

\begin{figure*}
    \centering
    \(
    \equationdisplay{
        \iltikzfig{circuits/productivity/trace-delay}[core=f,state=\listvar{v},dom=m,cod=n,delay=y,trace=x,valwidth=z]
    }{
        \iltikzfig{circuits/productivity/mealy-rule}[trace=x,delays=y,values=z]
    }{
        \mealyeqn
    }
    \)
    \\[0.5em]
    \(
    \equationdisplay{
        \iltikzfig{circuits/instant-feedback/equation-lhs}[dom=m,cod=n,trace=x]
    }{
        \iltikzfig{circuits/instant-feedback/concrete-unfolding}[box=f,dom=m,cod=n]
    }{
        \instantfeedbackeqn
    }
    \)
    \\[0.5em]
    \(
    \equationdisplay{
        \iltikzfig{circuits/axioms/generalised-streaming-lhs}[box=f]
    }{
        \iltikzfig{circuits/axioms/generalised-streaming-rhs}[box=f]
    }{
        \streamingeqn
    }
    \)
    \quad
    \(
    \equationdisplay{
        \iltikzfig{circuits/axioms/gate-lhs}[gate=p,input=\listvar{v}]
    }{
        \iltikzfig{circuits/axioms/gate-rhs}[gate=p,input=\listvar{v}]
    }{
        \mathsf{Prim}
    }
    \)
    \\[0.5em]
    \(
    \equationdisplay{
        \iltikzfig{circuits/axioms/fork-lhs}[val=v]
    }{
        \iltikzfig{circuits/axioms/fork-rhs}[val=v]
    }{
        \forkeqn
    }
    \)
    \quad
    \(
    \equationdisplay{
        \iltikzfig{circuits/axioms/join-lhs}[val1=v,val2=w]
    }{
        \iltikzfig{circuits/axioms/join-rhs}[val1=v,val2=w]
    }{
        \joineqn
    }
    \)
    \quad
    \(
    \equationdisplay{
        \iltikzfig{circuits/axioms/stub-lhs}[val=v]
    }{
        \iltikzfig{strings/monoidal/empty}
    }{
        \stubeqn
    }
    \)
    \caption{
        Equations for reducing sequential circuits
    }
    \label{fig:circuit-equations}
\end{figure*}

The aim of the operational semantics is to transform a circuit of the form \(
\iltikzfig{circuits/productivity/productive-goal-lhs}[box=f,input=\listvar{v},dom=m,cod=n]
\) into one of the form \(
\iltikzfig{circuits/productivity/productive-goal-rhs}[box=G,output=\listvar{w},dom=m,cod=n]
\); this shows how a circuit processes inputs, producing a new
internal state and outputs.

First the circuit \(
\iltikzfig{strings/category/f}[box=f,colour=seq,dom=m,cod=n]
\) must be assembled into a special form known as `Mealy form' by
applying the \(\mealyeqn\) rule followed by the \(\instantfeedbackeqn\) rule.
Rewriting modulo comonoid structure is useful here since this procedure can
create multiple forks and stubs, which can clutter up a term; here the
connectivity is much clearer.

Once the circuit is in this form, it has a `combinational core' traced by
delay-guarded feedback.
Values can now be applied to it and the \(\streamingeqn\) rule used to copy
the core into a `now' copy and a `later' copy.
The `now' copy computes the outputs and the transition for the current cycle of
execution.
To determine the exact values, the \(\forkeqn\), \(\joineqn\), \(\stubeqn\)
and \(\mathsf{Prim}\) equations can be applied to reduce the core
to values.

\begin{exa}
    The rewriting procedure discussed above is applied to the SR NOR latch in
    Figures \ref{fig:latch-mealy}-\ref{fig:latch-reduce}.
\end{exa}

\begin{figure}
    \centering
    \iltikzfig{graphs/examples/sr-latch-pre-mealy}
    \caption{
        The result of rewriting the SR NOR latch using the \(\mealyeqn\) rule
    }
    \label{fig:latch-mealy}
\end{figure}

\begin{figure}
    \centering
    \iltikzfig{graphs/examples/sr-latch-unrolled}
    \caption{
        The result of rewriting \autoref{fig:latch-mealy} using the
        \(\instantfeedbackeqn\) rule
    }
    \label{fig:latch-unroll}
\end{figure}

\begin{figure}
    \centering
    \iltikzfig{graphs/examples/sr-latch-applied}
    \caption{
        Applying \autoref{fig:latch-unroll} to the input values
        \(\belnapfalse\belnaptrue\)
    }
    \label{fig:latch-applied}
\end{figure}

\begin{figure}
    \centering
    \iltikzfig{graphs/examples/sr-latch-streamed}
    \caption{
        Applying \(\streamingeqn\) to \autoref{fig:latch-applied}
    }
    \label{fig:latch-streamed}
\end{figure}

\begin{figure}
    \centering
    \scalebox{0.9}{\iltikzfig{graphs/examples/sr-latch-streamed-valprop}}
    \caption{Applying \(\forkeqn\) to \autoref{fig:latch-streamed}}
    \label{fig:latch-forked}
\end{figure}

\begin{figure}
    \scalebox{0.9}{\iltikzfig{graphs/examples/sr-latch-values}}
    \caption{
        Applying \(\forkeqn\), \(\gateeqn\) and \(\stubeqn\)
        to \autoref{fig:latch-forked}
    }
    \label{fig:latch-reduce}
\end{figure}

\section{Conclusion, related and further work}

We have shown how previous work on rewriting string diagrams modulo
Frobenius~\cite{bonchi2022string} and symmetric
monoidal~\cite{bonchi2022stringa} structure using hypergraphs can also be
adapted for rewriting modulo traced comonoid structure using a setting between
the two.

Graphical languages for traced categories have seen many applications, such as
to illustrate cyclic lambda calculi~\cite{hasegawa1997recursion}, or to reason
graphically about programs~\cite{schweimeier1999categorical}.
The presentation of traced categories as \emph{string diagrams} has existed
since the 90s~\cite{joyal1991geometry,joyal1996traced}; a soundness and
completeness theorem for traced string diagrams, folklore for many years
but only proven for certain signatures~\cite{selinger2011survey}, was finally
shown in~\cite{kissinger2014abstract}.
Combinatorial languages predate even this, having existed since at least the 80s
in the guise of
\emph{flowchart schemes}~\cite{stefanescu1990feedback,cazanescu1990new,cazanescu1994feedback}.
These diagrams have also been used to show the completeness of finite dimensional
vector spaces~\cite{hasegawa2008finite} with respect to traced categories and,
when equipped with a dagger, Hilbert spaces~\cite{selinger2012finite}.

We are not just concerned with diagrammatic languages as a standalone concept:
we are interested in performing \emph{graph rewriting} with them to reason about
monoidal theories.
This has been been studied in the context of traced categories before using
\emph{string graphs}~\cite{kissinger2012pictures,dixon2013opengraphs}.
We have instead opted to use the \emph{hypergraph} framework
of~\cite{bonchi2022string,bonchi2022stringa,bonchi2022stringb} instead, as it
allows rewriting modulo \emph{yanking}, is more extensible for rewriting modulo
comonoid structure, and one does not need to awkwardly reason modulo wire
homeomorphisms.

As mentioned during the case studies, there are still elements of the rewriting
framework that are somewhat informal.
One such issue involves defining rewrite spans for arbitrary subgraphs: this is
hard to do at a general level because the edges must be concretely specified in
DPO rewriting.
However, if we performed rewriting with
\emph{hierarchical hypergraphs}~\cite{alvarez-picallo2023functorial}, in
which edges can have hypergraphs as labels, we could `compress' the subgraph
into a single edge that can be rewritten: this is future work.

In regular PROP notation, wires are annotated with numbers in order to avoid
drawing multiple wires in parallel: when interpreted as hypergraphs a node is
created for each wire, and simple diagrams can quickly get very large.
The results of \cite{bonchi2022stringa} also extend to the multi-sorted case, in
which nodes are labelled in addition to wires.
We could use this in combination with the \emph{strictifiers}
of~\cite{wilson2023string}: these are additional generators for transforming
buses of wires into thinner or thicker ones.
This could drastically reduce the number of elements in a hypergraph, which is
ideal from a computational point of view.
Work has already begun on implementing the rewriting system for digital circuits
using these techniques.

\paragraph{Acknowledgements}

Thanks to Chris Barrett for comments on earlier versions of this paper,
and to the anonymous reviewers for their helpful insights.

\bibliographystyle{alphaurl}
\bibliography{refs/refs}
\end{document}